\newtheorem{theorem}{Theorem}
\newtheorem{definition}{Definition}
\def\BibTeX{{\rm B\kern-.05em{\sc i\kern-.025em b}\kern-.08em
    T\kern-.1667em\lower.7ex\hbox{E}\kern-.125emX}}
\begin{document}
\title{Differential Aggregation against General Colluding Attackers}
\author{
\IEEEauthorblockN{Rong Du\IEEEauthorrefmark{1}, Qingqing Ye\IEEEauthorrefmark{1}, Yue Fu\IEEEauthorrefmark{1}, Haibo Hu\IEEEauthorrefmark{1}, Jin  Li\IEEEauthorrefmark{3}, Chengfang Fang\IEEEauthorrefmark{2}, Jie Shi\IEEEauthorrefmark{2}}

\IEEEauthorblockA{\IEEEauthorrefmark{1}The Hong Kong Polytechnic University,\IEEEauthorrefmark{3}Guangzhou University, \IEEEauthorrefmark{2}Huawei International}
Email: \{roong.du, yuesandy.fu\}@connect.polyu.hk, \{haibo.hu, qqing.ye\}@polyu.edu.hk \\ lijin@gzhu.edu.cn, \{fang.chengfang, shi.jie1\}@huawei.com}

\maketitle

\thispagestyle{fancy}
\fancyhead[C]{\textcolor{red}{This paper has been accepted by IEEE 39th Annual International Conference on Data Engineering (ICDE2023)}}

\begin{abstract}Local Differential Privacy (LDP) is now widely adopted in large-scale systems to collect and analyze sensitive data while preserving users' privacy. However, almost all LDP protocols rely on a semi-trust model where users are curious-but-honest, which rarely holds in real-world scenarios. Recent works ~\cite{wu2021poisoning, cheu2021manipulation, cao2021data} show poor estimation accuracy of many LDP protocols under malicious threat models. Although a few works have proposed some countermeasures to address these attacks, they all require prior knowledge of either the attacking pattern or the poison value distribution, which is impractical as they can be easily evaded by the attackers.

In this paper, we adopt a general opportunistic-and-colluding threat model and propose a multi-group Differential Aggregation Protocol (DAP) to improve the accuracy of mean estimation under LDP. Different from all existing works that detect poison values on individual basis, DAP mitigates the overall impact of poison values on the estimated mean. It relies on a new probing mechanism EMF (i.e., Expectation-Maximization Filter) to estimate features of the attackers. In addition to EMF, DAP also consists of two EMF post-processing procedures (EMF* and CEMF*), and a group-wise mean aggregation scheme to optimize the final estimated mean to achieve the smallest variance. Extensive experimental results on both synthetic and real-world datasets demonstrate the superior performance of DAP over state-of-the-art solutions.
\end{abstract}

\section{Introduction}
\label{introduction}
In the era of big data, privacy-preserving technologies are essential to collect sensitive data and analyse their statistical features (e.g., frequency and mean) while preserving individual's data privacy. Among them, local differential privacy (LDP) is widely deployed for distributed data collection in large-scale systems (e.g., Chrome and iOS). In an LDP protocol, users only provide perturbed data to collectors, who then estimate some statistics, e.g., mean and frequency estimation, the two most fundamental ones ~\cite{acharya2019hadamard,bassily2017practical,ding2017collecting,  wang2017locally,li2020estimating,duchi2018minimax,wang2019collecting,2019PrivKV,nguyen2016collecting}, from these data. A fundamental assumption of LDP, however, is that users are semi-honest, that is, they honestly perturb and send data to the collector according to the protocol. Unfortunately, this assumption rarely holds in real-world scenarios --- any large-scale data collection system cannot rule out the existence of Byzantine users~\cite{awerbuch2004mitigating,rawat2010collaborative,prakash2020mitigating,rawat2010countering}, who are malicious users that collude among themselves to send fake values and influence the estimated statistics in their favor. In particular, the estimated mean has become a popular target in such attacks. For instance, Byzantine users have engaged in product rating fraud for e-commerce sellers to boost their sales~\cite{luca2016fake,street2006promotional,mayzlin2014promotional,hu2011fraud}.
The New York Times reported businesses hiring workers on Mechanical Turk, an Amazon-owned crowdsourcing marketplace, to post fake 5-star Yelp reviews on their businesses~\cite{luca2016fake}.

A recent work shows how poor the performance of an LDP protocol can be under a malicious model~\cite{cheu2021manipulation}. Although a few works have proposed some countermeasures to address these Byzantine attacks in LDP~\cite{cao2021data, wu2021poisoning}, they all require prior knowledge of either the attacking pattern or the poison value distribution, which is impractical as they can be easily evaded by the attackers.

In this paper, we study mean estimation under LDP model against a {\bf general malicious threat model where attackers are opportunistic and colluding}. ``Opportunistic" means such attackers, whose objective is to make the estimated statistical mean deviate as much as possible from the true mean, can manipulate their poison values in their best interests. ``Colluding", also known as Sybil attacks, means the attackers can share their strategy and orchestrate their poison values. This is practical as these attackers can arise from a single Botnet launched by a single attacker. This threat model is more generic than any existing threat model in that it does not limit the attacking strategy, nor does it assume the collector know about the attacking strategy or probabilistic distribution of poison values.

There are three major challenges to address. First, it is difficult to distinguish Byzantine users from normal users, because both poison values and perturbed values appear as noisy data to the collector. Second, in some LDP protocols the perturbed value domain is also enlarged from the original one, which helps Byzantine users to introduce even larger errors. Third, as the LDP perturbation gets stronger (i.e., $\epsilon$ becomes smaller), the value domain can get even larger. The latter two challenges imply that a single Byzantine user can have significant impact on the mean estimation error, also known as the long-tail attack~\cite{hu2017improving,shan2017tail}. In robust statistics~\cite{dixon1974trimming, cerioli2019wild, olive2008applied}, trimming has been applied to address long-tail attacks by removing the tail values, i.e., those before and after a certain percentile in the data distribution. While trimming is simple and effective in many application scenarios, it has several severe limitations in the context of LDP:

\begin{itemize}
\setlength{\itemsep}{0pt}
\setlength{\parsep}{0pt}
\setlength{\parskip}{0pt}
\item It can only be applied to known long-tail distributions. In our problem, however, the distribution, including those from Byzantine users, is unknown to the data collector.
\item The trimming threshold is essential to the effectiveness, but it is hard to determine. Further, if leaked to Byzantine users, this threshold can become a single point of vulnerability, as they can circumvent the trimming by manipulating their values close to but within this threshold.
\item The trimming also removes perturbed values at the tail from normal users, causing non-negligible bias on the estimated mean.

\end{itemize}

For the first time in the literature, our core idea is to {\bf avoid distinguishing a poison value from a normal one}, as trimming or most existing works do. Instead, for mean estimation we just need to {\bf estimate and filter out the collective impact of poison values}. This is through an Expectation-Maximum Filter (EMF) mechanism, which statistically estimates three {\bf Byzantine features}, namely, the Byzantine user population, head/tail (i.e., attack direction), and poison value distribution of Byzantine users. Since EMF only performs well for a small privacy budget, we design a protocol where each user perturbs her numerical value twice with $\epsilon_\alpha$ and $\epsilon_\beta$ budgets respectively (where $\epsilon_\alpha+\epsilon_\beta=\epsilon$, the total privacy budget). The data collector first uses EMF to estimate these Byzantine features from all users with a small $\epsilon_\alpha$, which are then used to estimate and filter out the collective impact of poison values on perturbed values with $\epsilon_\beta$. Compared to trimming, this approach does not prune any value on an individual basis, so it does not rely on a hard setting of trimming threshold.

The above protocol has one security flaw. Since the protocol must be publicly known, the attackers can behave like normal users upon $\epsilon_\alpha$ to avoid exposing the true Byzantine features, while sending poison values upon $\epsilon_\beta$. To address this issue, the final protocol, namely Differential Aggregation Protocol, assigns a {\bf single but random} $\epsilon$ on each user. The idea is to divide users into multiple groups, and each group is assigned a different $\epsilon$. Users report perturbed values according to the $\epsilon$ assigned to their group.\footnote{To guarantee all users have the same privacy budget, those assigned with smaller $\epsilon$ perturb and report
multiple times until the overall privacy budget is depleted. Please refer to Section~\ref{DAP} for details.} The collector then estimates both Byzantine features and the mean in each group, and finally aggregates these estimated means to derive the optimal mean with the smallest variance.

The contributions of this paper are summarized as follows.
\begin{itemize}
\setlength{\itemsep}{0pt}
\setlength{\parsep}{0pt}
\setlength{\parskip}{0pt}
\item We formulate a general threat model for poisoning attacks in LDP. As far as we know, this is the first model in LDP literature that does not impose specific attacking pattern or strategy on the attackers.
\item We present an information probing mechanism EMF, which can estimate statistical information about the Byzantine users, including population, attacking direction and poison value distribution.
\item Based on EMF, we design a multi-group differential aggregation protocol for mean estimation. This protocol consists of EMF, two post-processing schemes of EMF (EMF* and CEMF*), and a group-wise mean aggregation scheme to optimize the final estimated mean with the smallest variance.
\end{itemize}

The rest of the paper is organized as follows. Section \ref{preliminaries} provides preliminaries of LDP, mean estimation under LDP and EM algorithm. Section \ref{problemdefinition} presents the problem definition and formally defines the threat model. Section \ref{EMF} presents the baseline protocol that adopts EMF to estimate the features of Byzantine users, and Section \ref{DAP} introduces the Differential Aggregation Protocol as a more secure and effective protocol. Section \ref{Exp} shows our experimental results and Section \ref{Relatedwork} reviews the related work. Finally, Section \ref{conclusion} concludes this paper and discusses some future work. 
\section{Preliminaries}
\label{preliminaries}
\subsection{Local Differential Privacy}
LDP~\cite{chen2016private, duchi2013local, kasiviswanathan2011can}, a variant of DP, is a state-of-the-art privacy protection technology~\cite{dwork2008differential, dwork2006calibrating, mcsherry2007mechanism}. Different from centralized DP, the data collector is untrusted in an LDP protocol, so she should not know the true information of any individual users. As such, instead of sending original data, users perturb their data using a randomized perturbation mechanism, from which the data collector can estimate certain statistics, e.g., mean and frequency. In essence, LDP ensures that, upon receiving an output $y$, the data collector cannot distinguish with high confidence whether the input is $x$ or $x'$, any two input values. A formal definition of LDP is as follows:

\begin{definition}
A local algorithm $R$ satisfies $\epsilon$-LDP if for any two inputs $x$ and $x'$ and for any output $y$,
\begin{equation*}
\setlength{\abovedisplayskip}{2pt}
\setlength{\belowdisplayskip}{2pt}
\label{conditionaleq1}
 e^{-\epsilon}\leq\frac{Pr(R(x)=y)}{Pr(R(x')=y)}\leq e^\epsilon
\end{equation*}
always holds.
\end{definition}

The level of privacy protection is determined by $\epsilon$. The smaller $\epsilon$ is, the stronger the privacy is protected.

\subsection{Piecewise Mechanism}LDP has been widely adopted to estimate statistics from a large population of users. In this paper, we mainly focus on state-of-the-art \textbf{Piecewise Mechanism (PM)~\cite{wang2019collecting}} on mean estimation of numerical values. As shown in Algorithm \ref{PM}, given input value $v$ $\in$ $[-1,1]$, the probability density function (PDF) of output $v'$ $\in$ $[-C,C]$ has two parts: domain $[l(v),r(v)]$ and domain $[-C,l(v))\cup(r(v),C]$, where $C=\frac{e^{\epsilon/2}+1}{e^{\epsilon/2}-1}$, $l(v)=\frac{C+1}{2}v-\frac{C-1}{2}$ and $r(v)=l(v)+C-1$. Given input $v$, the perturbed value is in range $[l(v),r(v)]$ with high probability and in range $[-C,l(v))\cup(r(v),C]$ with low probability. Because value $v'$ is an unbiased estimator of input value $v$, the data collector can use the mean of collected values as an unbiased estimator of the mean of input values.


\begin{algorithm}[]
\caption{Piecewise Mechanism}
 \begin{flushleft}
\nonumber {\bf Input:}
Original value $v$ and privacy budget $\epsilon$\\
\nonumber {\bf Output:}
Perturbed value $v'$
\end{flushleft}
\begin{algorithmic}[1]\small
\State
Sample $x$ uniformly at random from [0,1];
\If { $x<\frac{e^{\epsilon/2}}{e^{\epsilon/2}+1}$}\State
Sample $v'$ uniformly at random from $[l(v), r(v)]$
\Else\State
 Sample $v'$ uniformly at random from $[-C,l(v)) \cup (r(v),C]$
\EndIf
\State \Return $v'$
\end{algorithmic}
\label{PM}
\end{algorithm}

\subsection{Expectation Maximization}
Given a set of observed values $X$ in a statistical model, a straightforward approach to estimate an unknown parameters $\theta$ of it is to find the maximum likelihood estimation (MLE). Generally, we can get $\theta$ by setting all first-order partial derivatives of the likelihood function $l$ to zero and solve them. However, it is impossible to attain $\theta$ in this way where latent variables $Z$ exist --- the result will be a set of interlocking equations where the solution of $\theta$ needs the values of $Z$ and vice versa. When one set of equations is substituted for the other, the result is an unsolvable equation. The expectation-maximization (EM) algorithm~\cite{arthur1977maximum} can effectively find the MLE by performing expectation (E) steps and maximization (M) steps iteratively when there are latent variables.

\textbf{E step} produces a function $Q(\theta|\theta^t)$ that evaluates the log-likelihood expectation of $\theta$ given the current estimated parameters $\theta^t$:
{\setlength{\abovedisplayskip}{2pt}
\setlength{\belowdisplayskip}{2pt}
\begin{equation*}
Q(\theta|\theta^t)=\mathbb{E}_{Z|X,\theta^t}[logl(\theta;X,Z)].
\end{equation*}}
\textbf{M step} calculates parameters that maximize the expected log-likelihood obtained in the E step:
{\setlength{\abovedisplayskip}{2pt}
\setlength{\belowdisplayskip}{0pt}
\begin{equation*}
\theta^{t+1}=\mathop{\arg\max}\limits_{\theta}Q(\theta|\theta^t).
\end{equation*}}

\section{Problem Definition and Framework Overview}
\label{problemdefinition}

An essential assumption of most existing LDP works is that users will report their values honestly, which is impractical in real-world applications. Some recent studies show that LDP protocols are vulnerable to Byzantine attacks~\cite{cheu2021manipulation, cao2021data} and the situation becomes even worse when the perturbation is more substantial, that is, with a smaller privacy budget $\epsilon$.\footnote{$\epsilon$ is usually no more than 5.0 in existing LDP schemes, and no more than 3.0 in these attacks.} In this section, we first define our threat model, and then present a framework for LDP mean estimation under this model.

\subsection{Threat Model}
In this paper, we assume an {\bf unknown} number\footnote{To achieve Byzantine fault tolerance (BFT), the proportion of Byzantine users is bounded by 1/2. Otherwise, there is no guarantee on the convergence of the optimal estimated mean to the true mean.} of {\bf colluding} Byzantine users know the LDP perturbation mechanism and the privacy budget $\epsilon$, so they can send arbitrary values in the perturbation output domain $[D_L, D_R]$ to the data collector to deviate from the true mean. We formalize this attack as the threat model below.

\begin{definition}
\textbf{General Byzantine Attack (GBA).} Given a normalized perturbation value domain $[D_L, D_R]$ and $m$ colluding Byzantine users $U_B$ with original values $V_B=\{v_1, ..., v_m\}$, a general Byzantine attack from $U_B$, denoted by $GBA(U_B)$, reports arbitrary poison values $V'_B=GBA(V_B, D_L, D_R)$ to the collector, where $V'_B\in [D_L, D_R]^m$.
\end{definition}

This $GBA$ model also covers those cases when Byzantine users further perturb poison values with the same LDP protocol as normal users, because the perturbed poison values still fall in $[D_L, D_R]$. Two GBAs can be equivalent in terms of the degree they influence the \textbf{true mean $O$}, which is formalized as below.

\begin{definition}
\textbf{Equivalent GBAs.} Let $GBA(U_{B1})$ and $GBA(U_{B2})$ be two General Byzantine Attacks defined on the same perturbation value domain $[D_L, D_R]$ from two sets of Byzantine users $U_{B1}$ and $U_{B2}$. We say they are equivalent, or alternatively, $GBA(U_{B1})$ can be \textbf{reduced to} $GBA(U_{B2})$, if and only if the following equation on their reporting poison values $V'_{B1}$ and $V'_{B2}$ holds:
{\setlength{\abovedisplayskip}{2pt}
\setlength{\belowdisplayskip}{2pt}
\begin{equation*}
\sum\limits_{v'_{B1}\in V'_{B1}}{(v'_{B1}-O)}=\sum_{v'_{B2}\in V'_{B2}}{(v'_{B2}-O)}.
\end{equation*}}
\end{definition}

Among all GBAs, we are particularly interested in those whose poison values are coordinated to bias towards one side, a.k.a., the {\bf poisoned side}. We call them {\bf Biased Byzantine Attacks}, which are formally defined below.

\begin{definition}\textbf{Biased Byzantine Attack (BBA).}
Given a normalized perturbation value domain $[D_L, D_R]$ and $m$ colluding Byzantine users $U_B$ with original values $V_B=\{v_1, ..., v_m\}$, a biased Byzantine attack from $U_B$, denoted by $BBA(U_B)$, reports poison values $V'_B=BBA(V_B, D_L, D_R)$, where either $V'_B\in[D_L,O]^m$ or $V'_B\in[O, D_R]^m$.

\label{definition of BBA}
\end{definition}

Note that while $GBA$ is our threat model, $BBA$ is not. But the latter is a handy logical model in the estimation of Byzantine user features, which will be elaborated in the next section. 
On the other hand, since mean only involves additive operations, Theorem \ref{theo of CBBA} below shows that in mean estimation, any $GBA$ can be \textbf{reduced to} a $BBA$:

\begin{theorem}For any $GBA(U_{B1})$ reporting poison values $V'_{B1}$, there exists a $BBA(U_{B2})$ reporting poison values $V'_{B2}$, such that
{\setlength{\abovedisplayskip}{1pt}
\setlength{\belowdisplayskip}{1pt}
\begin{equation*}
\sum\limits_{v'_{B1}\in V'_{B1}}{(v'_{B1}-O)}=\sum\limits_{v'_{B2}\in V'_{B2}}{(v'_{B2}-O)}.
\end{equation*}}
 \label{theo of CBBA}
\end{theorem}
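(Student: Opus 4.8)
The plan is to reduce any two-sided $GBA$ to a one-sided $BBA$ by collapsing all of its poison values onto their common mean, which automatically lands on the correct side of the true mean $O$. Concretely, given $GBA(U_{B1})$ with $m$ Byzantine users reporting $V'_{B1}=\{v'_1,\ldots,v'_m\}$, I would first define the aggregate deviation $S=\sum_{i=1}^{m}(v'_i-O)$ together with the empirical mean $\mu=\frac{1}{m}\sum_{i=1}^{m}v'_i=O+S/m$. The whole argument then hinges on the observation that mean estimation only cares about this additive quantity $S$, so we are free to redistribute the poison mass however we like as long as $S$ is preserved.

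Next I would construct $BBA(U_{B2})$ by taking $U_{B2}=U_{B1}$ (so the user count stays $m$ and only the reported values change) and letting every Byzantine user report the single value $\mu$, i.e. $V'_{B2}=\{\mu,\ldots,\mu\}$. The key verification is that this is a legal $BBA$ in the sense of Definition~\ref{definition of BBA}. Since each $v'_i\in[D_L,D_R]$ and $\mu$ is a convex combination of these values, we have $\mu\in[D_L,D_R]$; and the side is dictated purely by the sign of $S$: if $S\geq 0$ then $\mu\geq O$, so $V'_{B2}\in[O,D_R]^m$, whereas if $S<0$ then $\mu\leq O$, so $V'_{B2}\in[D_L,O]^m$. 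In either case all reported values lie entirely on one side of $O$, exactly as required.

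Finally I would confirm the equivalence condition by direct computation: $\sum_{v'_{B2}\in V'_{B2}}(v'_{B2}-O)=m(\mu-O)=S=\sum_{v'_{B1}\in V'_{B1}}(v'_{B1}-O)$, which is precisely the reduction asserted in Theorem~\ref{theo of CBBA}.

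I do not expect any genuine obstacle; the only point requiring care is the interval check that $\mu$ stays inside the one-sided domain, which follows from the fact that the mean of points in $[D_L,D_R]$ is itself in $[D_L,D_R]$ and its position relative to $O$ is governed solely by the sign of the aggregate deviation. Equivalently, the feasibility of a one-sided realization is captured by the transparent bound $m(D_L-O)\le S\le m(D_R-O)$, which is immediate from $D_L\le v'_i\le D_R$. An alternative to this uniform-mean construction would be to spread $S$ arbitrarily across the $m$ one-sided values (for instance, saturating some users at an extreme and leaving the rest at $O$), but collapsing onto $\mu$ is the cleanest route and keeps the side-selection step entirely mechanical.
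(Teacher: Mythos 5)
Your proof is correct, and it takes a genuinely different and simpler route than the paper's. The paper argues iteratively: assuming (WLOG) the left-side deviation dominates, it repeatedly picks the largest right-side poison value $\mathbf{y_r}$ together with a minimal subset $\mathcal{Y_L}$ of left-side values whose combined deviation with $\mathbf{y_r}$ is nonpositive, and merges them into a single new left-side value $\mathbf{y'_l}$ that preserves the total deviation $C_0$; repeating until the right side is empty yields the one-sided attack. That construction is more delicate (it needs the sandwich bounds of Equ.~\ref{equ11}--\ref{equ33} to certify $\mathbf{y'_l}\in[D_L,O]$) and, as a side effect, shrinks the cardinality of the Byzantine set at each step. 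Your one-shot construction --- collapse all $m$ reports onto their common mean $\mu$ --- preserves the aggregate deviation $S$ by definition, keeps $m$ fixed, places $\mu$ in $[D_L,D_R]$ by convexity, and determines the poisoned side purely from the sign of $S$; every step is mechanical and there is no case analysis beyond that sign. The only thing the paper's route arguably buys is a reduction that leaves the dominant side's values spread out rather than degenerate at a single point, but nothing in Definition~\ref{definition of BBA} or the equivalence condition requires that, so your argument fully establishes Theorem~\ref{theo of CBBA}.
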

\begin{proof} Now we consider a $GBA(U_{B1})$ where poison values exist on both sides. Let $V_{B1}^L$ denote the set of poison values on $[D_L, O]$, and $V_{B1}^R$ denote those on $[O,D_R]$.

Without loss of generality, consider the case where the mean of the left side is larger. Let
\begin{equation}
\label{eq001}
\setlength{\abovedisplayskip}{2pt}
\setlength{\belowdisplayskip}{2pt}
C_0=\Sigma (V_{B1}^L-O)+\Sigma (V_{B1}^R-O)<0.
\end{equation}

Let us choose the largest poison value $\mathbf{y_r}$ in $V_{B1}^R$, and an arbitrary subset of poison values $\mathcal{Y_L}$ from $V_{B1}^L$, such that the following two formulas are satisfied concurrently:
\begin{equation}
\setlength{\abovedisplayskip}{2pt}
\setlength{\belowdisplayskip}{2pt}
\Sigma(\mathcal{Y_L}-O)+\mathbf{y_r}-O \leq 0,
\label{equ11}
\end{equation}
and for any $\mathbf{y_{l}} \in \mathcal{Y_L}$,
\begin{equation}
\setlength{\abovedisplayskip}{2pt}
\setlength{\belowdisplayskip}{2pt}
\Sigma(\mathcal{Y'_L}-O)+\mathbf{y_r}-O>0,
\label{equ22}
\end{equation}
where $\mathcal{Y'_L}$ is the set of elements in $\mathcal{Y_L}$ excluding $\mathbf{y_{l}}$.

This is always available, because
\begin{equation*}
\setlength{\abovedisplayskip}{2pt}
\setlength{\belowdisplayskip}{2pt}
\Sigma (O-V_{B1}^L)> \Sigma (V_{B1}^R-O) \geq \mathbf{y_r}-O.
\end{equation*}
Add $\mathbf{y_{l}}-O$ on both sides of Equ. \ref{equ22}, we then have
\begin{equation}
\begin{split}
\setlength{\abovedisplayskip}{2pt}
\setlength{\belowdisplayskip}{2pt}
 \Sigma(\mathcal{Y'_L}-O)+\mathbf{y_{l}}-O+\mathbf{y_r}-O& \ > \mathbf{y_{l}}-O \\
 \Sigma(\mathcal{Y_L}-O)+\mathbf{y_r}-O&> \mathbf{y_{l}}-O
\end{split}
\label{equ33}
\end{equation}
Let \begin{equation}
\setlength{\abovedisplayskip}{2pt}
\setlength{\belowdisplayskip}{2pt}
\mathbf{y'_l}-O=\Sigma(\mathcal{Y_L}-O)+\mathbf{y_r}-O
\label{equ544}
\end{equation}
According to Equ. \ref{equ11} and Equ. \ref{equ33}, we have
\begin{equation*}
\setlength{\abovedisplayskip}{2pt}
\setlength{\belowdisplayskip}{2pt}
\mathbf{y_{l}}-O < \mathbf{y'_l}-O \leq 0
\end{equation*}
Apparently, $\mathbf{y'_l} \in [D_L, O]$. Let us remove $\mathbf{y_r}$ from $V_{B1}^R$, $\mathcal{Y_L}$ from $Y_L$, and add $\mathbf{y'_l}$ into $V_{B1}^L$. Such an operation eliminates a poison value on the right hand without changing $C_0$ according to Equ. \ref{eq001} and Equ. \ref{equ544}. Hence, the generated distribution can be reduced to the initial one.

Repeating the same operation until $V_{B1}^R$ is empty, we finally obtain a $V_{B2}'=V_{B1}^L$. This is achievable, as both $V_{B1}^L$ and $V_{B1}^R$ are finite, and $C_0<0$. We finally obtain a $BBA(U_{B2})$ reporting $V_{B2}'$ where poison values are on the left hand.
 \label{theo of CBBAtoBBA}
\end{proof}

\color{black}{As a final note, our GBA threat model essentially follows the \textbf{general manipulation attack} of~\cite{cheu2021manipulation}, in which Byzantine users can freely choose to report any poison values in the domain, without following a distribution imposed by the LDP perturbation. A special general manipulation attack is the \textbf{input manipulation attack}, in which Byzantine users perturb (or partially perturb) poison values with the same LDP protocol as normal users. Intuitively, this latter attack is far less effective than the former attack but is harder to detect due to the disguise from the perturbation. While this work focuses on the defense against general manipulation attacks, we will show in the experiment that it can also work with existing detection techniques, such as k-means-based defense~\cite{li2022fine} (designed for input manipulation attacks), outlier detection by boxplot~\cite{schwertman2004simple}, and isolation forests~\cite{liu2008isolation, ding2013anomaly}.}
\color{black}

\subsection{System Model and Framework}Fig. \ref{systemframework} shows the system model and our aggregation framework. There are $N$ users, among which $n$ are normal users (green) and $m$ are Byzantine users (red). Normal users perturb and normalize their values $v_i$ into $v_i'\in [D_L,D_R]$ according to the LDP perturbation mechanism, and send them to the data collector. Byzantine users choose and send poison values to the data collector (step \ding{172}).
 The goal of the data collector is to estimate the mean of {\bf normal users}. In contrast to existing detection-based methods~\cite{liu2007insider, li2010catching, rudrapal2013internal}, in our framework the data collector first has an initial guess on the true mean $O$, and then probes collected values (step \ding{173}) to estimate three features of Byzantine users, namely, the proportion of Byzantine users, the poisoned side, and the frequency histogram of poison values. Based on them, the data collector estimates the aggregated mean (step \ding{174}). 

In the rest of this paper, we will illustrate how this framework can be used for mean aggregation under LDP privacy model where Byzantine users exist. In Section~\ref{EMF}, we present a baseline protocol, followed by a security-enhanced protocol in Section~\ref{DAP}.

\begin{figure}[]
\centering
\setlength{\abovecaptionskip}{0.cm}
\includegraphics[width=0.45\textwidth]{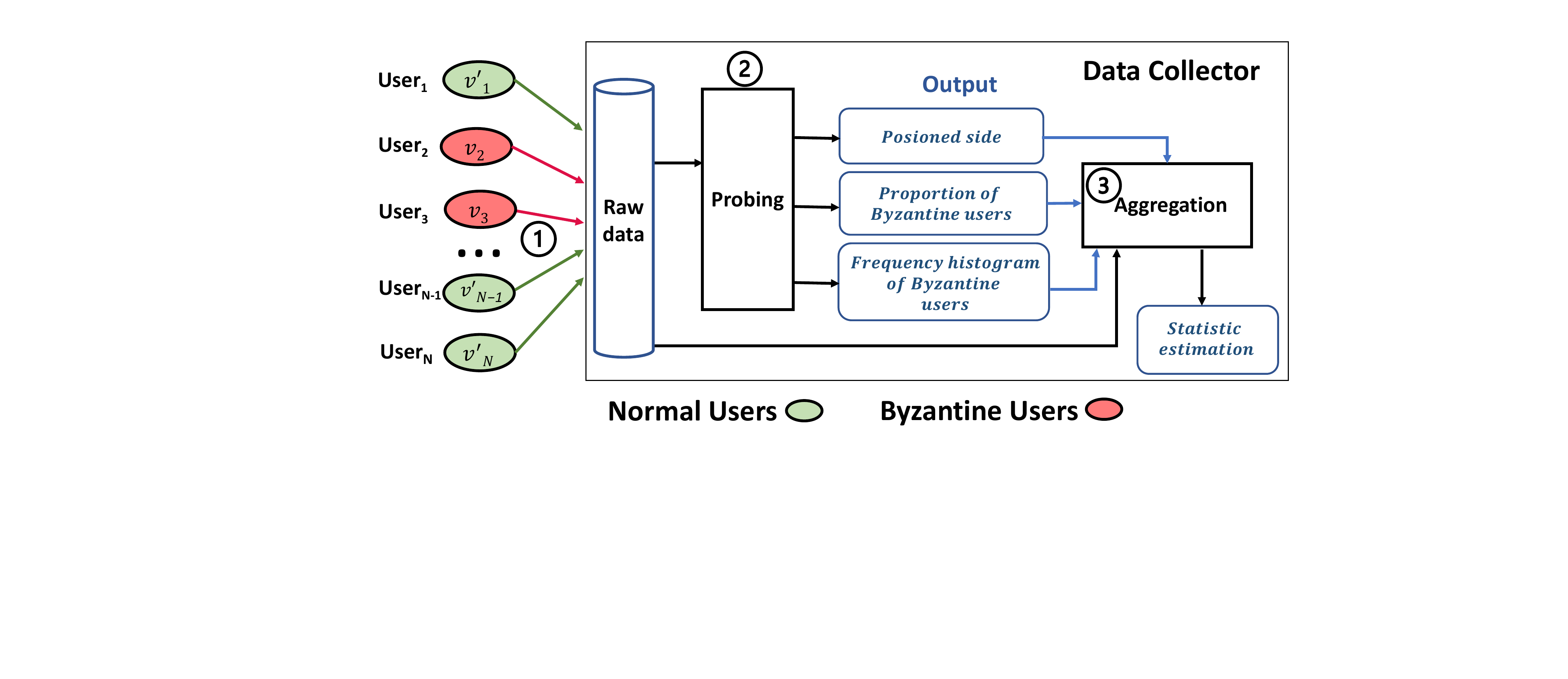}
\caption{System model and aggregation framework}
\label{systemframework}
\vspace{-3mm}
\end{figure} 
\section{A Baseline Protocol for Mean Estimation}
\label{EMF}
This section proposes a baseline protocol for numerical mean aggregation against Biased Byzantine Attacks (and thus General Byzantine Attacks). For the collector to probe and aggregate the mean, each user perturbs her numerical value twice with $\epsilon_\alpha$ and $\epsilon_\beta$ budgets respectively by an LDP perturbation mechanism. Here $\epsilon_\alpha+\epsilon_\beta=\epsilon$, the overall privacy budget, and $\epsilon_\alpha \ll \epsilon_\beta$. Upon receiving the two perturbed data sets $V'({\alpha})$ and $V'({\beta})$, the collector first probes on $V'({\alpha})$ for the three features of Byzantine users using Expectation-Maximization Filter (EMF), which will be elaborated in Section~\ref{sec:emf}. Based on these features, the collector then estimates the mean from $V'({\beta})$. In addition, since EMF can only probe on Biased Byzantine Attacks, we also need to have an initial guess on the true mean $O$. In the rest of this section, we first describe how to initialize $O$, and then present the EMF algorithm and mean estimation. In what follows we illustrate our algorithms using Piecewise Mechanism (PM) as the perturbation mechanism, and thus the perturbed value domain $[D_L, D_R]$ becomes $[-C,C]$.

\subsection{Initializing True Mean $O$}

According to Definition \ref{definition of BBA}, an initial true mean $O$ is needed to define a Biased Byzantine Attack. In essence, $O$ narrows down the poison values as in BBA they can only appear on one side of $O$. In order not to exclude potential poison values from our analysis, we should have a pessimistic initialization of $O$ as $O'$ so that $O'\le O$ if the poisoned side is on the right and vice versa. As such, the poison values' range of the true general Byzantine attack is a subset of that of the BBA in our analysis. The following theorem provides such a pessimistic initialization $O'$.

\begin{theorem}Given a collected value set $V'$ and an upper bound of Byzantine user proportion $\gamma_{sup}$, let $V'_B$ denote the set of poison values, and $T$ denote the set of the largest $\lceil \gamma_{sup}\cdot N\rceil$ values in $V'$. The following $O'$ is a pessimistic initialization of $O$, i.e., $O'\le O$ if the poisoned side is on the right and vice versa.
\begin{equation*}
\setlength{\abovedisplayskip}{2pt}
\setlength{\belowdisplayskip}{0pt}
\begin{split}
O'=\frac{1}{1-\gamma_{sup}}(\sum\limits_{v'_i\in V'}v'_i-\sum\limits_{v'_j\in T}v'_j)
\end{split}
\end{equation*}

\label{theo of cutoffpoint}
\end{theorem}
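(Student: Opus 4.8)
The plan is to read $O'$ as a \emph{trimmed mean}: the average of the collected values that survive after discarding the set $T$ of the $k:=\lceil \gamma_{sup}\cdot N\rceil$ largest entries of $V'$. I will prove the poisoned-right case $O'\le O$; the poisoned-left case is the mirror image (take $T$ to be the $k$ smallest values and reverse every inequality). Write $m$ for the true number of Byzantine users, $n=N-m$ for the honest ones, and let $O$ denote the mean of the $n$ honest collected values. The one structural fact I will lean on is that $\gamma_{sup}$ upper-bounds the true proportion, i.e. $m/N\le\gamma_{sup}$, so that $k=\lceil\gamma_{sup}N\rceil\ge m$; equivalently the number of retained values satisfies $N-k\le N-m=n$. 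Notably, nothing beyond $k\ge m$ about where the poison sits will be needed.

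The argument then reduces to two elementary inequalities about the retained values $V'\setminus T$, which are the $N-k$ smallest entries of $V'$. Let $H\subseteq V'$ be the set of honest values ($|H|=n$) and $H_{N-k}$ the $N-k$ smallest elements of $H$ (well defined since $n\ge N-k$). First, among all size-$(N-k)$ subsets of $V'$ the smallest $N-k$ entries attain the minimum sum, and $H_{N-k}$ is a competing subset, so
\begin{equation*}
\sum_{v'\in V'\setminus T} v' \;\le\; \sum_{v'\in H_{N-k}} v'.
\end{equation*}
Second, for any finite set the average of its $r$ smallest elements is at most the average of all its elements: if $a_r$ is the former and $a_{\mathrm{rest}}$ the average of the remaining $n-r$ elements, then $a_{\mathrm{rest}}\ge a_r$, and the overall average is the convex combination $\tfrac{r\,a_r+(n-r)a_{\mathrm{rest}}}{n}\ge a_r$. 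Applying this with $r=N-k$ to $H$ shows the right-hand side above is $\le (N-k)\,O$. Dividing the display by $N-k$ and chaining gives $O'\le O$.

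The step that needs care — and the one I expect to be the main obstacle — is the denominator bookkeeping together with keeping the second inequality sign-agnostic. Here $O'$ must be the genuine average of the retained values, i.e. the displayed difference of sums divided by their count $N-\lceil\gamma_{sup}N\rceil$ (this equals $N(1-\gamma_{sup})$ exactly when $\gamma_{sup}N$ is integral, matching the stated denominator); only with this normalization does the trimmed-mean chain close. The convex-combination form of the second inequality is precisely what makes the whole argument hold for an arbitrary, possibly negative, mean $O\in[-C,C]$, so no hidden nonnegativity assumption on the values is required. Finally, since discarding the $k\ge m$ largest entries can only pull the retained average at or below $O$, $O'$ is a \emph{pessimistic} (safe) under-estimate on the poisoned-right side, and by the symmetric construction an over-estimate on the poisoned-left side, which is exactly the claim.
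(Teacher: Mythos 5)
Your proof is correct, but it takes a genuinely different route from the paper's. The paper obtains $O$ by removing the contribution of $V'_B$ and renormalizing by $1-\gamma$, obtains $O'$ by removing the contribution of $T$ and renormalizing by $1-\gamma_{sup}$, asserts $\sum_{v'\in T}v' \ge \sum_{v'\in V'_B}v'$ because $T$ consists of the largest values, and then ``compares'' the two expressions. You instead read $O'$ as the trimmed mean of the $N-k$ retained (smallest) values and chain two elementary facts: the smallest $N-k$ entries of $V'$ have minimal sum among all size-$(N-k)$ subsets (in particular at most the sum of the $N-k$ smallest honest values), and the average of the $r$ smallest elements of a set never exceeds the set's overall average. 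Your route is not just different but strictly more careful on two points the paper glosses over. First, $\sum_T \ge \sum_{V'_B}$ is not automatic on a signed domain such as $[-C,C]$: $T$ must contain exactly $k=\lceil\gamma_{sup}N\rceil \ge m$ elements, and the $k-m$ ``padding'' elements may be negative, so $\sum_T$ can in fact be smaller than $\sum_{V'_B}$. Second, even granting that inequality, the two fractions have different denominators ($1-\gamma$ versus $1-\gamma_{sup}$), so comparing numerators alone does not yield $O'\le O$ without knowing their signs. Your convex-combination formulation sidesteps both issues, works for arbitrary (possibly negative) values, and — as you note — establishes $O'\le O$ without using biasedness of the attack at all, which is a mild strengthening of the stated claim. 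The only caveat, which you already flag, is the denominator bookkeeping: the displayed $O'$ equals the genuine retained-value average only up to the paper's implicit normalization by $N$ and the integrality of $\gamma_{sup}N$.
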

\begin{proof}
The true mean $O$ can be obtained by removing the effect of $Y$ from $V'$:
\begin{equation}
\setlength{\abovedisplayskip}{2pt}
\setlength{\belowdisplayskip}{2pt}
 O=\frac{\sum\limits_{v'_i\in V'}v'_i -\sum\limits_{v'_j\in V'_B} v'_j}{1-\gamma}.
 \label{equ111}
\end{equation}
Likewise, we can obtain $O'$ from removing the effect of $T$:
\begin{equation}
\setlength{\abovedisplayskip}{2pt}
\setlength{\belowdisplayskip}{2pt}
O'=\frac{\sum\limits_{v'_i\in V'}v'_i -\sum\limits_{v'_j\in T} v'_j}{1-\gamma_{sup}}.
 \label{equ222}
\end{equation}
Since the values in $T$ are the largest $\gamma_{sup}$ in $Y$, so we have
\begin{equation*}
\setlength{\abovedisplayskip}{2pt}
\setlength{\belowdisplayskip}{2pt}
\sum\limits_{v'_i\in T} v'_i \geq  \sum\limits_{v'_j\in V'_B} v'_j
\end{equation*}
Compare Equ. \ref{equ111} and Equ. \ref{equ222}, we have $O'\leq O$.
 \label{proof of cutoffpoint}
\end{proof}

According to our threat model, $\gamma_{sup}=0.5$ by default, and can be further reduced with prior knowledge.\footnote{For example, if the collector knows 60\% of the user population are using iOS, which is free from malware or virus that can turn them into Byzantine attackers, he can set $\gamma_{sup}=0.4$.} For ease of presentation, in the rest of this paper we simply set $O'=0$ unless otherwise stated and use the right hand side as the poisoned side.
\begin{figure}
\centering
\setlength{\abovecaptionskip}{0.cm}
\includegraphics[width=0.38\textwidth]{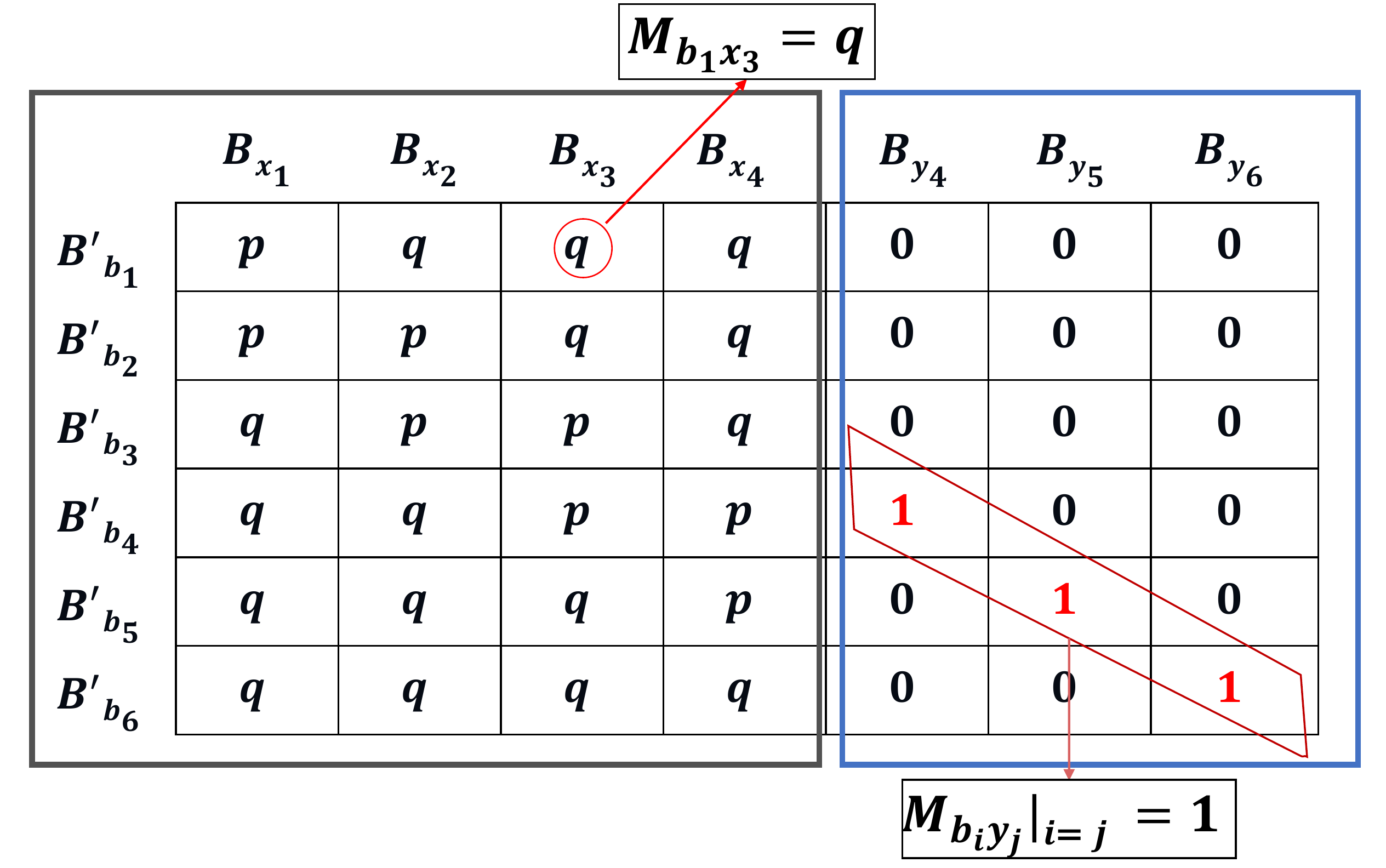}
\caption{Transform matrix M}
\label{the figure of matrix}
\vspace{-3mm}
\end{figure}
\subsection{Expectation-Maximization Filter}
\label{sec:emf}
Recall that normal users perturb their values by using Piecewise Mechanism in Algorithm \ref{PM}, whereas a biased Byzantine attacker chooses and sends a poison value directly. Such difference inspires us to reconstruct the original distribution and then probe some features of Byzantine users. Hereafter, we present a novel algorithm, namely the Expectation-Maximization Filter (EMF), to estimate such information by Maximum Likelihood Estimation (MLE)~\cite{li2020estimating}.

Let $V=\{v_1,v_2,..., v_{N}\}$ denote the original values of users, and $V'=\{v_1^{'},v_2^{'},..., v_{N}^{'}\}$ denote the collected ones. Since the probing only needs coarse precision, we discretise the original value domain $[-1,1]$ and perturbed value domain $[-C,C]$, respectively. Specifically, the former is discretised into $d$ buckets, and $d'$ for the latter, so let $\textbf{x}=\{x_1 ,...,x_d\}$ denote the frequency histogram of normal users in buckets $B_{x}=\{B_{x_1},...,B_{x_d}\}$, and $\{b_1,...,b_{d'}\}$ denote those in buckets $B'=\{B'_{b_1},...,B'_{b_d}\}$, respectively. Recall that $O'=0$,\footnote{When $O'\neq 0$, $[O',C]$ (resp. $[-C,O']$) will be discretised into $\lceil\frac{d'\cdot(C-O')}{2C}\rceil$ (resp. $\lceil\frac{d'\cdot(O'+C)}{2C}\rceil$) buckets.} so the poison values will only appear in the right half of the perturbed domain $[0,C]$ with $d'/2$ buckets. Let $\textbf{y}=\{y_{d'/2+1}, ..., y_{d'}\}$ denote the frequency histogram of Byzantine users in buckets $B_{y}=\{B_{y_{d'/2}},...,B_{y_{d'}}\}$.

With that said, the objective of EMF is to reconstruct the frequency histogram $F$, which comprises of the frequency counts of both normal users $\textbf{x}$ and poison values $\textbf{y}$ in the original value domain. That is, $F=\{\textbf{x,\ y}\}=\{x_1,...,x_k,...,x_d,y_{d'/2+1},...,y_j,...,y_{d'}\}$. To adopt MLE, we obtain the log-likelihood function of $F$ as follows:
\begin{equation}
\label{Object function}
\setlength{\abovedisplayskip}{2pt}
\setlength{\belowdisplayskip}{2pt}
\begin{split}
&l(F)=lnPr[V^{'}|F]=ln\prod_{i=1}^{N}Pr[v_i'|F]=
      \\ &\sum_{i=1}^{N}ln(\sum_{k=1}^{d}x_kPr[v'_i|v_i\in B_{x_k}]+\!\sum_{j=d'/2+1}^{d'}\!y_jPr[v'_i|v_i\in B_{y_j}])
\end{split}
\end{equation}
where $Pr[v'_i|v_i\in B_{x_k}]$ and $Pr[v'_i|v_i\in B_{y_j}]$ are constants, thus $l(F)$ is a concave function and EM algorithm can converge to the maximum likelihood estimator~\cite{bilmes1998gentle}.

To facilitate the derivation, we use a transform matrix \textbf{M} to capture the transformation probabilities $Pr[v'_i|v_i]$ for users. As shown in Fig. \ref{the figure of matrix}, \textbf{M} is a $d'\times (d+d'/2)$ matrix with all elements values in $[0,1]$. It consists of two parts. The left-hand side is a $d'\times d$ matrix for normal users, where $M_{b_ix_k}=Pr[v'_i\in B_{b_i}|v_i\in B_{x_k}]$. Given the original value $v$ of a normal user and a privacy budget $\epsilon$, let $p$ denote the probability that the perturbed value $v'$ falls in a bucket of $[l(v), r(v)]$ in Algorithm \ref{PM}, and $q$ otherwise. For example, in Fig. \ref{the figure of matrix}, given an original value in bucket $B_{x_3}$, $M_{b_1x_3}=q$ denotes the conditional probability that the output falls into bucket $B'_{b_1}$ is $q$. The right-hand side of \textbf{M} is a $d'\times (d'/2)$ matrix for poison values, and likewise, $M_{b_iy_j}=Pr[v'_i\in B_{b_i}|v_i\in B_{y_j}]$. Since Byzantine users send poison values directly to the data collector, $M_{b_iy_j}=1$ when $i=j$, and $M_{b_iy_j}=0$ when $i\neq j$.

Now that the transform matrix $\textbf{M}$ is derived, the EMF algorithm reconstructs the frequency histogram ($\hat{x}$ for normal values and $\hat{y}$ for poison values) as illustrated in Algorithm \ref{the algorithm of EMF}, which can be derived using the methodology described in the literature~\cite{li2020estimating}. Note that $c_i$ in $E$-step denotes the count of perturbed values from $V'$ in bucket $B'_{b_i}$. First, the algorithm assigns some non-zero initial values to $\hat{x}$ and $\hat{y}$, subject to $\hat{x}+\hat{y}=1$ (line 1). Next, it executes the EM algorithm, alternating the $E$-step and the $M$-step. The $E$-step evaluates the log-likelihood expectation by the observed counts $c_i$ and the current $\hat{x}$ and $\hat{y}$ (lines 3-9), and the $M$-step calculates $\hat{x}$ and $\hat{y}$ that maximize the expected likelihood (lines 10-15) as inputs for the next round $E$-step. Finally, the algorithm returns the estimated frequency histogram $\hat{x}$ and $\hat{y}$ when the convergence condition is met (line 16).
{\small{
\begin{algorithm}[] \small
\caption{Expectation-Maximization Filter (EMF)}
 \begin{flushleft}
{\bf Input:}
Transform matrix $\textbf{M}$ and collected values $V'$\\
{\bf Output:}
$\hat{x}$,  $\hat{y}$
 \end{flushleft}
\begin{algorithmic}[1]
\State
Initialization: $\hat{x_k}=\hat{y_j}=\frac{1}{d+d'/2}$
\While{not converge}\State
\textbf{E-step:}
$\forall\ k\in\{1,...,d/2\}$
\State
\indent $P_{x_k}=\hat{x_k}\sum_{i=1}^{d'}c_i\frac{M_{b_ix_k}}{\sum_{t=1}^{d}M_{b_ix_t}\hat{x_t}}$
 \State
$\forall\ k\in\{d/2+1,...,d\}$
 \State
\indent $P_{x_k}=\hat{x_k}\sum_{i=1}^{d'}c_i\frac{M_{b_ix_k}}{\sum_{t=1}^{d}M_{b_ix_t}\hat{x_t}+M_{b_iy_i}\hat{y_i}}$
 \State
 $\forall\ j\in\{d'/2+1,...,d'\}$
  \State
 \indent
 $P_{y_j}=\hat{y_j}\sum_{i=1}^{d'}c_i\frac{M_{b_iy_j}}{\sum_{t=1}^{d}M_{b_ix_t}\hat{x_t}+M_{b_iy_i}\hat{y_i}}$
\State
\textbf{M-step:}
$\forall\ k\in \{1,...,d\}$
\State
   \indent \indent  $\hat{x_k}=\frac{P_{x_k}}{\sum_{i=1}^{d}P_{x_i}+\sum_{j=d'/2+1}^{d'}P_{y_j}}$
   \State
$\forall\ j\in \{d'/2+1,...,d'\}$
\State
\indent \indent  $\hat{y_j}=\frac{P_{y_j}}{\sum_{k=1}^{d}P_{x_k}+\sum_{i=d'/2+1}^{d'}P_{y_i}}$
\EndWhile
\algtext*{EndWhile}
\State \Return $\hat{x}$, $\hat{y}$
\end{algorithmic}
\label{the algorithm of EMF}
\end{algorithm}}}

\subsection{Byzantine Feature Estimation}
\label{Probing}
The outputs of EMF $\hat{x}$ and $\hat{y}$ on the collected value set $V'$ is used to probe Byzantine users' features. The first feature is the poisoned side $\mathcal{S}$, and the pseudo-code is shown in Algorithm \ref{Poisoned side2}. First, it applies EMF separately to the transform matrices $\textbf{M}_R$ and $\textbf{M}_L$ (lines 1-2). If the probing range is $[0,C]$, $\textbf{M}_R$ is used with buckets $\{B_{y_{d'/2+1}},...,B_{y_{d'}}\}$ in the right-hand matrix of $\textbf{M}$ for poison values; otherwise $\textbf{M}_L$ is used with buckets $\{B_{y_{1}},...,B_{y_{d'/2}}\}$. Then, to determine which side is more probable, it calculates the variance of $\hat{x_L}$ and $\hat{x_R}$ (lines 3-4), where $\hat{x_L}$ (resp. $\hat{x_R}$) denotes the frequency histogram for normal users when the input matrix is $\textbf{M}_L$ (resp. $\textbf{M}_R$). Finally, it chooses the poisoned side with a lower variance (lines 5-8).
\begin{algorithm}[] \small
\caption{Poisoned Side Probing}
\begin{flushleft}
 {\bf Input:}
Transform matrix $\textbf{M}_L$, transform matrix $\textbf{M}_R$ and collected values $V'$\\
 {\bf Output:}
Poisoned side $\mathcal{S}$
\end{flushleft}
\begin{algorithmic}[1]
\State
$(\hat{x_L},\hat{y_L})=EMF(\textbf{M}_L,V')$;
\State
$(\hat{x_R},\hat{y_R})=EMF(\textbf{M}_R,V')$;
\State
$Var_L=Variance(\hat{x_L})$;
\State
$Var_R=Variance(\hat{x_R})$;
\If{$Var_L<Var_R$} \Return $Left\ side$
\Else \ \Return $Right\ side$
\EndIf
\end{algorithmic}
\label{Poisoned side2}
\end{algorithm}

Upon knowing the poisoned side, we can now determine the remaining two features. The second one of Byzantine users is the frequency histogram $\hat{y}$ of poison values which can be chosen from $\hat{y_L}$ or $\hat{y_R}$ in Algorithm \ref{Poisoned side2} accordingly. The third one is the proportion of Byzantine users $\hat{\gamma}$, which can be derived from $\hat{y}$, the estimated frequency histogram of poison values by EMF:
\begin{equation}
\setlength{\abovedisplayskip}{1pt}
\setlength{\belowdisplayskip}{1pt}
\label{The ratio of Byzantine users}
\hat{\gamma}=\sum_{j=d'/2+1}^{d'}{\hat{y_j}}=\frac{\hat{m}}{N}\approx  \frac{m}{N}=\gamma,
\end{equation}
where $\hat{m}$ denotes the estimated number of Byzantine users, and $\gamma$ denotes the true proportion of Byzantine users.

When $\epsilon\rightarrow0$, the estimated frequency histogram $\hat{x}$ of normal users converges to a uniform distribution, whereas that of Byzantine users $\hat{y}$ converges to the true distribution of poison values. This, along with the correctness of Algorithm \ref{Poisoned side2} and Equ. \ref{The ratio of Byzantine users}, can be proved by the following theorem.
\begin{theorem}
\label{lemma of poisoned direction and the proportion of Byzantine users}
Let $a=\{a_{d'/2+1},...,a_{d'}\}$ denote the count of poison values in buckets $\{B_{d'/2+1},...,B_{d'}\}$. When $\epsilon\rightarrow0$, the convergence results are $\hat{x_k}=\frac{n}{Nd}(k\in\{1,...,d'\})$ and $\hat{y_j}=\frac{a_j}{N}(j\in\{d'/2+1,...,d'\})$.
\end{theorem}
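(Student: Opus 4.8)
The plan is to exploit the concavity of the log-likelihood $l(F)$ noted above: since $l(F)$ is concave and the EM iteration of Algorithm \ref{the algorithm of EMF} monotonically increases it, Algorithm \ref{the algorithm of EMF} converges to a global maximizer of $l(F)$, and any fixed point of its $E$/$M$ updates is exactly a first-order stationary point of $l(F)$. Hence it suffices to exhibit the claimed $(\hat{x},\hat{y})$ as such a stationary point in the limit $\epsilon\to 0$. Writing $f_i=\sum_{t=1}^{d}M_{b_ix_t}\hat{x_t}+\sum_{j}M_{b_iy_j}\hat{y_j}$ for the mixture probability of perturbed bucket $B'_{b_i}$, a Lagrangian computation on $\sum_i c_i\ln f_i$ shows the fixed point is equivalent to $\sum_{i=1}^{d'}c_i\,M_{b_ix_k}/f_i=N$ for every $k$ with $\hat{x_k}>0$ and $\sum_{i=1}^{d'}c_i\,M_{b_iy_j}/f_i=N$ for every $j$ with $\hat{y_j}>0$; the common multiplier equals $N$ because multiplying each condition by its weight and summing collapses the right factor to $\sum_i c_i=N$. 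I would establish these two identities for the proposed solution.

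First I would compute the $\epsilon\to 0$ limit of the transform matrix $\mathbf{M}$. For the Piecewise Mechanism the output densities on the high-probability interval $[l(v),r(v)]$ and on its complement are $\tfrac{e^{\epsilon/2}/(e^{\epsilon/2}+1)}{C-1}$ and $\tfrac{1/(e^{\epsilon/2}+1)}{C+1}$ respectively; using $C-1=\tfrac{2}{e^{\epsilon/2}-1}$ and $C+1=\tfrac{2e^{\epsilon/2}}{e^{\epsilon/2}-1}$, their ratio simplifies to $e^{\epsilon}$, which tends to $1$ as $\epsilon\to 0$. Thus, although $C\to\infty$, the per-bucket probabilities equalize and $M_{b_ix_k}\to 1/d'$ for every $i,k$, independent of the input bucket, while the poison columns are unchanged, $M_{b_iy_j}=\delta_{ij}$. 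Consequently the (expected, large-$N$) count in bucket $B'_{b_i}$ is $c_i=\tfrac{n}{d'}+a_i$, with $a_i=0$ for $i\le d'/2$.

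Next I would substitute the candidate $\hat{x_k}=\tfrac{n}{Nd}$, $\hat{y_j}=\tfrac{a_j}{N}$. The normal part contributes $\sum_{t}M_{b_ix_t}\hat{x_t}=\tfrac{1}{d'}\cdot\tfrac{n}{N}=\tfrac{n}{Nd'}$ to every bucket, so $f_i=\tfrac{n}{Nd'}$ for $i\le d'/2$ and $f_i=\tfrac{n}{Nd'}+\tfrac{a_i}{N}$ for $i>d'/2$. The key observation is then the exact identity $c_i=N f_i$: the candidate reproduces the empirical distribution. The two stationarity conditions follow immediately, since $\sum_i c_i\,M_{b_ix_k}/f_i=\sum_i N f_i\cdot\tfrac{1/d'}{f_i}=d'\cdot\tfrac{N}{d'}=N$ and $\sum_i c_i\,M_{b_iy_j}/f_i=c_j/f_j=N f_j/f_j=N$. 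By concavity this interior stationary point is the unique MLE, so EM converges to it; summing $\hat{y_j}$ then gives $\hat{\gamma}=\sum_j a_j/N=m/N=\gamma$ (Equ. \ref{The ratio of Byzantine users}), and the uniform $\hat{x_R}$ recovered on the true side yields the minimal variance exploited in Algorithm \ref{Poisoned side2}.

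I expect the main obstacle to be the uniform-limit argument for $\mathbf{M}$: because $C\to\infty$ as $\epsilon\to 0$, the individual densities both vanish, so one must track the density (equivalently per-bucket) ratio rather than the densities themselves to conclude uniformity, and then argue that the discretization into $d'$ fixed buckets inherits this uniformity. A secondary point requiring care is the identity $c_i=N f_i$, which holds for the expected counts; I would therefore state the result in the population (large-$N$) regime, so that empirical bucket frequencies coincide with their expectations, and confirm the candidate is interior (all entries positive whenever every $a_j>0$) so that the first-order conditions are equalities rather than KKT inequalities.
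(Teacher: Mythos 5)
Your proposal is correct and follows essentially the same route as the paper's proof: both pass to the $\epsilon\to 0$ limit in which the normal-user columns of the transform matrix become uniform ($M_{b_ix_k}\to 1/d'$, justified in your case by the density ratio $e^{\epsilon}\to 1$) so that $c_i\to n/d'+a_i$, and both characterize the EMF limit through the first-order Lagrangian stationarity conditions of the concave log-likelihood under the constraint $\sum_k\hat{x_k}+\sum_j\hat{y_j}=1$. The only difference is presentational: you verify that the claimed $(\hat{x},\hat{y})$ satisfies those conditions via the identity $c_i=Nf_i$, whereas the paper solves the same equations forward, so your added care about the $C\to\infty$ bucketing and the expected-versus-empirical counts simply fills in steps the paper leaves implicit.
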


\begin{proof}
When $\epsilon\rightarrow 0$, all inputs from normal users are equally perturbed into $d'$ buckets with probability $\frac{1}{d'}$, which leads to a uniform distribution. Let $a=\{a_{d'/2+1},...,a_{t},...,a_{d'}\}$ denote the count of poison values in corresponding buckets $B'_{b_i}$, and we have $c_t\rightarrow\frac{n}{d'}, t\in\{1,...,d'/2\}$ and $c_t\rightarrow \frac{n}{d'}+a_{t}, t\in\{d'/2+1,...,d'\}$.

Note that $\sum_{k=1}^{d} \hat{x_k}+\sum_{j=d'/2+1}^{d'}\hat{y_j}=1$, the Lagrangian function of Equ. \ref{Object function} can be written as:
\begin{equation*}
\setlength{\abovedisplayskip}{2pt}
\setlength{\belowdisplayskip}{2pt}
L(F)=l(F)+\omega(\sum_{k=1}^{d} \hat{x_k}+\sum_{j={d'/2+1}}^{d'}\hat{y_j}-1).
\end{equation*}
Let all first-order partial derivatives of L w.r.t. $\hat{x_k}$ and $\hat{y_j}$ equal zero
\begin{equation*}
\begin{split}
\setlength{\abovedisplayskip}{2pt}
\setlength{\belowdisplayskip}{2pt}
&\frac{\partial L(F)}{\partial \hat{x_k}}=\sum_{t=1}^{d'}c_t\frac{\frac{1}{d'}}{\sum_{k=1}^{d}\hat{x_k}\frac{1}{d'}}+\omega, \  k\in\{1,...,\frac{d}{2}\}\\
\setlength{\abovedisplayskip}{2pt}
\setlength{\belowdisplayskip}{2pt}
&\frac{\partial L(F)}{\partial \hat{x_k}}=\sum_{t=1}^{d'}c_t\frac{\frac{1}{d'}}{\sum_{k=1}^{d}\hat{x_k}\frac{1}{d'}+\hat{y_t}}+\omega, \  k\in\{\frac{d}{2}+1,...,d\}\\
 &\frac{\partial L(F)}{\partial \hat{y_j}}=c_j\frac{1}{\sum_{k=1}^{d}\hat{x_k}\frac{1}{d'}+\hat{y_j}}+\omega,\  j\in\{\frac{d'}{2}+1,...,d'\}
\end{split}
\end{equation*}
we have:
\begin{equation}
\label{resultofwhole12}
\setlength{\abovedisplayskip}{2pt}
\setlength{\belowdisplayskip}{2pt}
\hat{x_k}\rightarrow \frac{n}{Nd},\ \ \hat{y_j}\rightarrow \frac{a_j}{N},j\in\{d'/2+1,...,d'\}, \ \omega\rightarrow -N.
\end{equation}

According to Equ. \ref{resultofwhole12}, the frequency histogram $\hat{x}$ of normal users converges to a uniform distribution, whereas that of Byzantine users $\hat{y}$ converges to the true distribution of poison values.
 \label{proof of cutoffpoint}
\end{proof}

When $\epsilon\rightarrow 0$, $\hat{y}$ converges to the true distribution of poison values, so Equ. \ref{The ratio of Byzantine users} can well estimate the proportion of Byzantine users. The fact that $\hat{x}$ converges to a uniform distribution explains the correctness of Algorithm \ref{Poisoned side2} --- if we choose the correct side (say, the right side), the distribution of $\hat{x}$ should be a uniform distribution with a very small variance; If we select the left side and use $M_L$ as the EMF input, only collected values from $[-C,0]$ can converge into $B_{y_i}$. Therefore, poison values in $[0,C]$ can only converge to $B_{x_i}$, which are regarded as normal values. The convergence result $\hat{x_L}$ will be biased towards $[0,C]$, which is far from a uniform distribution. In this case, the variance of $\hat{x}$ appears to be far greater than selecting the reverse side.

Since $\epsilon_\alpha \ll \epsilon_\beta$, we choose $V'(\alpha)$ over $V'(\beta)$ to probe Byzantine features, which is more accurate than using the latter according to Theorem \ref{lemma of poisoned direction and the proportion of Byzantine users}.

\subsection{Mean Estimation}
\label{Statistic Estimation}
Now that we have probed the features of Byzantine users from $V'(\alpha)$, we can leverage them to estimate the mean in $V'(\beta)$. Since the poison values in $V'(\alpha)$ and $V'(\beta)$ form a unified attack, their deviation $D$ from the true mean $O$ is assumed the same.\footnote{Section ~\ref{DAP} has further eliminated this assumption.} Therefore, we have:
\begin{equation*}
\setlength{\abovedisplayskip}{2pt}
\setlength{\belowdisplayskip}{2pt}
\begin{split}
D=\frac{m(\mathbb{M}_{\alpha}-O)}{N}=\frac{m(\mathbb{M}_{\beta}-O)}{N},
\end{split}
\end{equation*}
where $\mathbb{M}_{\alpha}$ denotes the mean of poison values in $V'(\alpha)$, and $\mathbb{M}_{\beta}$ for $V'(\beta)$. Thus, $\mathbb{M}_{\alpha}=\mathbb{M}_{\beta}$.

$\mathbb{M}_{\alpha}$ can be obtained from the frequency histogram estimation $\widehat{y(\alpha)}$, i.e., the convergence result for poison values of $V'(\alpha)$:
\begin{equation}
\label{mean of poison value1}
\setlength{\abovedisplayskip}{1pt}
\setlength{\belowdisplayskip}{1pt}
\begin{split}
\mathbb{M}_{\alpha}=\frac{\sum\widehat{y_j({\alpha})}\nu_j}{\sum\widehat{y_j({\alpha})}},\ j\in\{d'/2+1,...,d'\},
\end{split}
\end{equation}
where $\nu_j$ denotes the median value of $B_{y_j}$.

Because the perturbed mean gained from PM is an unbiased estimation of the original one, the mean $\widetilde{M}$ of normal users with $\epsilon_{\beta}$ can be derived as:
\begin{equation}
\label{mean of normal value}
\setlength{\abovedisplayskip}{1pt}
\setlength{\belowdisplayskip}{1pt}
 \widetilde{M}=\frac{\sum\limits_{v'({\beta)}\in V'({\beta})}{v'({\beta})}-\hat{m} \mathbb{M}_{\beta}}{N-\hat{m}}.
\end{equation}
Substitute Equ. \ref{mean of poison value1} into Equ. \ref{mean of normal value}. Notice that $\mathbb{M}_{\alpha}=\mathbb{M}_{\beta}$, $\widetilde{M}$ can be written as:
\begin{equation*}
\setlength{\abovedisplayskip}{1pt}
\setlength{\belowdisplayskip}{1pt}
\begin{split}
 \widetilde{M}=\frac{\sum\limits_{v'({\beta)}\in V'({\beta})}{v'({\beta})}-\hat{m}\frac{\sum\widehat{y_j({\alpha})}\nu_j}{\sum\widehat{y_j({\alpha})}}}{N-\hat{m}}, \ j\in\{d'/2+1,...,d'\}.
\end{split}
\end{equation*} 
\section{Differential Aggregation Protocol}
\label{DAP}
The baseline protocol has one flaw as they may choose to send some normal values at $\epsilon_\alpha$ to hide their attacking intention, and send poison values at $\epsilon_\beta$. Since the probed Byzantine features become less accurate, the accuracy of the mean estimation is degraded significantly. In essence, this flaw is caused by {\bf two separate and fixed} $\epsilon$'s in the baseline protocol, so Byzantine users can easily tell the smaller one is $\epsilon_\alpha$ and serves for probing Byzantine features, whereas the larger one serves for statistics estimation.

This motivates us to enhance the baseline protocol with a {\bf single but random} $\epsilon$. In this section, we present a multi-group collection mechanism, namely Differential Aggregation Protocol (DAP). Our idea is to randomly assign users into $h$ groups, each with its own $\epsilon$ setting. The collector performs EMF in each group to probe Byzantine users' features, as in the baseline protocol. And then the collector estimates a mean from each group, based on which an inter-group mean is aggregated. There are a few advantages. First, Byzantine users cannot differentiate if their values are used for probing or estimation and thus take the above strategy. Second, there is no need to split privacy budgets for users, which can improve the estimation accuracy. Third, this protocol can naturally handle users with different privacy budgets.

As illustrated in Fig. \ref{sysframe of DAP}, the workflow of the DAP protocol has five stages:
\begin{enumerate}
\setlength{\itemsep}{0pt}
\setlength{\parsep}{0pt}
\setlength{\parskip}{0pt}
\item \textbf{Grouping.} The data collector allocates users into groups and assigns each group with a dedicated privacy budget.
\item \textbf{Perturbation.} Users in each group perturb their values according to their assigned privacy budgets and send them to the data collector.
\item \textbf{Probing.} The data collector executes the EMF algorithm for each group to probe the Byzantine features.
\item \textbf{Intra-group Estimation.} The data collector attains a mean from each group from the output of EMF.
\item \textbf{Inter-group Aggregation.} The data collector combines estimated means from all groups into one.
\end{enumerate}

\begin{figure}
\centering
\includegraphics[width=0.45\textwidth]{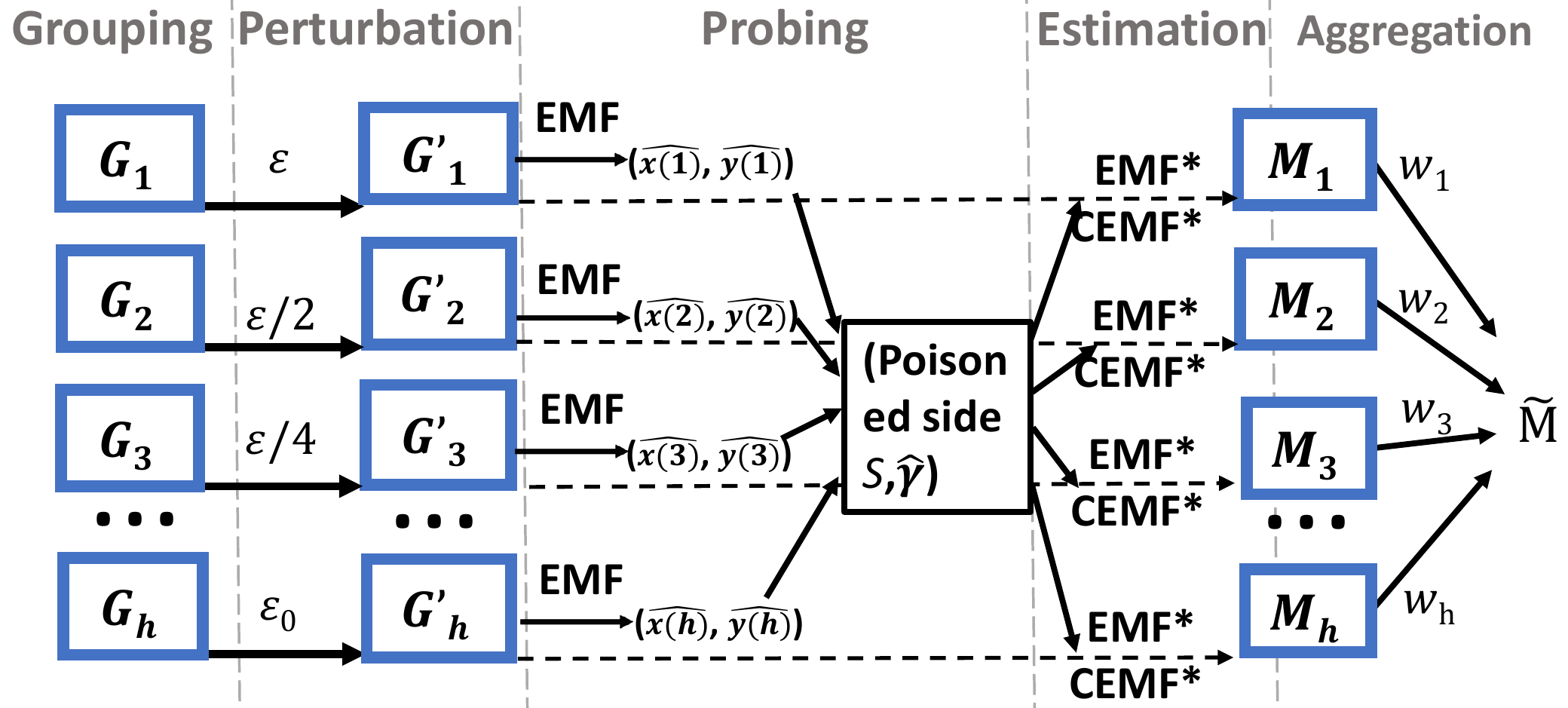}
\setlength{\abovecaptionskip}{0.cm}
\caption{Differential Aggregation Protocol}
\label{sysframe of DAP}
\end{figure}

\subsection{Grouping}First of all, the data collector determines a minimal acceptable privacy budget $\epsilon_0$ to bound the perturbation noise of normal values according to the privacy budget $\epsilon$ for users. Then data collector creates $h=\lceil log_2(\epsilon/\epsilon_0)\rceil+1$ equal-sized groups\footnote{That is, each group has the same number of users. }, denoted by $\{G_1, ...,G_t,...,G_h\}$, with decreasing budgets $\{\epsilon,\frac{1}{2}\epsilon,\frac{1}{4}\epsilon, ...,\epsilon_t , ...,\epsilon_0\}$.\footnote{Without loss of generality, we assume $\epsilon/\epsilon_0$ is a power of $2$.} Users are randomly assigned to the groups by the collector and perturb their values according to the $\epsilon_t$ of the groups they belong to. To guarantee all users have the same privacy budget, those assigned with smaller $\epsilon_t$ perturb and report multiple times until the overall privacy budget is depleted. Let $V'_t$ denote the collected values from group $G_t$ and $N_t=\frac{\epsilon N}{\epsilon_t h}$ denote the number of collected values from group $G_t$.

\subsection{Intra-group Mean Estimation}Upon collecting all perturbed values, we carry out EMF in each group to probe the Byzantine users' features from the frequency histogram, as described in Section \ref{EMF}. Specifically, we can obtain the poisoned side, the proportion of Byzantine users and the frequency histogram of poison values $\widehat{y(t)}$ in $G_t$. As such, the mean $M_t$ of group $G_t$ with privacy budget $\epsilon_t$, can be estimated by removing poison values:
\begin{equation}
\label{mean in Gt}
\setlength{\abovedisplayskip}{2pt}
\setlength{\belowdisplayskip}{2pt}
 M_t=\frac{\sum\limits_{v'_t\in V'_t}{v'_t}-\hat{m_t}\sum_{j=d'/2+1}^{d'}(\widehat{y_j(t)}\nu_j)}{N_t-\hat{m_t}},
\end{equation}
where $\widehat{y_j(t)} \in \widehat{y(t)}$, $\hat{m_t}=N_t\sum{\widehat{y(t)}}$ and $\nu_j$ denotes the median value of $B_{y_j}$.

 However, according to Theorem \ref{lemma of poisoned direction and the proportion of Byzantine users}, both $\hat{y_t}$ and $\hat{m_t}$ become less accurate as the user population is diluted by grouping, especially when the $\epsilon_t$ of a group is large. Hereafter, we propose two post-processing schemes, namely, EMF* and CEMF*, to further enhance the accuracy of convergence results and mean estimation in Equ. \ref{mean in Gt} in each group.

\textbf{EMF*.} It utilizes the proportion of Byzantine users $\hat{\gamma}$ (with small $\epsilon_t$) probed from EMF to improve the convergence process, by imposing $\sum \hat{y}=\hat{\gamma}$, $\sum \hat{x}=1-\hat{\gamma}$ as two additional conditions to satisfy in EMF. Accordingly, the maximization problem in \textbf{M} steps of EMF becomes:
\begin{equation}
\setlength{\abovedisplayskip}{1pt}
\setlength{\belowdisplayskip}{1pt}
\begin{split}
&\mathop{\arg\max}\limits_{\hat{x},\hat{y}}\ \sum_{k=1}^{d}P_{x_k}ln\hat{x_k}+\sum_{j=d'/2+1}^{d'}P_{y_j}ln\hat{y_j},  \\
&subject\ to\ \sum_{k=1}^{d}\hat{x_k}=1-\hat{\gamma},\ \sum_{j=d'/2+1}^{d'}\hat{y_j}=\hat{\gamma}
\end{split}
\label{Mmaximum estimator}
\end{equation}

EMF* can improve EMF in terms of the accuracy of converged poison value histogram because the additional restrictions eliminate those infeasible poison values. To resolve Equ. \ref{Mmaximum estimator}, we have the following theorem.
\begin{theorem}
\label{theorem:remf}
The maximum in Equ. \ref{Mmaximum estimator} is reached when the output frequency histograms are
\begin{equation*}
\setlength{\abovedisplayskip}{2pt}
\setlength{\belowdisplayskip}{0pt}
\hat{x_k}=(1-\hat{\gamma})\frac{Px_k}{\sum_{k=1}^{d}{Px_k}},\  \hat{y_j}=\hat{\gamma}\frac{Py_j}{\sum_{j=d'/2+1}^{d'}{Py_j}}.
\end{equation*}
\end{theorem}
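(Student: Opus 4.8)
The plan is to solve the constrained maximization in Equ. \ref{Mmaximum estimator} by the method of Lagrange multipliers, exploiting the fact that both the objective and the constraints separate cleanly into an $\hat{x}$-block and a $\hat{y}$-block. First I would observe that the objective $\sum_k P_{x_k}\ln\hat{x_k}+\sum_j P_{y_j}\ln\hat{y_j}$ is a sum of two independent terms coupled only through the two normalization constraints $\sum_k \hat{x_k}=1-\hat{\gamma}$ and $\sum_j \hat{y_j}=\hat{\gamma}$, which themselves involve disjoint sets of variables. Hence the problem decouples into two structurally identical subproblems, each of the form ``maximize $\sum_i a_i \ln z_i$ subject to $\sum_i z_i = S$ with $z_i \ge 0$'', where $(a_i,S)=(P_{x_k},\,1-\hat{\gamma})$ for the normal-value histogram and $(a_i,S)=(P_{y_j},\,\hat{\gamma})$ for the poison-value histogram.

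For each subproblem I would introduce a single multiplier, form the Lagrangian, and set its partial derivatives to zero. Differentiating with respect to $\hat{x_k}$ gives $P_{x_k}/\hat{x_k}=\lambda$, so $\hat{x_k}=P_{x_k}/\lambda$; substituting into the constraint $\sum_k \hat{x_k}=1-\hat{\gamma}$ fixes $\lambda=\frac{\sum_k P_{x_k}}{1-\hat{\gamma}}$ and yields $\hat{x_k}=(1-\hat{\gamma})\frac{P_{x_k}}{\sum_k P_{x_k}}$. The identical argument with a multiplier $\mu$ on the $\hat{y_j}$ block gives $\hat{y_j}=\hat{\gamma}\frac{P_{y_j}}{\sum_j P_{y_j}}$, matching the claimed expressions exactly.

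The remaining---and only nontrivial---step is to confirm that this stationary point is the global maximum rather than a saddle or minimum. Here I would invoke concavity: since $\ln$ is strictly concave and the coefficients $P_{x_k},P_{y_j}$ produced by the $E$-step are nonnegative, the objective is concave on the positive orthant, while the feasible region (the intersection of the two affine normalization constraints with $\hat{x},\hat{y}\ge 0$) is convex. For a concave objective over a convex set the Lagrangian stationarity conditions are sufficient for global optimality, so the unique stationary point above is indeed the maximizer. Alternatively I could give a multiplier-free argument via the Gibbs/KL-divergence inequality: after writing each normalized block as a probability vector, the objective is (up to an additive constant and a positive scaling) a negative cross-entropy, which is maximized precisely when the distribution matches the normalized coefficients $P_{x_k}/\sum_k P_{x_k}$, recovering the same answer and making the global-maximum claim immediate. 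I do not expect a substantive obstacle here; the only bookkeeping to watch is the degenerate case $P_{x_k}=0$ (or $P_{y_j}=0$), where the corresponding bin is simply assigned zero mass and the argument extends by continuity.
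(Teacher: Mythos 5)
Your proposal is correct and follows essentially the same route as the paper: the paper also forms a Lagrangian with one multiplier per normalization constraint, sets $P_{x_k}+\lambda_1\hat{x_k}=0$ and $P_{y_j}+\lambda_2\hat{y_j}=0$, and solves for the multipliers from the constraints to obtain the same formulas. Your added concavity/KL argument for global optimality is a welcome extra that the paper leaves implicit, but it does not change the substance of the derivation.
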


{\small{
\begin{algorithm}[t]\small
\caption{EMF with Restrictions (EMF*)}
\begin{flushleft}
 {\bf Input:}
Transform matrix $\textbf{M}$, collected values $V'$, the estimated proportion of Byzantine users $\hat{\gamma}$\\
 {\bf Output:}
$\hat{x},\hat{y}$
\end{flushleft}
\begin{algorithmic}[1]
\State
Initialization: $\hat{x_k}=\hat{y_j}=\frac{1}{d+d'/2}$
\While{not converge}\State
\textbf{E-step:}
\textbf{\State }
Same as the E-step in EMF in Algorithm \ref{the algorithm of EMF}
 \State
\textbf{M-step:}
\State
$\forall\ k\in \{1,...,d\}$
\State
   \indent \indent  $\hat{x_k}=(1-\hat{\gamma})\frac{P_{x_k}}{\sum_{i=1}^{d}P_{x_i}}$
   \State
$\forall\ j\in \{d'/2+1,...,d'\}$
\State
\indent \indent  $\hat{y_j}=\hat{\gamma}\frac{P_{y_j}}{\sum_{i=d'/2+1}^{d'}P_{y_i}}$
\EndWhile
\State \Return $\hat{x},\hat{y}$
\end{algorithmic}
\label{the algorithm of REMF}
\end{algorithm}
 }}
\begin{proof}
We apply Lagrangian multiplier method to derive maximal value of Euq. \ref{Mmaximum estimator}. Let
\begin{equation*}
\setlength{\abovedisplayskip}{2pt}
\setlength{\belowdisplayskip}{2pt}
\begin{split}
\mathcal{L}&=\sum_{k=1}^{d} P_{x_k}ln\hat{x_k}+\sum_{j=d'/2+1}^{d'}P_{y_j}ln\hat{y_j}\\
&+\lambda_1(\sum_{k=1}^{d}\hat{x_k}-1+\hat{\gamma})+\lambda_2(\sum_{j=d'/2+1}^{d'}\hat{y_j}-\hat{\gamma}),
\end{split}
\end{equation*}
where $\lambda_1$ and $\lambda_2$ are two constants and the first-order partial derivatives of $\mathcal{L}$ w.r.t. $\hat{x_k}$ and $\hat{y_j}$ are
\begin{equation*}
\setlength{\abovedisplayskip}{2pt}
\setlength{\belowdisplayskip}{2pt}
\frac{\partial \mathcal{L}}{\partial \hat{x_k}}=\frac{P_{x_k}}{\hat{x_k}}+\lambda_1,\ \frac{\partial \mathcal{L}}{\partial \hat{y_j}}=\frac{P_{y_j}}{\hat{y_j}}+\lambda_2.
\end{equation*}
Let $\frac{\partial l}{\partial \hat{x_k}}$ and $\frac{\partial l}{\partial \hat{y_j}}$ be zero, and we have
\begin{equation}
\setlength{\abovedisplayskip}{2pt}
\setlength{\belowdisplayskip}{2pt}
 P_{x_k}+\lambda_1\hat{x_k}=0,\ P_{y_j}+\lambda_2\hat{y_j}=0.
 \label{the result of derivation2}
\end{equation}
From the restrictions in Equ. \ref{Mmaximum estimator}, we can deduce that
\begin{equation*}
\setlength{\abovedisplayskip}{2pt}
\setlength{\belowdisplayskip}{2pt}
\lambda_1=\frac{\sum_{i=1}^{d}P_{x_i}}{\hat{\gamma}-1},\  \lambda_2=\frac{\sum_{i=d'/2+1}^{d'}P_{y_i}}{-\hat{\gamma}}.
\end{equation*}
Replacing them in Equ. \ref{the result of derivation2}, we can reach that
\begin{equation*}
\setlength{\abovedisplayskip}{2pt}
\setlength{\belowdisplayskip}{2pt}
\hat{x_k}=(1-\hat{\gamma})\frac{Px_k}{\sum_{i=1}^{d}P_{x_i}},\  \hat{y_j}=\hat{\gamma}\frac{Py_j}{\sum_{i=d'/2+1}^{d'}P_{y_i}}.
\end{equation*}
 \label{proof of cutoffpoint}
\end{proof}
Algorithm~\ref{the algorithm of REMF} shows the pseudo-code of EMF*, where the M-step from EMF has been modified  with the results of Theorem~\ref{theorem:remf}. Note that instead of replacing EMF, EMF* is a post-processing of EMF, as the former needs the $\hat{\gamma}$ of latter as input.

\textbf{CEMF*.} This post-processing further considers a common scenario where Byzantine users inject poison values into a small domain only. In such cases, the mean estimation from EMF and EMF* is relatively poor since the frequency counts of some buckets in the histogram can be extremely high, which cannot be restored very accurately by EMF/EMF*. As such, we propose an alternative post-processing, namely EMF* with concentration (CEMF*) by suppressing those buckets $B_{y_j}$ with small frequency counts $\hat{y_j}$ in EMF's output. That is, we treat these buckets as if no poison values are there. Let $\mathcal{B}$ denote the set of buckets for poison values that are chosen by Byzantine users, and $\overline{\mathcal{B}}$ denote the set of remaining buckets. Here we suppress all buckets in $\overline{\mathcal{B}}$ first, which will be proved to achieve the optimal performance later in Theorem \ref{CEMF}. The log-likelihood function then becomes:
\begin{equation*}
\setlength{\abovedisplayskip}{1pt}
\setlength{\belowdisplayskip}{1pt}
\begin{split}
&l(F)=lnPr[V^{'}|F]=ln\prod_{i=1}^{N}Pr[v_i'|F]=
      \\&\sum_{i=1}^{N}ln(\sum_{k=1}^{d}\hat{x_k}Pr[v'_i|v_i\in B_{x_k}]+\sum_{B_{y_j}\in \mathcal{B}}\hat{y_j}Pr[v'_i|v_i\in B_{y_j}]).\\
\end{split}
\end{equation*}

Our objective is to maximize $l(F)$, subject to $\sum_{k=1}^{d}\hat{x_k}=1-\hat{\gamma}$ and $\sum_{B_{y_j}\in \mathcal{B}}\hat{y_j}=\hat{\gamma}$. The maximization can be obtained following Algorithm \ref{the algorithm of REMF}, by initially setting $\hat{y_i}=0$ (i.e., those buckets in $\overline{\mathcal{B}}$).

The following theorem shows CEMF* can improve the estimation accuracy if we have the knowledge about $\overline{\mathcal{B}}$ and suppress some buckets inside. Specifically, CEMF* achieves its optimal performance when all buckets in $\overline{\mathcal{B}}$ are suppressed:

\begin{theorem}
\label{CEMF}
The converged frequency histograms monotonically approach the actual ones with respect to $||\overline{\mathcal{B}}||$, the number of suppressed buckets.
\end{theorem}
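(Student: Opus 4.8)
The plan is to prove the claim by induction on the number of suppressed buckets, showing that suppressing one additional bucket drawn from $\overline{\mathcal{B}}$ cannot increase, and generically strictly decreases, the distance between the converged histogram and the true histogram $F^\ast$. The starting point is to pin down what ``the converged histogram'' is. Since the log-likelihood $l(F)$ is concave (as noted after Equ.~\ref{Object function}) and the suppression conditions $\hat{y}_j=0$, together with $\sum_k \hat{x}_k = 1-\hat\gamma$ and $\sum_{B_{y_j}\in\mathcal{B}}\hat{y}_j=\hat\gamma$, are all linear, the fixed point reached by Algorithm~\ref{the algorithm of REMF} when initialized with the suppressed coordinates at zero is the unique global maximizer of $l(F)$ over the correspondingly constrained probability simplex. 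I would establish this first, so that the object of study is a well-defined constrained MLE rather than a mere iterate.

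Next I would exploit the crucial structural fact underlying CEMF*: every bucket in $\overline{\mathcal{B}}$ carries zero poison mass in reality, i.e. the true histogram satisfies $y_j^\ast = 0$ for all $B_{y_j}\in\overline{\mathcal{B}}$. Consequently $F^\ast$ is feasible for every suppression pattern that only zeroes out buckets in $\overline{\mathcal{B}}$, and these feasible sets form a nested family: suppressing $k+1$ such buckets yields a strictly smaller constraint set than suppressing $k$, with both still containing $F^\ast$. This nestedness, combined with the fact that the truth is never excluded, is what drives the monotonicity.

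The core of the argument is to convert ``maximizing $l(F)$ over a smaller feasible set that still contains $F^\ast$'' into ``moving closer to $F^\ast$.'' The cleanest route is information-geometric. By Theorem~\ref{theorem:remf}, the M-step renormalizes the $P$-masses to the prescribed totals, which is exactly an I-projection (KL-divergence minimization) of the current unnormalized estimate onto the constrained simplex. Adding the constraint $\hat{y}_{j_0}=0$ for one further bucket $B_{y_{j_0}}\in\overline{\mathcal{B}}$ replaces the spurious estimated mass $\hat{y}_{j_0}>0$ by zero and renormalizes the remaining poison buckets in proportion to their $P$-values, while preserving $\sum\hat{y}=\hat\gamma$. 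Invoking the generalized Pythagorean theorem for I-projections onto nested convex sets that both contain $F^\ast$, the divergence $D(F^\ast\,\|\,\hat{F})$ cannot increase, and strictly decreases whenever $\hat{y}_{j_0}>0$; because the true value there is exactly zero, the eliminated error is never reintroduced. Chaining this inequality across the $||\overline{\mathcal{B}}||$ suppressed buckets yields the asserted monotone approach, and via Pinsker's inequality the same monotonicity transfers to a total-variation or $\ell_1$ statement if a metric form is preferred.

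The step I expect to be the main obstacle is the link between the concave-likelihood maximizer, which is defined through the observed counts $c_i$ rather than through the truth, and the true histogram $F^\ast$ itself. The projection inequality controls the divergence to whatever point the likelihood targets, so to conclude that suppression moves us toward $F^\ast$ specifically I would either pass to the population (consistency) regime, where the empirical observation converges to the distribution induced by $F^\ast$ so that the unconstrained target is $F^\ast$ and each correct zero-constraint is genuinely a projection toward it, or argue directly that removing mass from a provably empty bucket and redistributing it among buckets that truly carry mass can only reduce the per-coordinate discrepancy with $F^\ast$. The delicate bookkeeping is the finite-sample coupling between the $\hat{x}$ block and the $\hat{y}$ block under the joint normalization $\sum\hat{x}+\sum\hat{y}=1$; but once that coupling is controlled, the monotonicity itself follows from nestedness plus the projection inequality.
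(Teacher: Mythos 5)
Your route is genuinely different from the paper's, but it has a gap at exactly the point you flag as ``the main obstacle,'' and neither of your proposed repairs closes it. The constrained MLE you want to study is a projection in the \emph{observation} space: maximizing $l(F)$ is equivalent to minimizing $D(c/N \,\Vert\, \mathbf{M}\hat F)$ over the feasible set of histograms $\hat F$. Nesting the feasible sets therefore controls the attained divergence to the empirical counts $c/N$ (which in fact can only \emph{increase} as you add constraints), and the Pythagorean inequality you invoke compares candidate points to that projection target, not to $F^\ast$; it yields no statement about $D(F^\ast\Vert\hat F)$ in parameter space without injectivity and quantitative control of $\mathbf{M}$. Passing to the population regime does not rescue this: there $c/N=\mathbf{M}F^\ast$ is itself feasible for \emph{every} suppression pattern inside $\overline{\mathcal{B}}$, so the attained divergence is $0$ at every level and the chain of inequalities is vacuous. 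Worse, the same observation shows your uniqueness premise fails: because $M_{b_iy_j}=\delta_{ij}$ the poison block is a saturated model, so with no (or few) suppressed buckets the likelihood has a flat set of global maximizers --- the paper's own computation exhibits $\hat x=0$, $\hat y_j=c_j/N$ as one optimum, while the truth-like decomposition is another in the population limit --- and $l(F)$ is concave but not strictly concave. ``The converged histogram'' is therefore not the well-defined constrained MLE your argument needs; it depends on the EM trajectory, and the identifiability you would need in order to talk about distance to $F^\ast$ is precisely what is missing.

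For comparison, the paper avoids all of this by quantifying ``approach the actual histograms'' with a single scalar, the total converged mass $\sum_{k}\hat x_k+\sum_{B_{y_j}\in\mathcal{B}}\hat y_j$ assigned to legitimate buckets: it computes the unsuppressed optimum explicitly via Lagrange multipliers (everything collapses onto the poison block), then argues that each additional suppression in $\overline{\mathcal{B}}$ can only push mass out of spurious buckets and into $\hat x$, so this scalar increases monotonically to $1$. That is a much weaker (and itself rather informal) notion of monotone approach than a decreasing KL or $\ell_1$ distance to $F^\ast$, but it is the notion under which the theorem is actually proved. If you want to keep your information-geometric framing, you would need to (i) restrict to a regime where the maximizer is unique, and (ii) prove a transfer inequality from divergence in the observation space back to the histogram space; as written, the nestedness-plus-Pythagoras step establishes monotonicity of the wrong quantity.
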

\begin{proof}
Let $\mathcal{B}=\{B_{y_1},...,B_{y_t}\}$, and $\overline{\mathcal{B}}=\{B_{y_{t+1}},...,B_{y_{d'}}\}$. We start from the state where all buckets hold non-zero values, i.e., $B_{y_i}\neq 0, i\in\{1,...,d'\}$, and reconstruct the frequency histogram for poison values in $[-C, C]$. The likelihood estimator in Equ. \ref{Object function} becomes
\begin{equation*}
\setlength{\abovedisplayskip}{2pt}
\setlength{\belowdisplayskip}{2pt}
\begin{split}
l(F)&=\sum_{i=1}^{N}ln(\sum_{k=1}^{d}\hat{x_k}Pr[v_i'|v_i\in B_{x_k}]\\&+\sum_{j=1}^{d'}\hat{y_j}Pr[v_i'|v_i\in B_{y_j}])
      \\& =\sum_{t=1}^{d'}c_tln(\sum_{k=1}^{d}\hat{x_k} M_{b_t x_k} +\sum_{j=1}^{d'}\hat{y_j} M_{b_t y_j}).
\end{split}
\end{equation*}
Note that $\sum_{k=1}^{d} \hat{x_k}+\sum_{j=1}^{d'}\hat{y_j}=1$, we employ the Lagrangian multiplier method to derive the extremum. The Lagrangian function can be written as:
\begin{equation*}
\setlength{\abovedisplayskip}{2pt}
\setlength{\belowdisplayskip}{2pt}
L(F)=l(F)+\lambda(\sum_{k=1}^{d} \hat{x_k}+\sum_{j=1}^{d'}\hat{y_j}-1).
\end{equation*}
Let all first-order partial derivatives of L w.r.t. $\hat{x_k}$ and $\hat{y_j}$ equal zero
\begin{equation*}
\begin{split}
\setlength{\abovedisplayskip}{1pt}
\setlength{\belowdisplayskip}{1pt}
&\frac{\partial L(F)}{\partial \hat{x_k}}=\sum_{t=1}^{d'}\frac{c_tM_{b_tx_k}}{\sum_{k=1}^{d}\hat{x_k}M_{b_tx_k}+\hat{y_t}}+\lambda=0, \  k\in\{1,...,d\}\\
 &\frac{\partial L(F)}{\partial \hat{y_j}}=\sum_{t=1}^{d'}\frac{c_tM_{b_ty_j}}{\sum_{k=1}^{d}\hat{x_k}M_{b_tx_k}+\hat{y_t}}+\lambda=0,\  j\in\{1,...,d'\}
\end{split}
\end{equation*}
we have:
\begin{equation*}
\label{resultofwhole}
\setlength{\abovedisplayskip}{2pt}
\setlength{\belowdisplayskip}{2pt}
\hat{x_k}=0,\ k\in\{1,...,d\},\ \ \hat{y_j}=\frac{c_j}{N}, \ j\in\{1,...,d'\},\ \lambda=-N.
\end{equation*}
This result shows all collected values converge to poison values if no bucket is suppressed, and we can obtain:
\begin{equation*}
\setlength{\abovedisplayskip}{2pt}
\setlength{\belowdisplayskip}{2pt}
(\sum_{k=1}^{d}\hat{x_k}+\sum_{j=1}^{t}\hat{y_j})\bigg|_{y_{i}\neq 0, i\in\{1,...,d'\}}=\sum_{j=1}^{t}\frac{c_{j}}{N}.
\end{equation*}

When we suppress the bucket $B_{y_{d'}}$ (by setting $\hat{y_{d'}}=0$) and carry out EMF, the collected values in $B'_{b_{d'}}$ can only converge to $B_{x_k}(k\in\{1,...,d\})$, but not $B_{y_j}(j\in\{1,...,d'\})$. Hence, every $\hat{x_k}$ will increase. Therefore, suppressing $B_{y_{d'}}$ leads to the increase of all $\hat{x_k}$, which in turn results in the decrease of all $\hat{y_j}$. However, since the decrease of $\hat{y_j}(j\in\{1,...,t\})$ is a part of increment of $\hat{x}$, we can figure out $(\sum_{k=1}^{d}\hat{x_k}+\sum_{j=1}^{t}\hat{y_j})\bigg|_{y_{i}\neq 0, i\in\{1,...,d'\}}\leq (\sum_{k=1}^{d}\hat{x_k}+\sum_{j=1}^{t}\hat{y_j})\bigg|_{y_{d'}=0}$.
Suppress all buckets one by one in $\overline{\mathcal{B}}$ similarly, we have: \\$
(\sum_{k=1}^{d}\hat{x_k}+\sum_{j=1}^{t}\hat{y_j})\bigg|_{y_{i}\neq 0, i\in\{1,...,d'\}}\leq (\sum_{k=1}^{d}\hat{x_k}+\sum_{j=1}^{t}\hat{y_j})\bigg|_{y_{d'}=0} \\ \leq ...\leq (\sum_{k=1}^{d}\hat{x_k}+\sum_{j=1}^{t}\hat{y_j})\bigg|_{y_{d'}=0,...,y_{t+1}=0}.$

When the number of suppressed buckets in $\overline{\mathcal{B}}$ increases, the corresponding interference of $\overline{\mathcal{B}}$ decreases. Therefore, the collected values more accurately converge to the buckets that they should belong to, and thus achieve a better convergence result.

After suppressing all buckets in $\overline{\mathcal{B}}$, all collected values will convergence to normal values and poison values in $B_{y_j}(j\in{1,...,t})$ and we can infer that$
(\sum_{k=1}^{d}\hat{x_k}+\sum_{j=1}^{t}\hat{y_j})\bigg|_{y_{d'}=0,...,y_{t+1}=0}=1,$ which is the optimal case where none of the collected values will converge to buckets in $\overline{\mathcal{B}}$.
\end{proof}

\subsection{Aggregating Inter-group Estimations}After intra-group mean estimations, we finally combine all these estimated means into a single mean. However, the naive method of averaging all of them in equal weights does not provide the optimal data utility --- values perturbed with larger $\epsilon_t$ have higher accuracy. Hence, the group they belong to deserves higher weight.\footnote{We ignore the influence of poison values, which have been addressed in the previous steps.} At the end of this section, inspired by~\cite{yiwen2018utility}, we propose an aggregation strategy that can linearly combine all estimated intra-group mean values $\{M_1,..., M_t,..., M_h\}$ into $\tilde{M}$, while achieving minimal overall variance. Since the variance is related to the true value that we don't know, we thus only consider the minimal variance under the worst-case, i.e., when all the original values are either -1 or 1.

\begin{algorithm}\small
\caption{Mean Aggregation}
\begin{flushleft}
 {\bf Input:} Mean values $\{M_1,..., M_t,..., M_h\}$, privacy budgets $\{\epsilon_1,...,\epsilon_t,...,\epsilon_h$\} and the estimated numbers of normal users $\{\hat{n_1},...,\hat{n_t},...,\hat{n_h}$\}\\
 {\bf Output:}
The aggregated mean $\tilde{M}$
\end{flushleft}
\begin{algorithmic}[1]\small
\State Initialization:  $w_t=0,\ t=\{1,2,...,h\}$
\For{$t=1\ to\ h$}
\State
$w_t=[B_t\sum_{i=1}^{h}\frac{1}{B_i}]^{-1}$, $B_t=\hat{n_t}(\frac{1}{e^{\epsilon_{t}/2}-1}+\frac{e^{\epsilon_{t}/2}+3}{3(e^{\epsilon_{t}/2}-1)^2})$
\EndFor
\\
$\tilde{M}=\sum_{t=1}^{h}w_tM_t$
\State \Return $\tilde{M}$
\end{algorithmic}
\label{Combination}
\end{algorithm}
Algorithm \ref{Combination} shows the detailed optimization procedure of such aggregation. Specifically, the estimated numbers of normal users in $G_t$ can be obtained from $\hat{n_t}=(N_t-\hat{m_t})\frac{\epsilon_t}{\epsilon}$. All weights $w_t$ of group $G_t$ are initially set to 0 (line 1). Then they are assigned by the formula in line 3, based on which all mean values are combined in line 4. This aggregation satisfies $\epsilon$-LDP according to the parallel composition theorem of LDP~\cite{li2016differential}. The following theorem guarantees the weight assignment in line 3 is optimal:
\begin{theorem}
\label{theoremofvar}
The variance of $\tilde{M}$ reaches the minimum, if the following formula holds:
\begin{equation*}
\setlength{\abovedisplayskip}{2pt}
\setlength{\belowdisplayskip}{2pt}
w_t=\frac{1}{B_t\sum_{i=1}^{h}\frac{1}{B_i}},
\end{equation*}
where $B_t=\hat{n_t}(\frac{1}{e^{\epsilon_t/2}-1}+\frac{e^{\epsilon_t/2}+3}{3(e^{\epsilon_t/2}-1)^2})$.

The minimal variance is:
{\setlength{\abovedisplayskip}{1pt}
\setlength{\belowdisplayskip}{1pt}
\begin{equation*}
Var(\tilde{M})_{min}=[\sum_{t=1}^{h}\frac{\hat{n_t}^2}{B_t}]^{-1}.
\end{equation*}}
\end{theorem}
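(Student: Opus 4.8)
The plan is to recognize the statement as a textbook minimum-variance linear-aggregation problem and solve it with a Lagrange multiplier (or, in one line, Cauchy--Schwarz). First I would note that the groups $G_1,\dots,G_h$ arise from an independent random assignment of users and are perturbed independently, so the intra-group estimators $M_1,\dots,M_h$ are mutually independent. Each $M_t$, as built in Equ.~\ref{mean in Gt} after the poison contribution has been subtracted, is an unbiased estimator of the normal-user mean $O$; hence for $\tilde M=\sum_{t=1}^h w_t M_t$ to stay unbiased we must enforce $\sum_{t=1}^h w_t=1$, and independence gives
\begin{equation*}
Var(\tilde M)=\sum_{t=1}^h w_t^2\,Var(M_t).
\end{equation*}
The theorem thus reduces to minimizing this quadratic form subject to a single linear constraint.

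The crucial intermediate step is evaluating $Var(M_t)$. Since only the normal component survives, $M_t$ is effectively the empirical mean of the $\hat n_t$ normal values perturbed by PM with budget $\epsilon_t$. Invoking the known per-report variance of the Piecewise Mechanism together with the stated worst-case assumption that every input is $-1$ or $1$ (so the input-dependent term attains its maximum $v^2=1$), each perturbed normal value contributes variance $\frac{1}{e^{\epsilon_t/2}-1}+\frac{e^{\epsilon_t/2}+3}{3(e^{\epsilon_t/2}-1)^2}$. Averaging over the $\hat n_t$ independent normal contributions yields
\begin{equation*}
Var(M_t)=\frac{1}{\hat n_t}\left(\frac{1}{e^{\epsilon_t/2}-1}+\frac{e^{\epsilon_t/2}+3}{3(e^{\epsilon_t/2}-1)^2}\right)=\frac{B_t}{\hat n_t^{\,2}},
\end{equation*}
which is exactly $B_t$ up to the $\hat n_t^{\,2}$ normalization appearing in the statement.

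With $Var(M_t)=B_t/\hat n_t^{\,2}$ in hand, I would form the Lagrangian $\mathcal L=\sum_t w_t^2 B_t/\hat n_t^{\,2}-\lambda(\sum_t w_t-1)$ and set $\partial\mathcal L/\partial w_t=0$, giving $w_t=\lambda\hat n_t^{\,2}/(2B_t)$; substituting into $\sum_t w_t=1$ fixes $\lambda$ and produces weights proportional to $\hat n_t^{\,2}/B_t$. Back-substituting collapses the objective to $Var(\tilde M)_{\min}=[\sum_t \hat n_t^{\,2}/B_t]^{-1}$, the claimed value. The same conclusion follows instantly from Cauchy--Schwarz with $\sigma_t^2=Var(M_t)$:
\begin{equation*}
1=\Big(\sum_t w_t\Big)^2=\Big(\sum_t (w_t\sigma_t)\,\sigma_t^{-1}\Big)^2\le\Big(\sum_t w_t^2\sigma_t^2\Big)\Big(\sum_t \sigma_t^{-2}\Big),
\end{equation*}
so $Var(\tilde M)\ge[\sum_t\sigma_t^{-2}]^{-1}$ with equality precisely at the inverse-variance weights $w_t\propto\sigma_t^{-2}$.

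I expect the real obstacle to lie in the variance bookkeeping of the second step rather than in the optimization, which is routine. Specifically I must justify treating the poison-removal term as deterministic so it does not inflate $Var(M_t)$, reconcile the multiple reports issued by small-$\epsilon_t$ groups against the count $\hat n_t$ of distinct normal users, and confirm that $v^2=1$ genuinely maximizes the PM variance so the worst-case bound is tight. Since the groups are equal-sized, the $\hat n_t$ are (approximately) common and cancel in the weight ratio, which reconciles my $w_t\propto\hat n_t^{\,2}/B_t$ with the compact form $w_t=\big(B_t\sum_i 1/B_i\big)^{-1}$ in the statement. Finally, because $f(w)=\sum_t w_t^2 B_t/\hat n_t^{\,2}$ is a positive-definite quadratic restricted to the affine set $\sum_t w_t=1$, the multiplier's stationary point is the unique global minimum, completing the argument.
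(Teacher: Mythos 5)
Your proposal is correct and follows essentially the same route as the paper's own proof: decompose $Var(\tilde M)=\sum_t w_t^2 Var(M_t)$ under independence and $\sum_t w_t=1$, evaluate the worst-case Piecewise Mechanism variance at $v_{tj}=\pm 1$ to get $Var(M_t)=B_t/\hat n_t^{\,2}$, and minimize via a Lagrange multiplier to obtain $w_t\propto \hat n_t^{\,2}/B_t$ and $Var(\tilde M)_{\min}=[\sum_t \hat n_t^{\,2}/B_t]^{-1}$. Your closing remark is also well taken: the passage from $w_t\propto\hat n_t^{\,2}/B_t$ to the stated closed form $w_t=\bigl(B_t\sum_i 1/B_i\bigr)^{-1}$ requires the $\hat n_t$ to be (approximately) equal across groups, an assumption the paper uses implicitly (its expression for the multiplier $C_0$ contains a $t$-dependent $\hat n_t$) but does not state.
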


\begin{proof}
Let $v_{tj}$ denote the $j$-th value in group $G_t$, $v'_{tj}$ denote the perturbed $v_{tj}$, and $M_t$ denote the mean value of $v'_{tj}$. The variance of $\tilde{M}$, which is a linear combination of $M_t$, can be written as:
\begin{equation}
\setlength{\abovedisplayskip}{2pt}
\setlength{\belowdisplayskip}{2pt}
\begin{split}
Var(\tilde{M})&=Var(\sum_{t=1}^{h}w_{t}M_t)=\sum_{t=1}^{h}w_{t}^{2}Var(M_t)
\\&=\sum_{t=1}^{h}w_{t}^{2}Var(\frac{\sum_{j=1}^{\hat{n_t}}v'_{tj}}{\hat{n_t}})
        =\sum_{t=1}^{h}\frac{w_{t}^{2}}{\hat{n_t}^{2}}\sum_{j=1}^{\hat{n_t}}Var(v'_{tj})
\end{split}
\label{variance of M}
\end{equation}
where $\sum{w_t}=1$.

Since $Var(v'_{tj})$ in Equ. \ref{variance of M} relies on the input of each user, we consider the worst-case at the maximum variance, i.e., all inputs $v_{tj}$ are either 1 or -1. The worst-case variance $Var_{worst}(v'_{tj})$ can be expressed as:
\begin{equation*}
\setlength{\abovedisplayskip}{2pt}
\setlength{\belowdisplayskip}{2pt}
\begin{split}
Var_{worst}(v'_{tj})&=\frac{v^2_{tj}}{e^{\epsilon_t/2}-1}+\frac{e^{\epsilon_t/2}+3}{3(e^{\epsilon_t/2}-1)^2}\bigg|_{v_{tj}=\pm 1}\\
    &=\frac{1}{e^{\epsilon_t/2}-1}+\frac{e^{\epsilon_t/2}+3}{3(e^{\epsilon_t/2}-1)^2}.
  \end{split}
\end{equation*}
Let $B_t=\hat{n_t}Var_{worst}(v'_{tj})$. Equ. \ref{variance of M} can be rewritten as:
 \begin{equation}
 \label{target3}
\begin{split}
\setlength{\abovedisplayskip}{0pt}
\setlength{\belowdisplayskip}{0pt}
  Var(\tilde{M})=\sum_{t=1}^{h}\frac{w_{i}^{2}}{\hat{n_t}^2}B_t.
  \end{split}
\end{equation}
We regard the variance as a function of $w_t$, and the minimal variance is the extreme point of Equ. \ref{target3}. By the Lagrangian method, we have:
 \begin{equation*}
 \setlength{\abovedisplayskip}{2pt}
\setlength{\belowdisplayskip}{2pt}
\mathcal{L}=\sum_{t=1}^{h}\frac{w_t^{2}}{\hat{n_t}^2}B_t+C_0(1-\sum_{t=1}^{h}w_t).
\end{equation*}
The first partial derivatives of $\mathcal{L}$ w.r.t. $w_t$ is $\setlength{\abovedisplayskip}{2pt}
\setlength{\belowdisplayskip}{2pt}
\frac{\partial \mathcal{L}}{\partial w_t}=\frac{2w_t}{\hat{n_t}^2}B_t-C_0.$
Let $\frac{\partial \mathcal{L}}{\partial w_t}=0$, then we have $wt=\frac{C_0\hat{n_t}^2}{2B_t}$. Through the restriction $\sum_{t=1}^{h}w_t=1$, we figure out
\begin{equation*}
\setlength{\abovedisplayskip}{2pt}
\setlength{\belowdisplayskip}{2pt}
C_0=\frac{2}{\hat{n_t}^2\sum_{t=1}^{h}\frac{1}{B_t}},\ w_t=\frac{1}{B_t\sum_{i=1}^{h}\frac{1}{B_i}}.
\end{equation*}
And the final minimal variance of $\tilde{M}$ is:
\begin{equation*}
\setlength{\abovedisplayskip}{2pt}
\setlength{\belowdisplayskip}{2pt}
{Var(\tilde{M})}_{min}=[\sum_{t=1}^{h}\frac{\hat{n_t}^2}{B_t}]^{-1}.
\end{equation*}
 \label{proof of cutoffpoint}
\end{proof}

\subsection{Discussion}
\color{black}
\textbf{Extension to Other Perturbation Mechanisms and Statistics.} DAP can also work with other LDP perturbation mechanisms such as Square Wave (SW)~\cite{li2020estimating}. SW can estimate the distribution and thus the mean by the Expectation Maximization with Smoothing (EMS) protocol. The original value range of SW is $[0, 1]$, and the perturbed data range is $[-b, 1+b]$. To incorporate SW into DAP, we can design the matrix $M$ for SW similarly to that for PM in Section \ref{EMF}, where the key is to change the size and the transformation probability in $M$ according to the SW mechanism. In addition, similar to Theorem~\ref{theo of cutoffpoint}, $O'$ for SW can be obtained by estimating the mean with EMS upon the removal of 50\% values from $V'$. Then we can estimate the poison value distribution with EMF (resp. EMF*, CEMF*), remove these poison values, and estimate the mean of the rest with EMS. Finally, the same mean aggregation algorithm (Algorithm \ref{Combination}) can still be applied for mean aggregation from different groups.

Furthermore, DAP is not limited to mean estimation and can be generalized to other statistics estimation, e.g., frequency estimation. This is because $\hat{x}$ derived in Algorithms \ref{the algorithm of EMF} and \ref{the algorithm of REMF} is essentially the frequency histogram of numerical/categorical data. To exemplify how to use DAP to estimate frequency of categorical data, let us consider k-RR~\cite{kairouz2014extremal, wang2016private}, a common LDP mechanism for categorical data where $O'$ is not well-defined. We can still divide the categorical values into two ``sides" to determine which categorical value is poisoned, whose approach is similar to probing the poisoned side on numerical data by running Algorithm \ref{Poisoned side2}. If more than one categorical value is poisoned, we just need to further divide sides and run Algorithm \ref{Poisoned side2} recursively and locate those categorical values which result in smaller variance of $\hat{x}$ than others. Once the poisoned categorical value is determined, the remaining steps for frequency estimation in DAP is the same as mean estimation on numerical data.

\textbf{Robustness of Evasion.} Let us assume that Byzantine users are aware of our proposed DAP technique and try to evade it by injecting a
fraction of poison values onto the opposite side of the poisoned side. Without loss of generality, a total of $m$ Byzantine users inject $(1-a)m$ {\bf true poison values} on $[O', C]$ and $am$ {\bf evasive poison values} on $[-C, O']$, where $a$ is the percentage of evasive poison values. When $a$ is small, we can expect DAP to ignore the evasive poison values by choosing the correct poisoned side. When $a$ becomes large, DAP may be affected by these evasive poison values by choosing the wrong side, but such strong evasion also weakens the utility of poison values. Formally, the utility lower bound of poison values $U_{max}$ is given by the case where all true poison values are at $C$ and $a=0$, but none of them is removed:
{\setlength{\abovedisplayskip}{1pt}
\setlength{\belowdisplayskip}{1pt}
\begin{equation}
U_{max}=\frac{mC+nO}{m+n}-O
\label{Umax}
\end{equation}}
On the other hand, the utility lower bound is also given by the case when all evasive values are slightly less than $O'$: 
{\setlength{\abovedisplayskip}{2pt}
\setlength{\belowdisplayskip}{2pt}
\begin{equation}
U_{eva}=\frac{m(aO'+(1-a)C)+nO}{m+n}-O
\label{Ueva}
\end{equation}}
Since both cases must be satisfied simultaneously, the lowest utility loss for Byzantine users is
{\setlength{\abovedisplayskip}{1pt}
\setlength{\belowdisplayskip}{1pt}
\begin{equation}
U_{max}-U_{eva}=\frac{ma(C-O')}{m+n}
\label{Udelta}
\end{equation}}
From Equ. \ref{Udelta}, while there is a minimum $a$ to satisfy for evasive poison values to make DAP choose the wrong poisoned side, increasing $a$ only weakens the utility of poison values. We also conduct some experiments that coincide with this analysis in our Section \ref{Exp}.
\color{black}

\section{Experimental Results}
\label{Exp}
\begin{figure*}[th]
\centering
\subfigure[\textbf{Beta(2,5)}, O=-0.3994.]{
\begin{minipage}[t]{0.21\linewidth}
\centering
\includegraphics[width=1\textwidth]{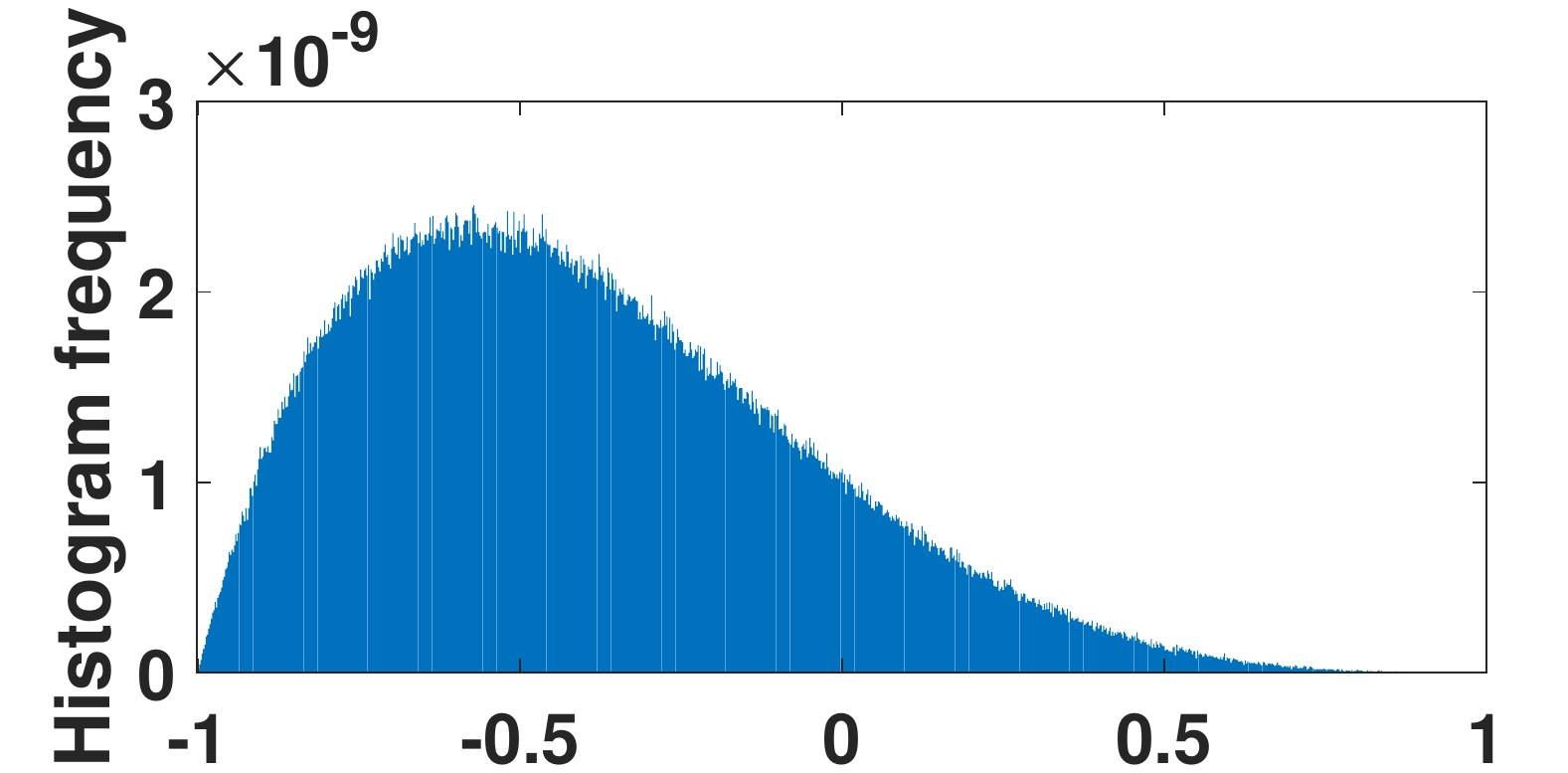}
\end{minipage}
}%
\subfigure[\textbf{Beta(5,2)}, O=0.4136.]{
\begin{minipage}[t]{0.21\linewidth}
\centering
\includegraphics[width=1\textwidth]{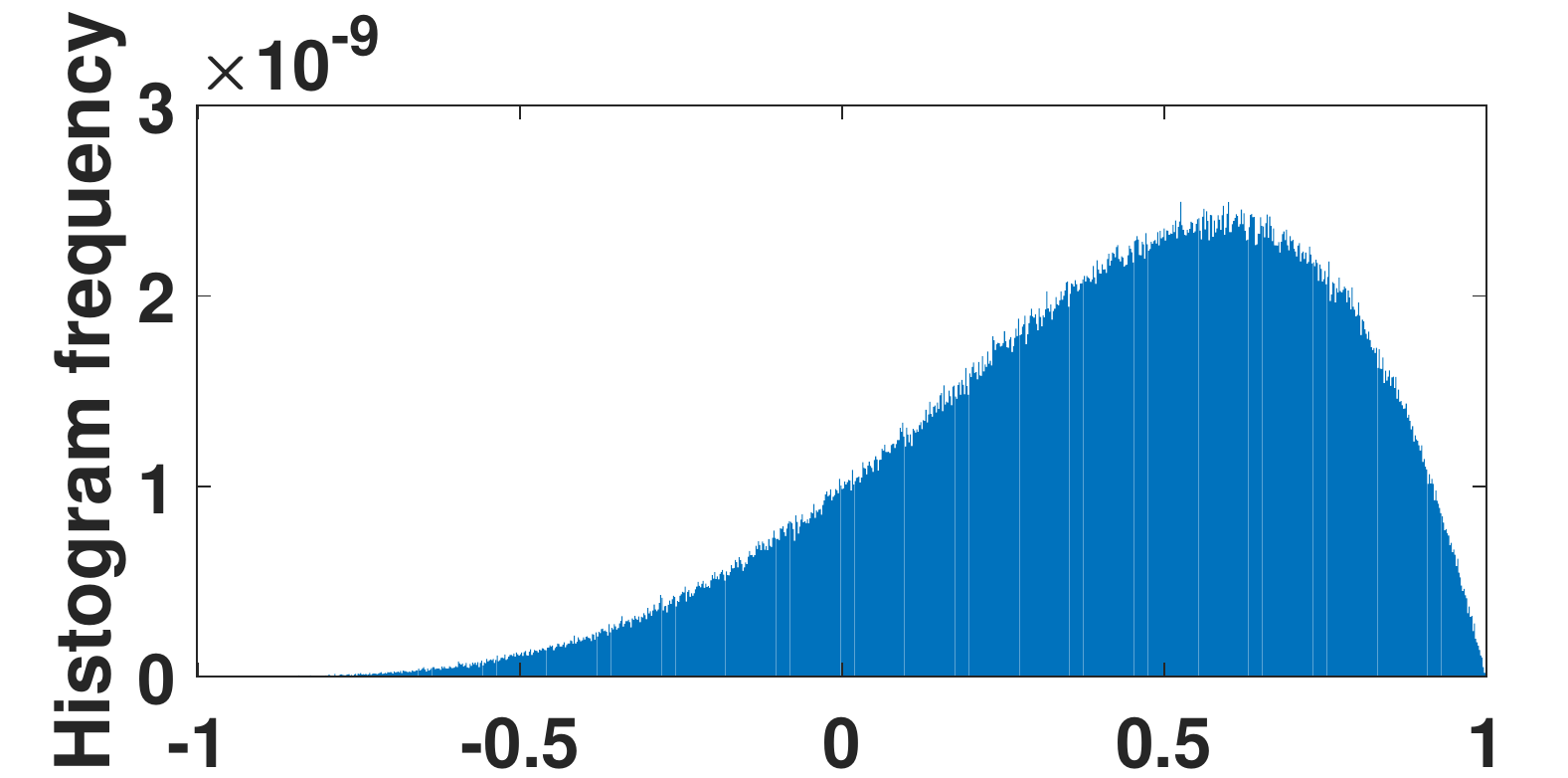}
\end{minipage}%
}%
\subfigure[\textbf{Taxi}, O=0.1190.]{
\begin{minipage}[t]{0.21\linewidth}
\centering
\includegraphics[width=1\textwidth]{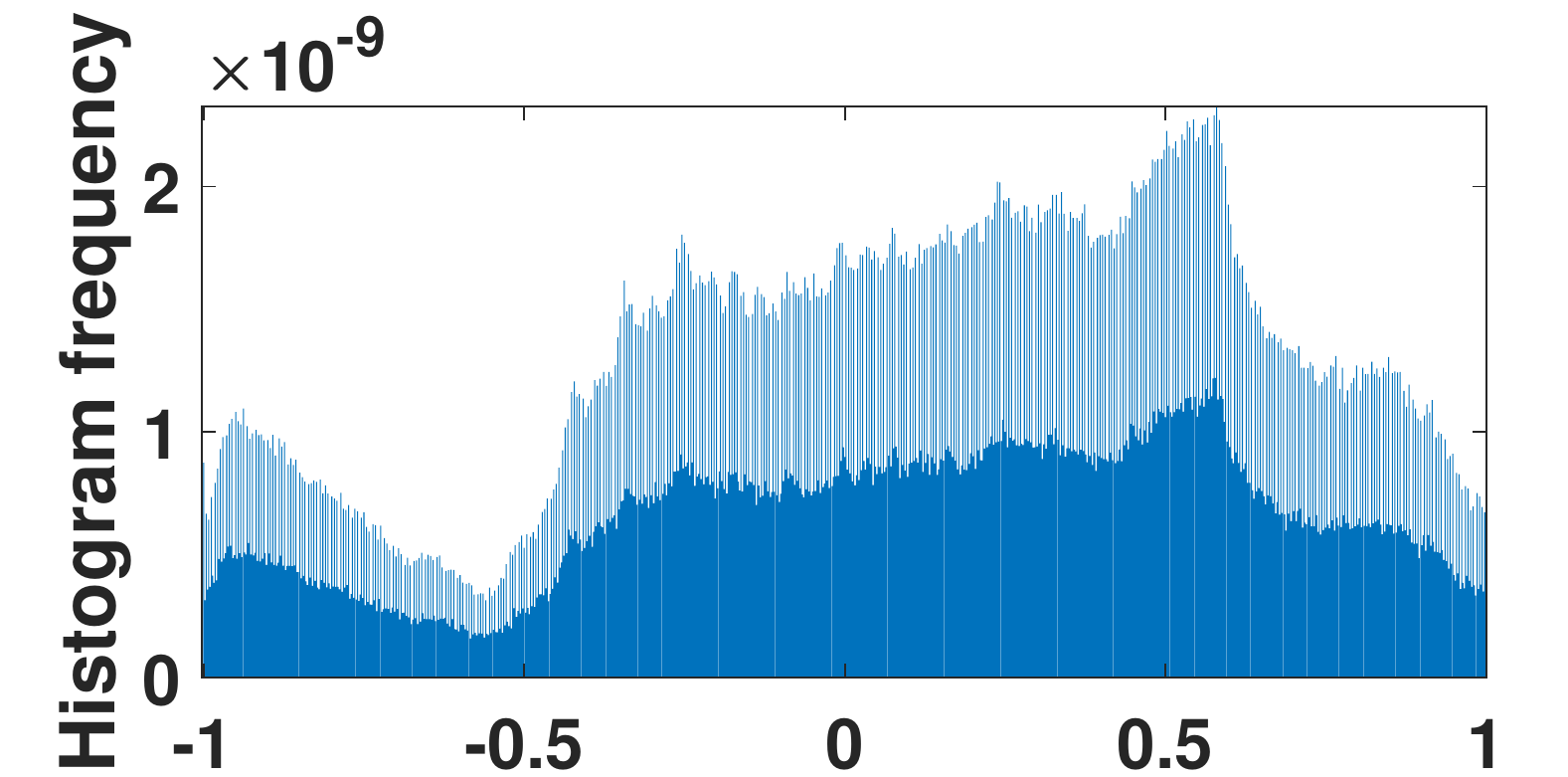}
\end{minipage}
}%
\subfigure[\textbf{Retirement}, O=-0.6240.]{
\begin{minipage}[t]{0.21\linewidth}
\centering
\includegraphics[width=1\textwidth]{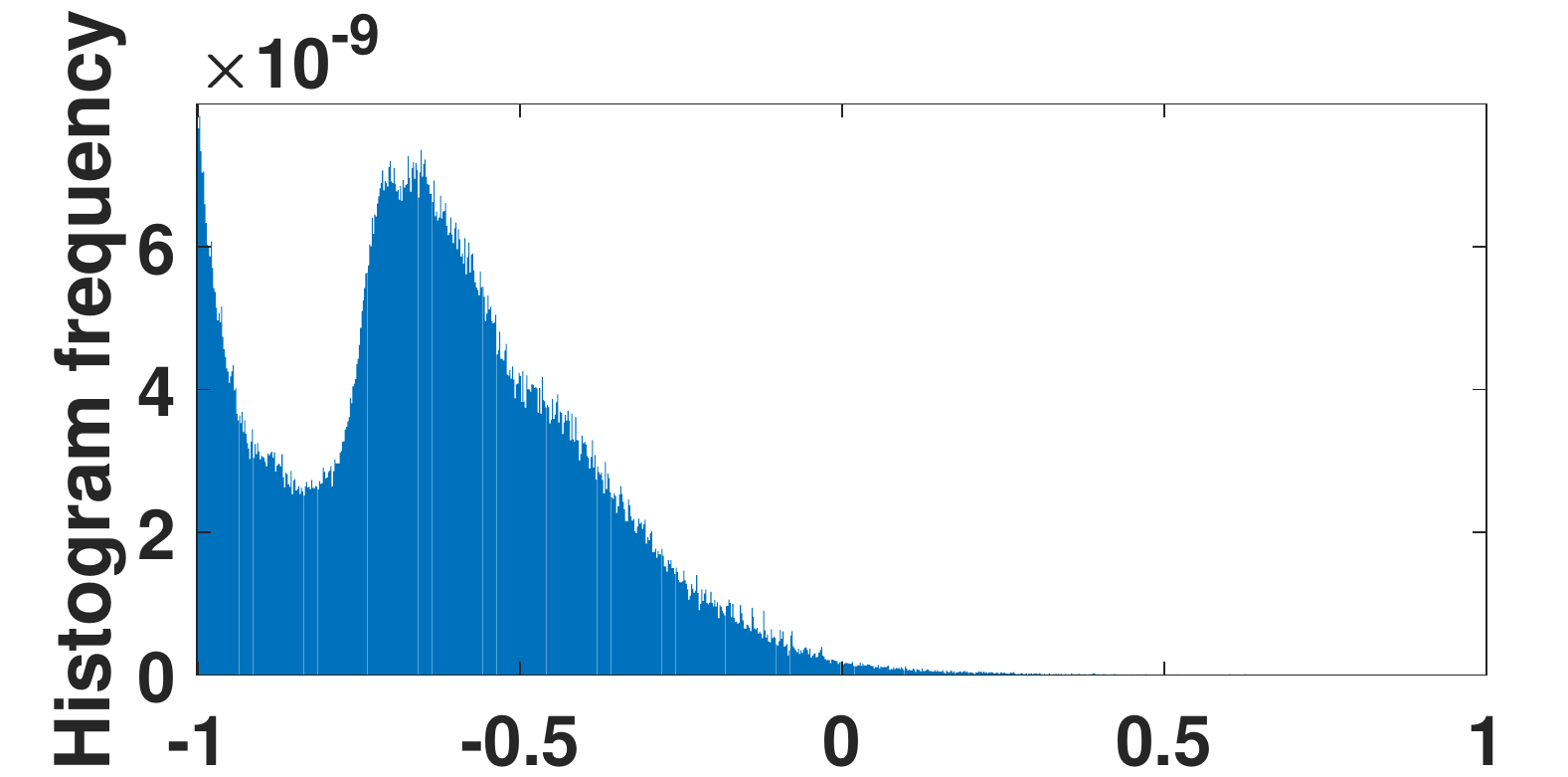}
\end{minipage}
}%
\centering
\vspace{-0.1in}
\caption{Normalized frequencies of datasets}
\label{The distribution of datasets}
\end{figure*}
In this section, we evaluate the performance of both EMF and DAP, on both real-world and synthetic datasets. \textcolor{black}{The source code and datasets are available in~\cite{CODEFORDAP}.}
\subsection{Experiment Setup}
\textbf{Datasets.} We adopt two synthetic and two real-world numerical datasets. $\textbf{Beta(2,5)}$ and $\textbf{Beta(5,2)}$ are two synthetic datasets drawn from Beta distribution~\cite{johnson1995continuous}, each with 1,000,000 samples in the interval $[0,1]$. \textbf{Taxi}~\cite{taxi2018} is the pick-up time in a day extracted from 2018 January New York Taxi data, which contains 1,048,575 integers from 0 to 86,340 (the number of seconds in 24 hours). \textbf{Retirement}~\cite{retirement2013} is extracted from the San Francisco employee retirement plans, which contains the salary and benefits paid to city employees since fiscal year 2013. We employ the total compensation, which comprises a subset of 606,507 items in the interval $[10000, 60000]$. All these datasets are then normalized into interval $[-1,1]$. The normalized frequency histograms of all datasets are plotted in Fig. \ref{The distribution of datasets}.

\textbf{Parameters Setting.} We use PM as our default perturbation mechanism, and set the right side as the default poisoned side. Therefore, the poison values are injected into poison values' range $[r_l, r_r] \subset [O, C]$, denoted by $Poi_{[r, l]}$ in the figures. To evaluate the scalability of our protocol, we apply our schemes with different parameters, including poison values' ranges and poison value distributions. We set a series of $r_l<r_r$ for different poison values' ranges. When the poisoned side is estimated as the right-hand one by Algorithm~\ref{Poisoned side2}, we set the probing range to $[O', C]$  (resp. $[-C, O']$ if the side is to the left).
The proportion of Byzantine users is set to 25\% and the poison values are uniformly distributed in $[r_l, r_r]$ by default. The termination condition for EMF/EMF*/CEMF* is $|l(F)^t-l(F)^{t+1}|<\tau$, where $\tau$ is set to $0.01e^\epsilon$. We choose $d'=\lfloor \sqrt{N}\rfloor$ and $d=\lfloor\frac{d'(e^{\epsilon/2}-1)}{e^{\epsilon/2}+1}\rfloor$.

\subsection{Effectiveness of EMF}
\begin{table}[h]
\scriptsize
        \caption{Variance of reconstructed normal data}
	\label{tab1}
\centering
\begin{tabular}{|c|c|ccccc|}
\hline
\multirow{2}{*}{$Poi_{[l, r]}$}     & \multirow{2}{*}{Side} & \multicolumn{5}{c|}{Privacy budget   $\epsilon$}                                                                                                                                      \\ \cline{3-7}
                                    &                       & \multicolumn{1}{c|}{2}                 & \multicolumn{1}{c|}{0.5}               & \multicolumn{1}{c|}{0.25}              & \multicolumn{1}{c|}{0.125}             & 0.0625            \\ \hline
\multirow{2}{*}{$[\frac{3}{4}C,C]$} & L                     & \multicolumn{1}{c|}{3.8E-03}          & \multicolumn{1}{c|}{1.4E-03}          & \multicolumn{1}{c|}{2.7E-03}          & \multicolumn{1}{c|}{2.7E-03}          & 1.5E-03          \\ \cline{2-7}
                                    & R                     & \multicolumn{1}{c|}{\textbf{2.0E-07}} & \multicolumn{1}{c|}{\textbf{4.9E-06}} & \multicolumn{1}{c|}{\textbf{1.1E-05}} & \multicolumn{1}{c|}{\textbf{1.6E-05}} & \textbf{1.9E-05} \\ \hline
\multirow{2}{*}{$[\frac{1}{2}C,C]$} & L                     & \multicolumn{1}{c|}{1.4E-04}          & \multicolumn{1}{c|}{5.7E-04}          & \multicolumn{1}{c|}{1.1E-03}          & \multicolumn{1}{c|}{1.0E-03}          & 7.0E-04          \\ \cline{2-7}
                                    & R                     & \multicolumn{1}{c|}{\textbf{4.1E-07}} & \multicolumn{1}{c|}{\textbf{4.2E-06}} & \multicolumn{1}{c|}{\textbf{9.0E-06}} & \multicolumn{1}{c|}{\textbf{1.3E-05}} & \textbf{1.4E-05} \\ \hline
\multirow{2}{*}{$[O,\frac{1}{2}C]$} & L                     & \multicolumn{1}{c|}{7.6E-06}          & \multicolumn{1}{c|}{1.7E-05}          & \multicolumn{1}{c|}{3.4E-05}          & \multicolumn{1}{c|}{4.3E-05}          & 4.1E-05          \\ \cline{2-7}
                                    & R                     & \multicolumn{1}{c|}{\textbf{3.0E-07}} & \multicolumn{1}{c|}{\textbf{3.9E-06}} & \multicolumn{1}{c|}{\textbf{8.4E-06}} & \multicolumn{1}{c|}{\textbf{1.2E-05}} & \textbf{1.3E-05} \\ \hline
\multirow{2}{*}{$[O, C]$} & L                     & \multicolumn{1}{c|}{1.3E-05}          & \multicolumn{1}{c|}{1.8E-04}          & \multicolumn{1}{c|}{4.3E-04}          & \multicolumn{1}{c|}{5.0E-04}          & 4.4E-04          \\ \cline{2-7}
                                    & R                     & \multicolumn{1}{c|}{\textbf{2.7E-07}} & \multicolumn{1}{c|}{\textbf{3.1E-06}} & \multicolumn{1}{c|}{\textbf{5.8E-06}} & \multicolumn{1}{c|}{\textbf{7.4E-06}} & \textbf{7.9E-06} \\ \hline
\end{tabular}
\label{Poisoned side}
\vspace{-3mm}
\end{table}

\textbf{Poisoned Side Estimation.} We conduct experiments on both sides to evaluate the effectiveness of Algorithm \ref{Poisoned side2} that determines the poisoned side. We can see from results on $\textbf{Taxi}$  shown in Table \ref{Poisoned side} that EMF on the right side always has a smaller variance under different privacy budgets and poison values' ranges. Based on this, Algorithm \ref{Poisoned side2} can determine the correct poisoned side, which is the right side with smaller variance. The correct estimation is attributed to two reasons. On one hand, $\hat{x}$ of the correct side can have a lower variance than the incorrect side, which is proved in Theorem \ref{lemma of poisoned direction and the proportion of Byzantine users}. On the other hand, even if the wrong side is initially chosen, all poison values will be assigned to the non-poisoned side and cause a surge in the right of the distribution for normal users, which will also lead to a larger variance.

\textbf{Proportion of Byzantine Users.} In the first set of experiments, we verify the accuracy of the proportion of Byzantine users obtained by EMF w.r.t. $\epsilon$ in Fig. \ref{fig of EMF}.

Given a fixed privacy budget, we compare the four poison values' ranges $[3C/4,C],[C/2, C],[O, C/2]$ and $[O,C]$ in Fig. \ref{fig of EMF} (a) (b). When $\epsilon$ is large, the estimated proportion deviates from the real proportion of Byzantine users $\gamma$. As $\epsilon$ decreases, the result becomes more accurate. The reason is that according to Theorem \ref{lemma of poisoned direction and the proportion of Byzantine users}, when $\epsilon$ becomes smaller, EMF distinguishes normal and poison values better. In these figures, regardless of poison values' ranges, datasets and $\gamma$, $|\hat{\gamma}-\gamma|$ converges to 0 as $\epsilon\rightarrow 0$. This shows the robustness of EMF even under adverse conditions where many normal values are mistreated as poison values.

\color{black}Fig. \ref{fig of EMF} (c) evaluates $\hat{\gamma}$ when there is no poison value in the system, and the proportion of Byzantine users estimated by EMF is equivalent to the false positive rate ($fpr$). When $\epsilon_0$ gets smaller, $fpr$ becomes less relevant to normal users' distribution. According to Theorem 3, the smaller $\epsilon_0$ is, the more accurate the estimated proportion of Byzantine users becomes. We observe for all 4 datasets and $\epsilon_0=\frac{1}{16}$, and the false positives range from 0.02 to 0.04, which are quite small. In Fig. \ref{fig of EMF} (d), we evaluate $\hat{\gamma}$ when Byzantine users inject their poison values as 1 and then perturb as normal users, which is also known as an input manipulation attack (IMA)~\cite{li2022fine}. We observe that $\hat{\gamma}$ is between $0.03\sim 0.04$, which means EMF cannot filter out these Byzantine users due to the perturbation. Nonetheless, we can further enhance the utility with existing detection techniques (such as k-means clustering~\cite{li2022fine}), which will be shown in Fig.\ref{discussion} (b).
\color{black}

\begin{figure*}[htbp]
 \vspace{-0.15in}
\hspace{-2.3in}
  {
  \begin{minipage}{3cm}
   \centering
   \includegraphics[scale=0.38]{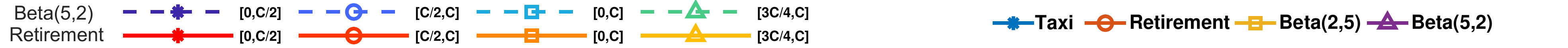}
  \end{minipage}
 }
 \\
 \vspace{-0.1in}
\centering
\subfigure[\textcolor{black}{$\gamma=0.1$.}]{
\begin{minipage}[t]{0.21\linewidth}
\centering
\includegraphics[width=1\textwidth]{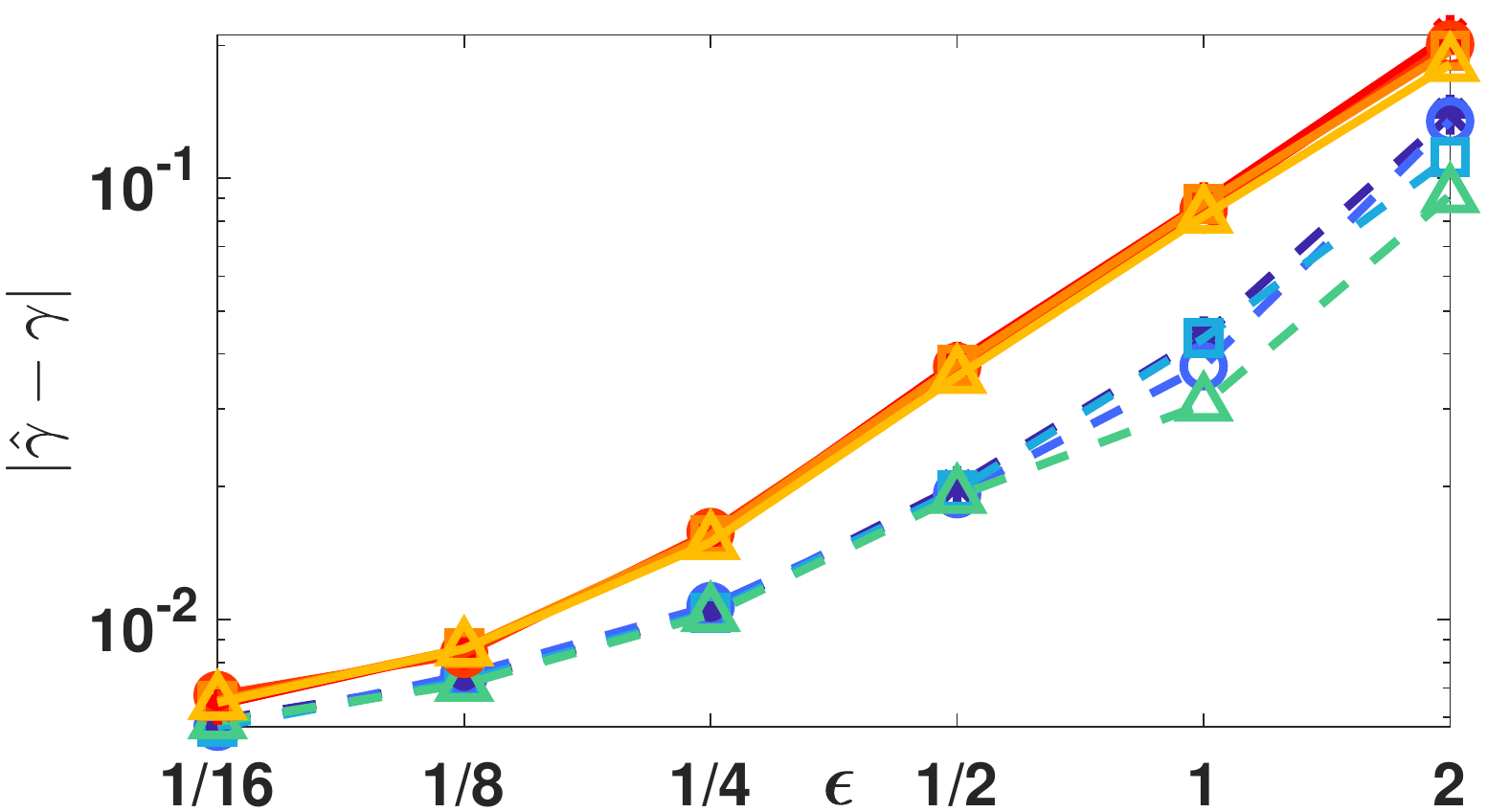}
\end{minipage}%
}%
\subfigure[\textcolor{black}{$\gamma=0.4$.}]{
\begin{minipage}[t]{0.21\linewidth}
\centering
\includegraphics[width=1\textwidth]{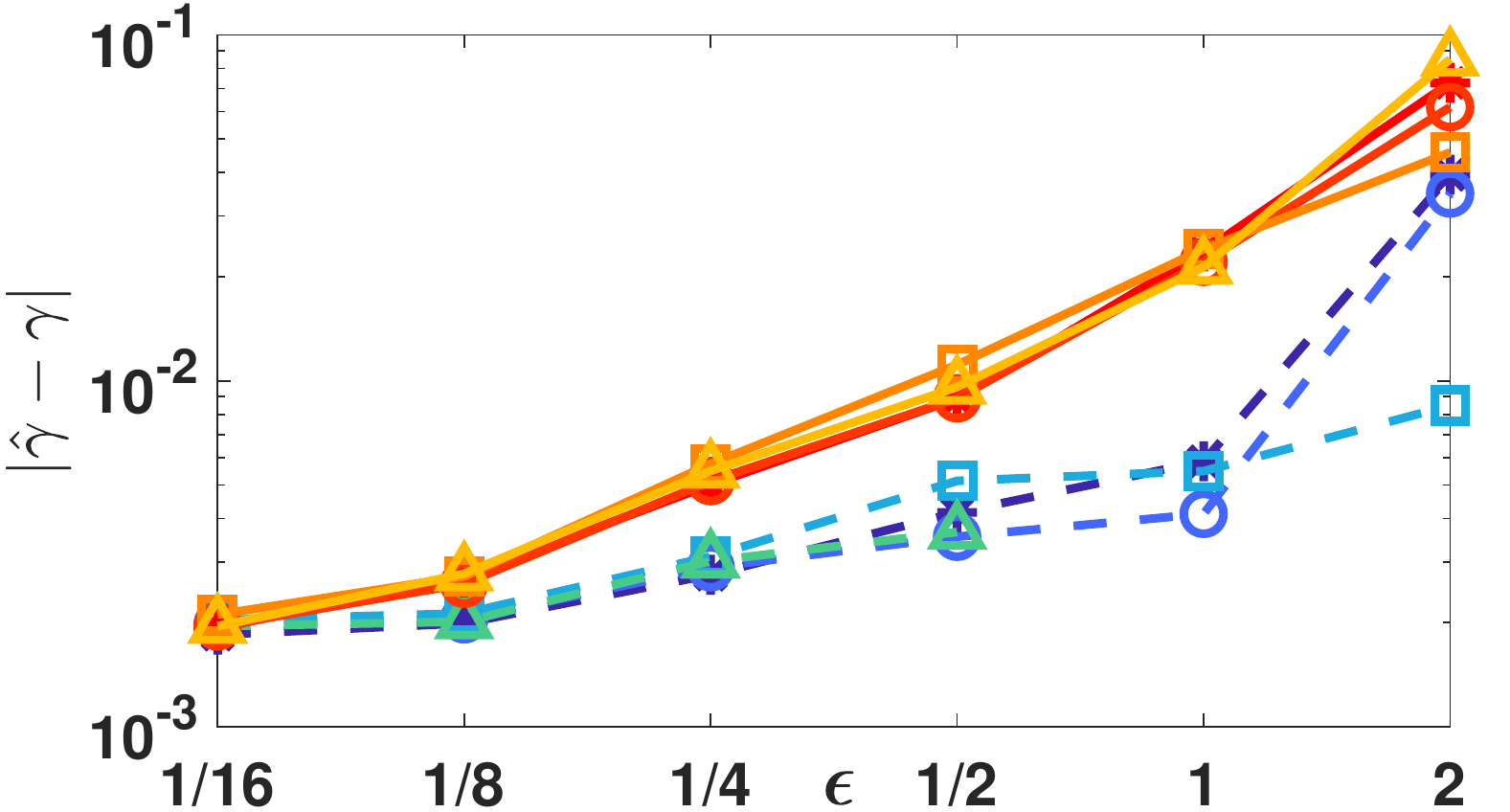}
\end{minipage}
}%
 \vspace{0.1in}
\centering
\subfigure[\textcolor{black}{$\gamma=0$.}]{
\begin{minipage}[t]{0.21\linewidth}
\centering
\includegraphics[width=1\textwidth]{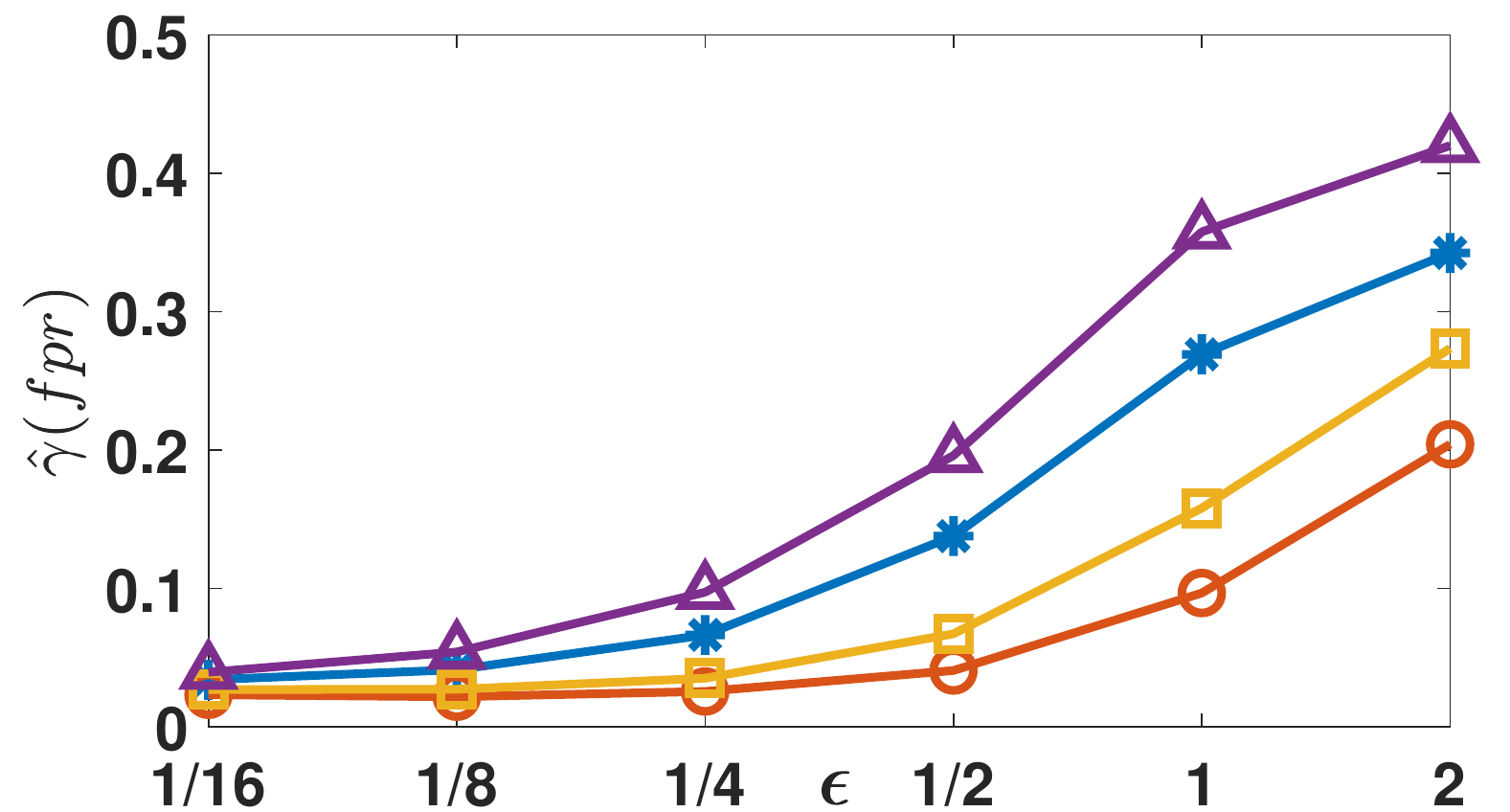}
\end{minipage}
}%
\subfigure[\textcolor{black}{IMA, $\gamma=0.25$.}]{
\begin{minipage}[t]{0.21\linewidth}
\centering
\includegraphics[width=1\textwidth]{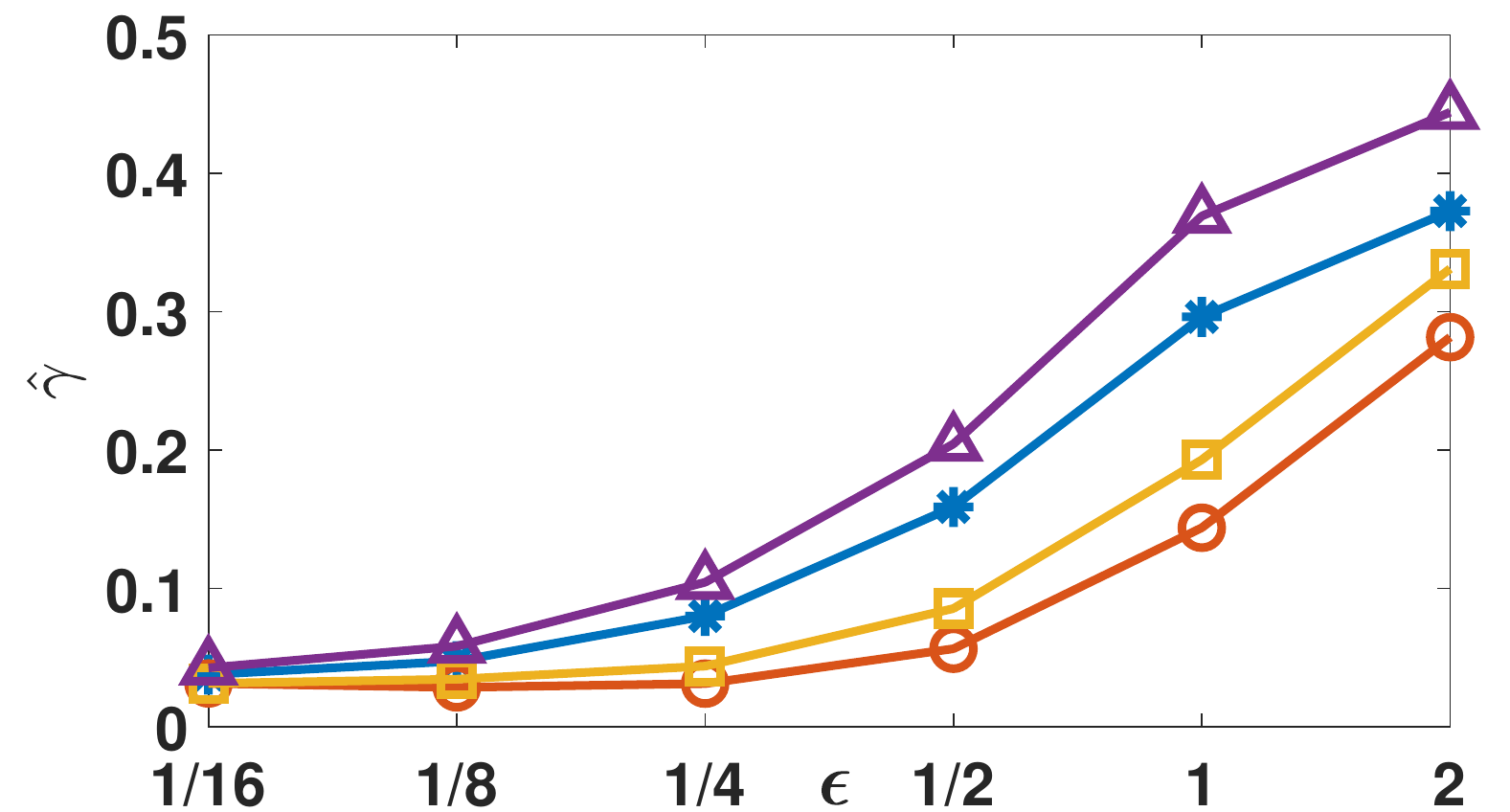}
\end{minipage}
}
\centering
 \vspace{-0.15in}
\caption{\textcolor{black}{The proportion of Byzantine users estimated by EMF w.r.t. $\epsilon$}}
\label{fig of EMF}
\end{figure*}

\begin{figure*}[ht]
 \vspace{-0.15in}
 \hspace{-1.25in}
  {
  \begin{minipage}{3cm}
   \centering
   \includegraphics[scale=0.5]{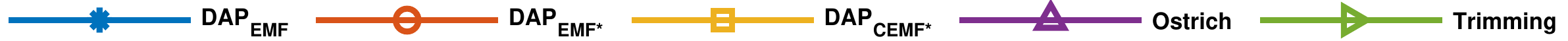}
  \end{minipage}
   \vspace{-0.05in}
 }
 \\
 \vspace{-0.1in}
\centering
\subfigure[\textbf{Beta(2,5)}, $Poi_{[3C/4,C]}$.]{
\begin{minipage}[t]{0.215\linewidth}
\centering
\includegraphics[width=1\textwidth]{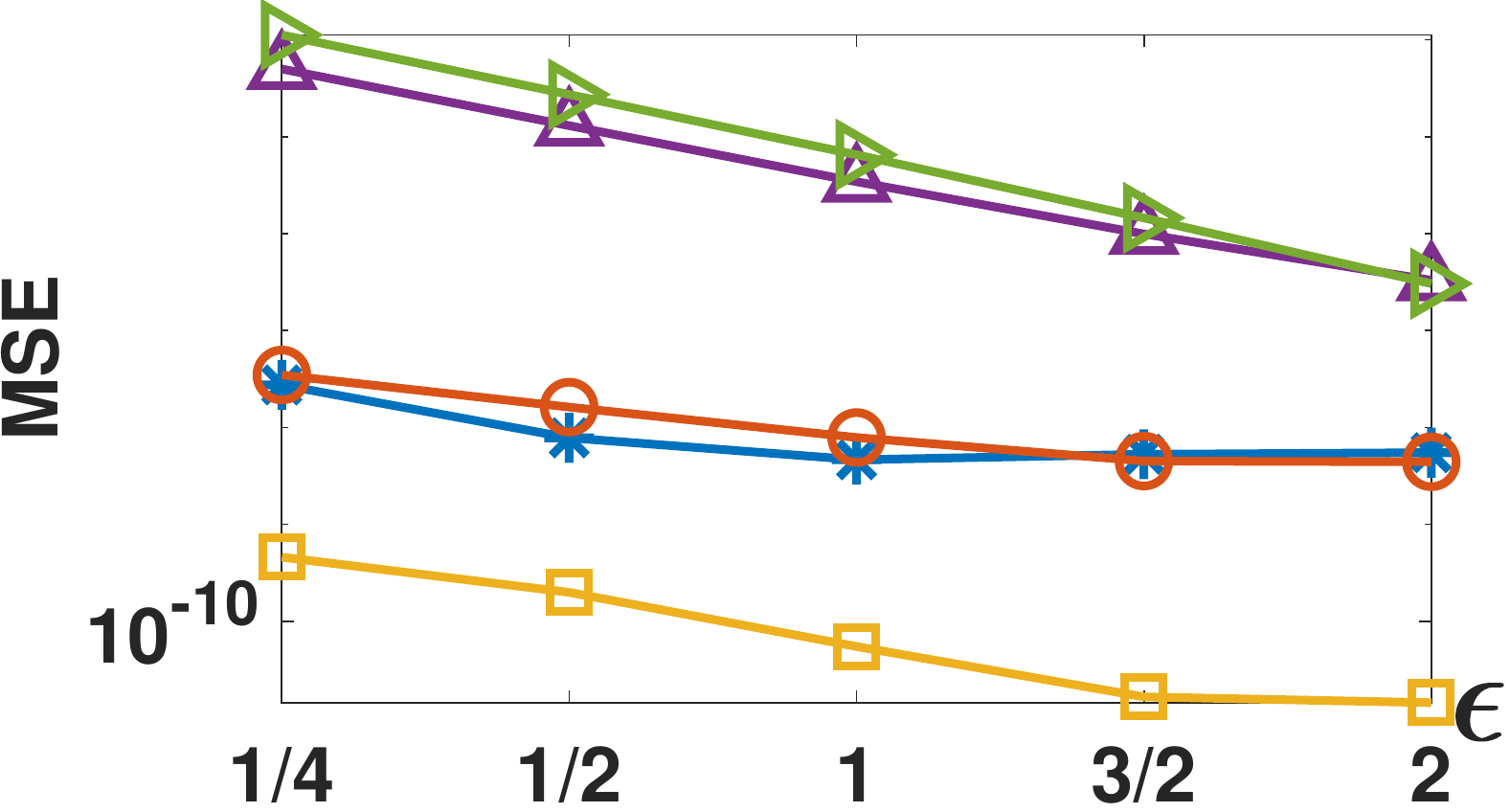}
\end{minipage}%
}%
\subfigure[\textbf{Beta(5,2)}, $Poi_{[3C/4,C]}$.]{
\begin{minipage}[t]{0.215\linewidth}
\centering
\includegraphics[width=1\textwidth]{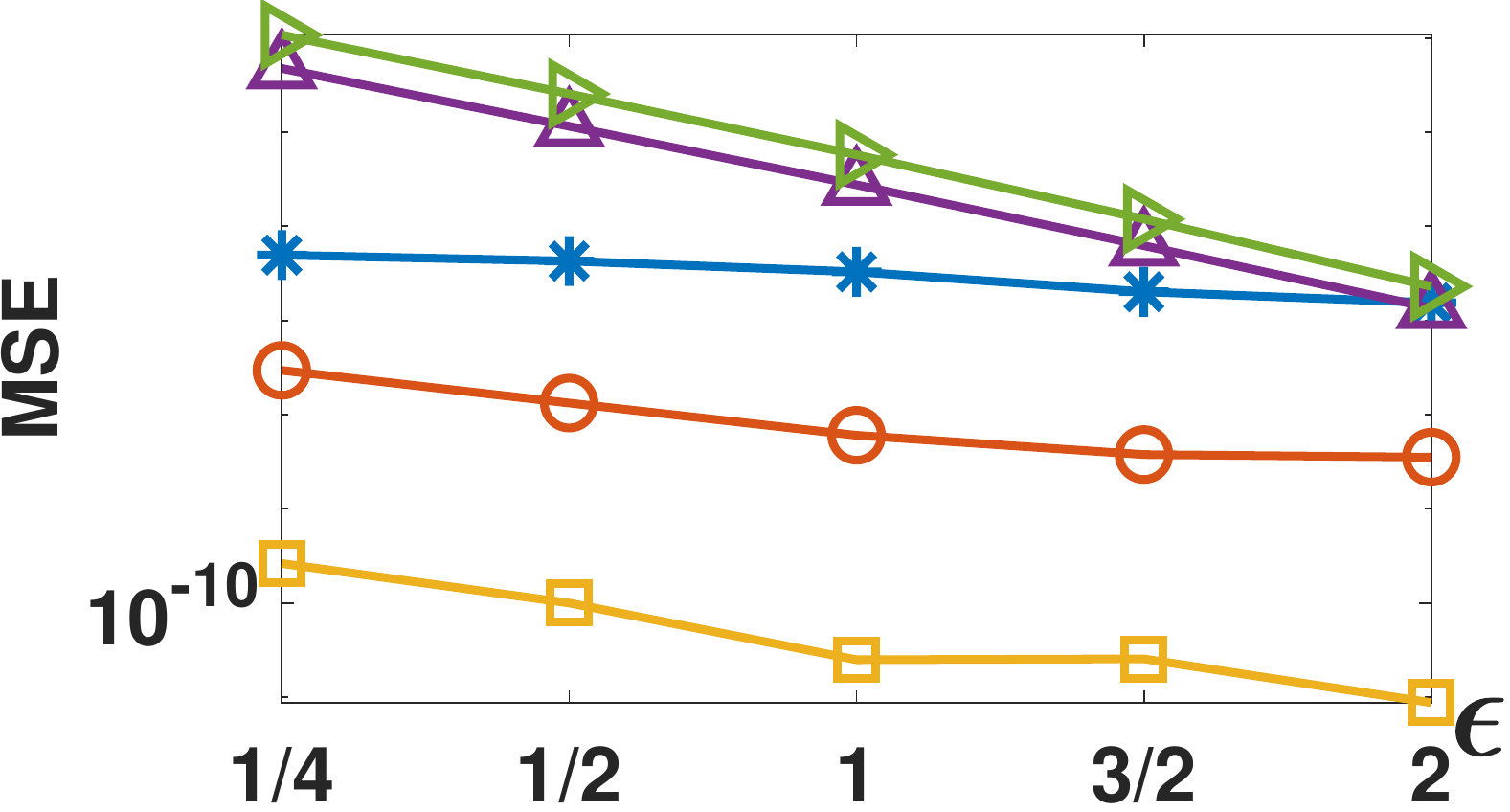}
\end{minipage}%
}%
\subfigure[\textbf{Taxi}, $Poi_{[3C/4,C]}$. ]{
\begin{minipage}[t]{0.215\linewidth}
\centering
\includegraphics[width=1\textwidth]{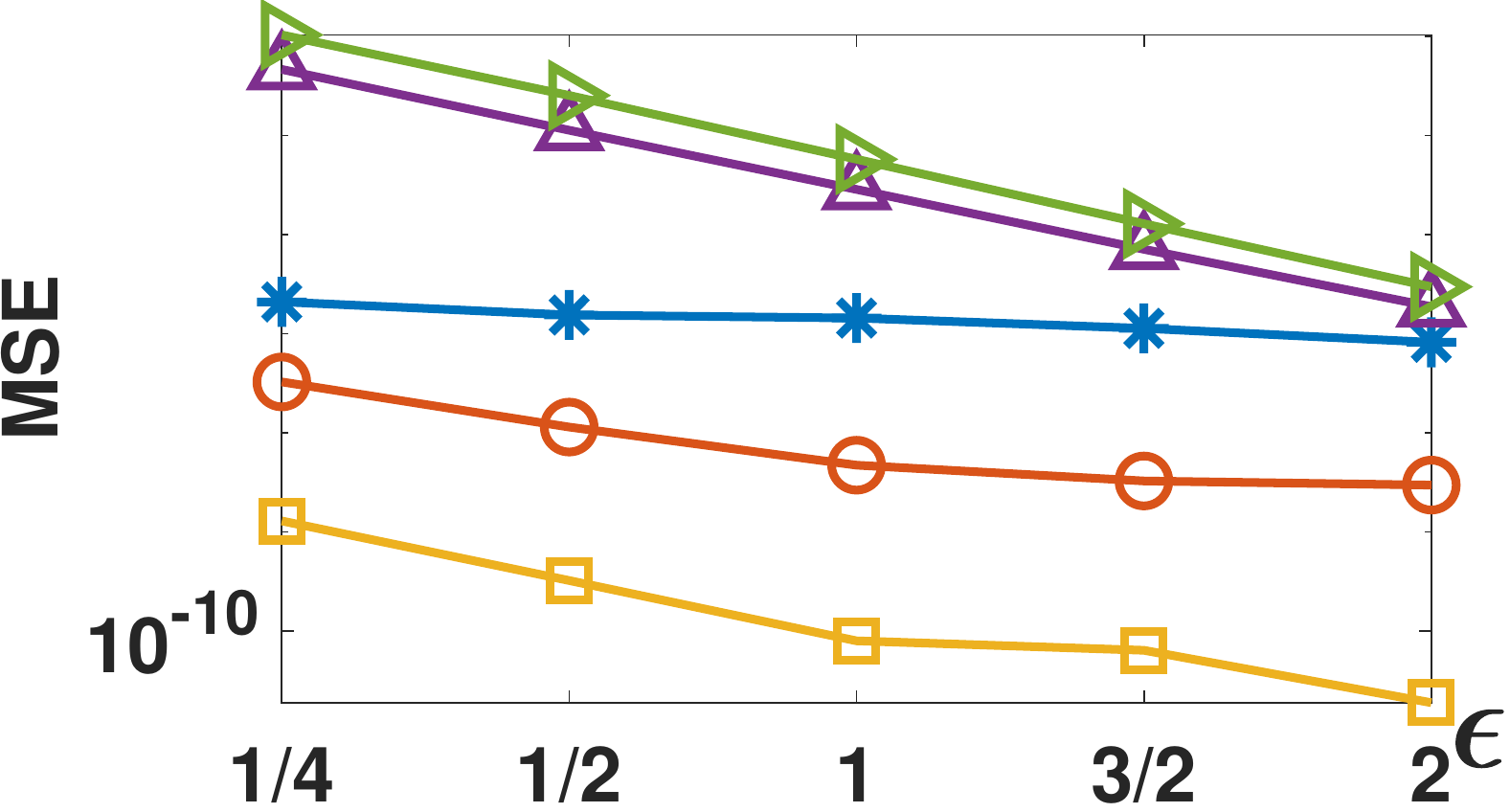}
\end{minipage}
}%
\centering
\subfigure[\textbf{Retirement}, $Poi_{[3C/4,C]}$.]{
\begin{minipage}[t]{0.215\linewidth}
\centering
\includegraphics[width=1\textwidth]{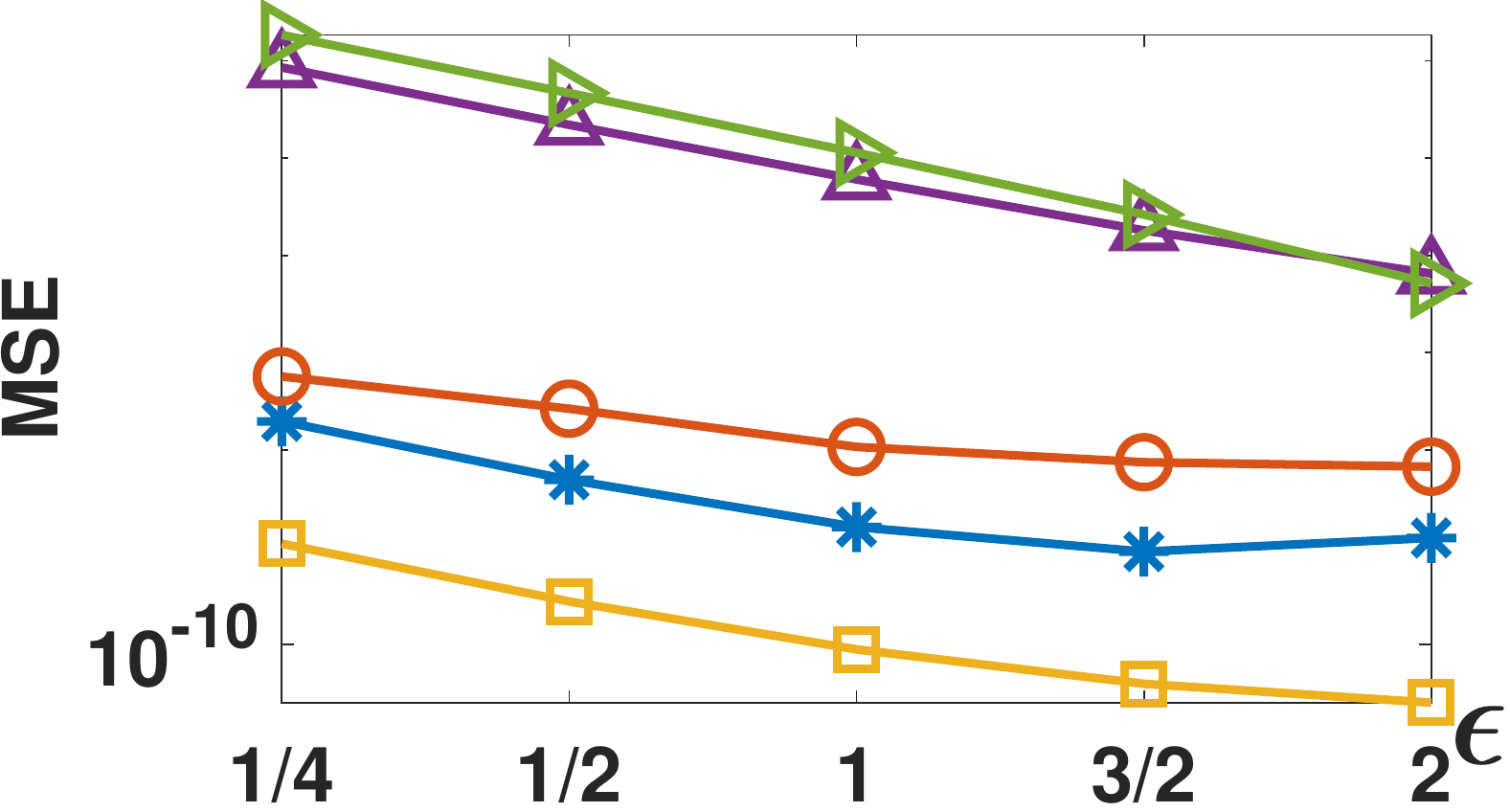}
\end{minipage}
}%

\vspace{-0.05in}
\centering
\subfigure[\textbf{Beta(2,5)}, $Poi_{[1C/2,C]}$.]{
\begin{minipage}[t]{0.215\linewidth}
\centering
\includegraphics[width=1\textwidth]{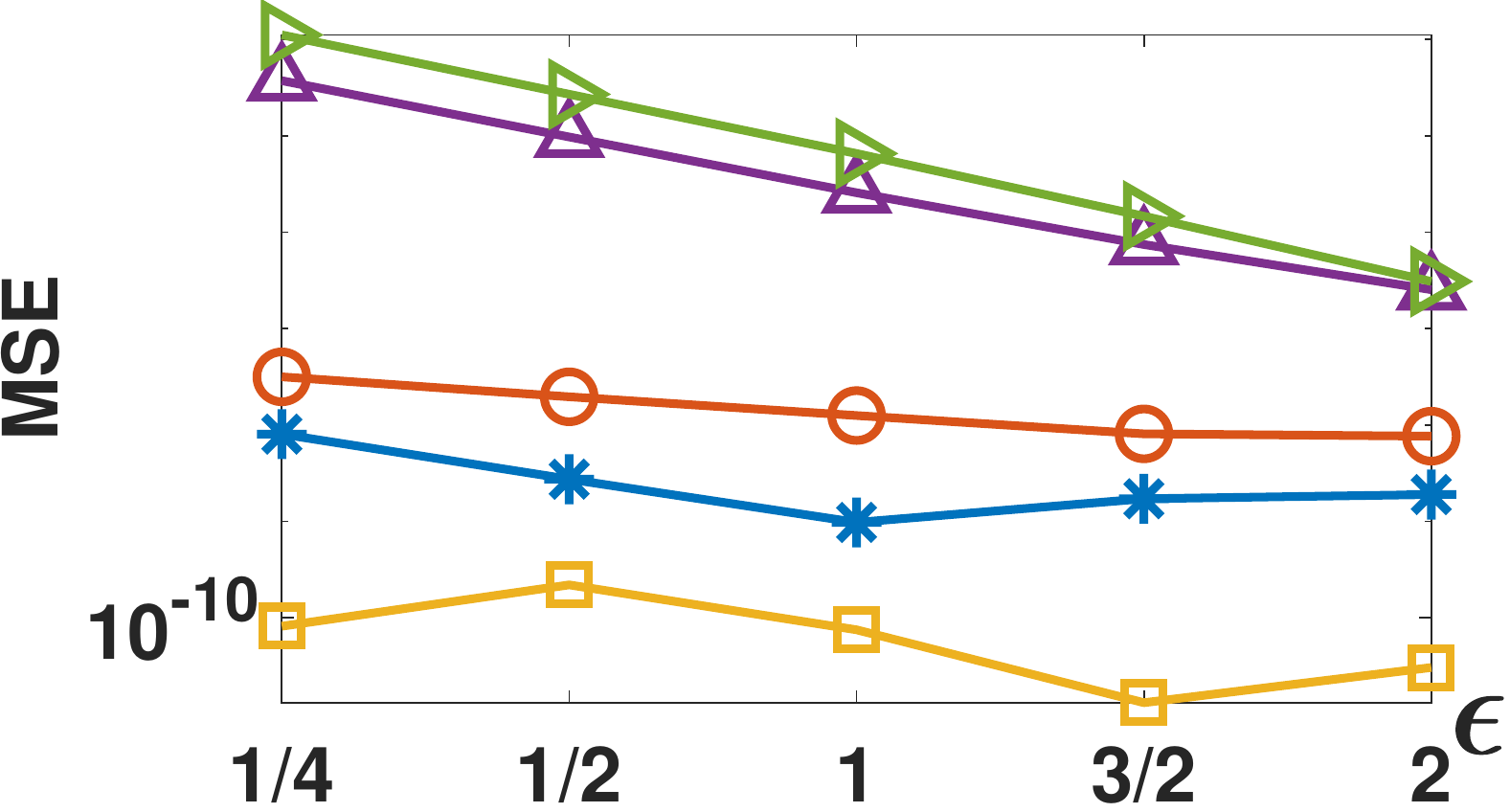}
\end{minipage}%
}%
\subfigure[\textbf{Beta(5,2)}, $Poi_{[1C/2,C]}$.]{
\begin{minipage}[t]{0.215\linewidth}
\centering
\includegraphics[width=1\textwidth]{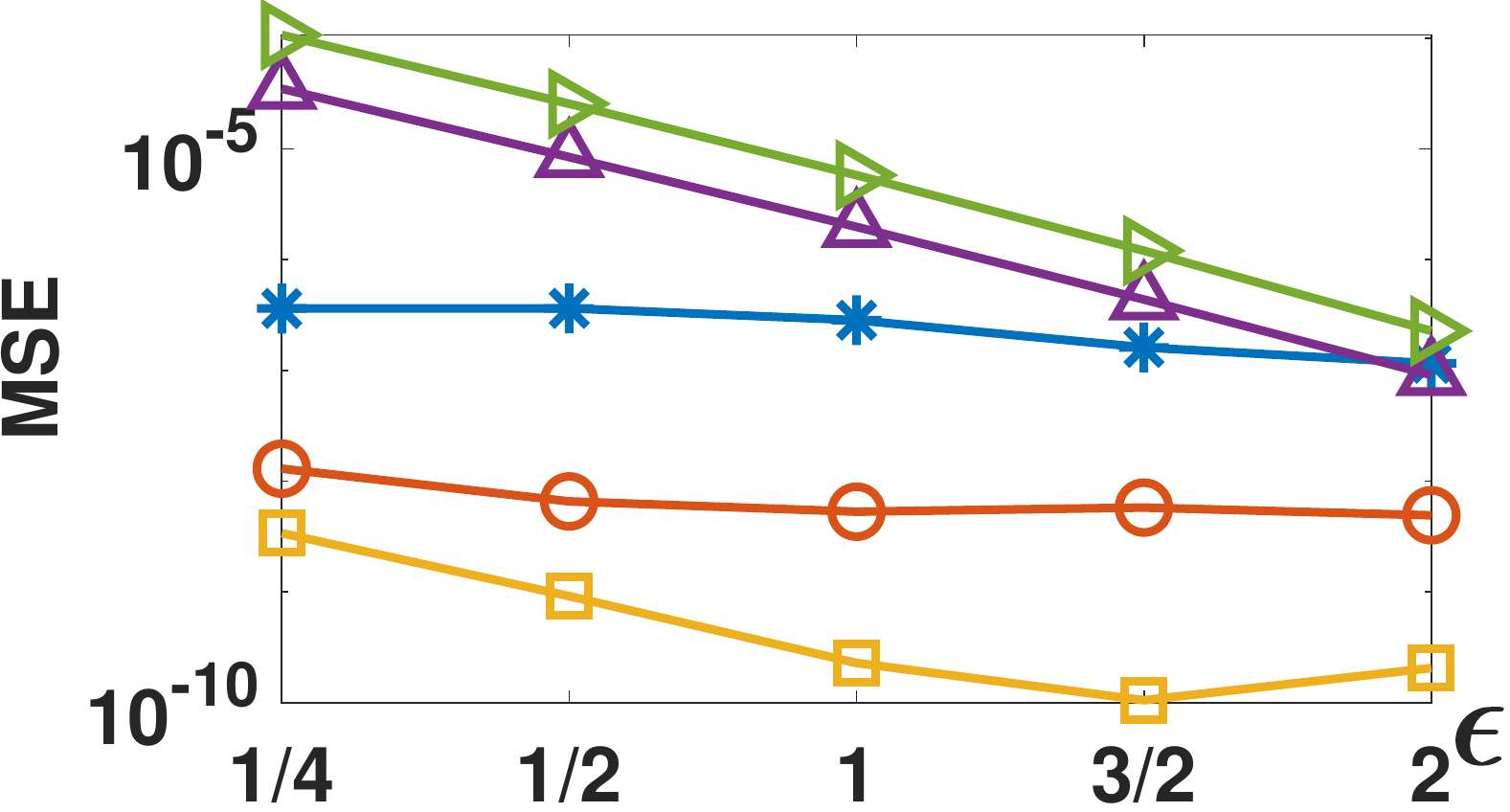}
\end{minipage}%
}%
\subfigure[\textbf{Taxi}, $Poi_{[1C/2,C]}$.]{
\begin{minipage}[t]{0.215\linewidth}
\centering
\includegraphics[width=1\textwidth]{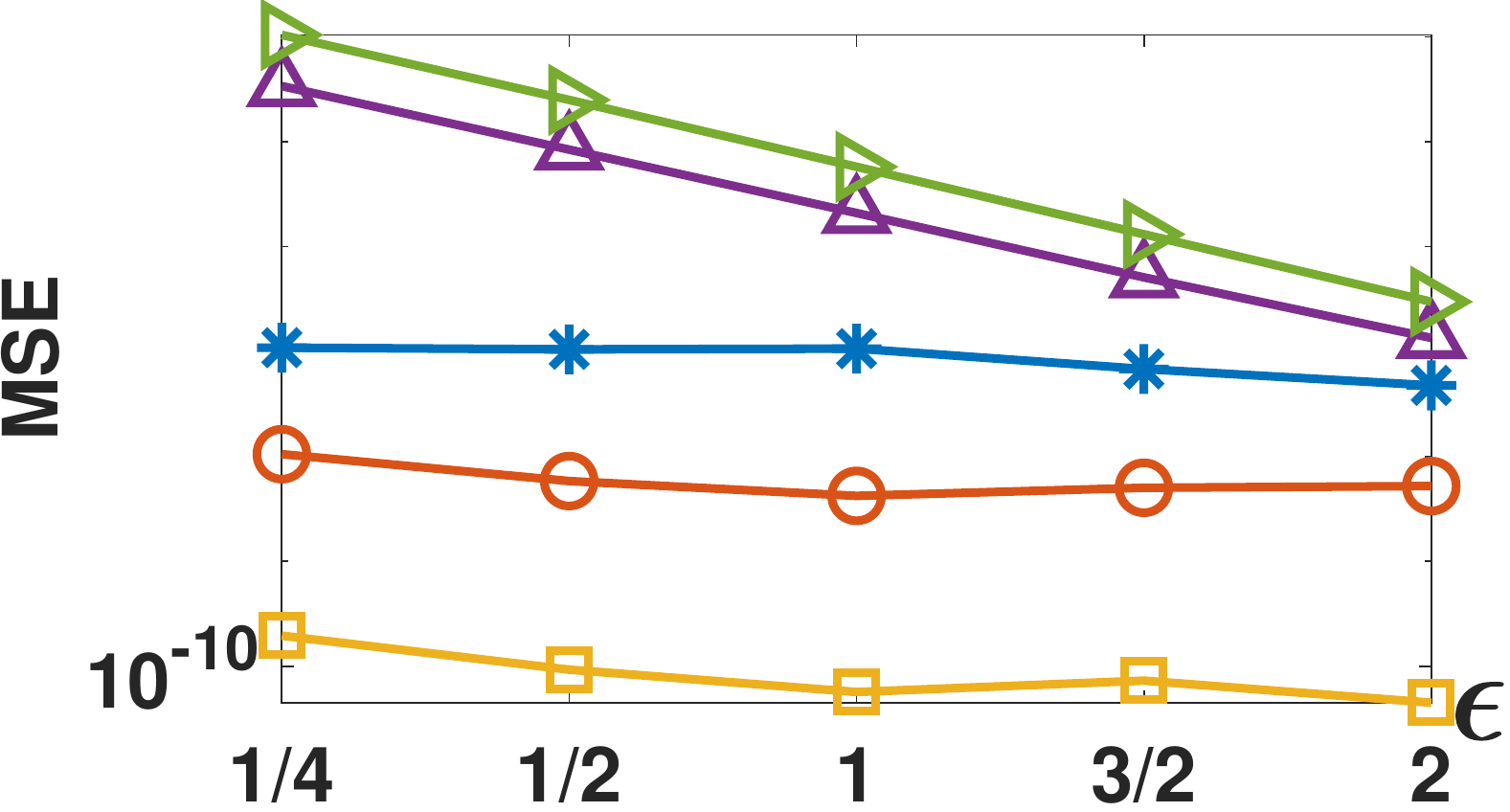}
\end{minipage}
}%
\subfigure[\textbf{Retirement}, $Poi_{[1C/2,C]}$.]{
\begin{minipage}[t]{0.215\linewidth}
\centering
\includegraphics[width=1\textwidth]{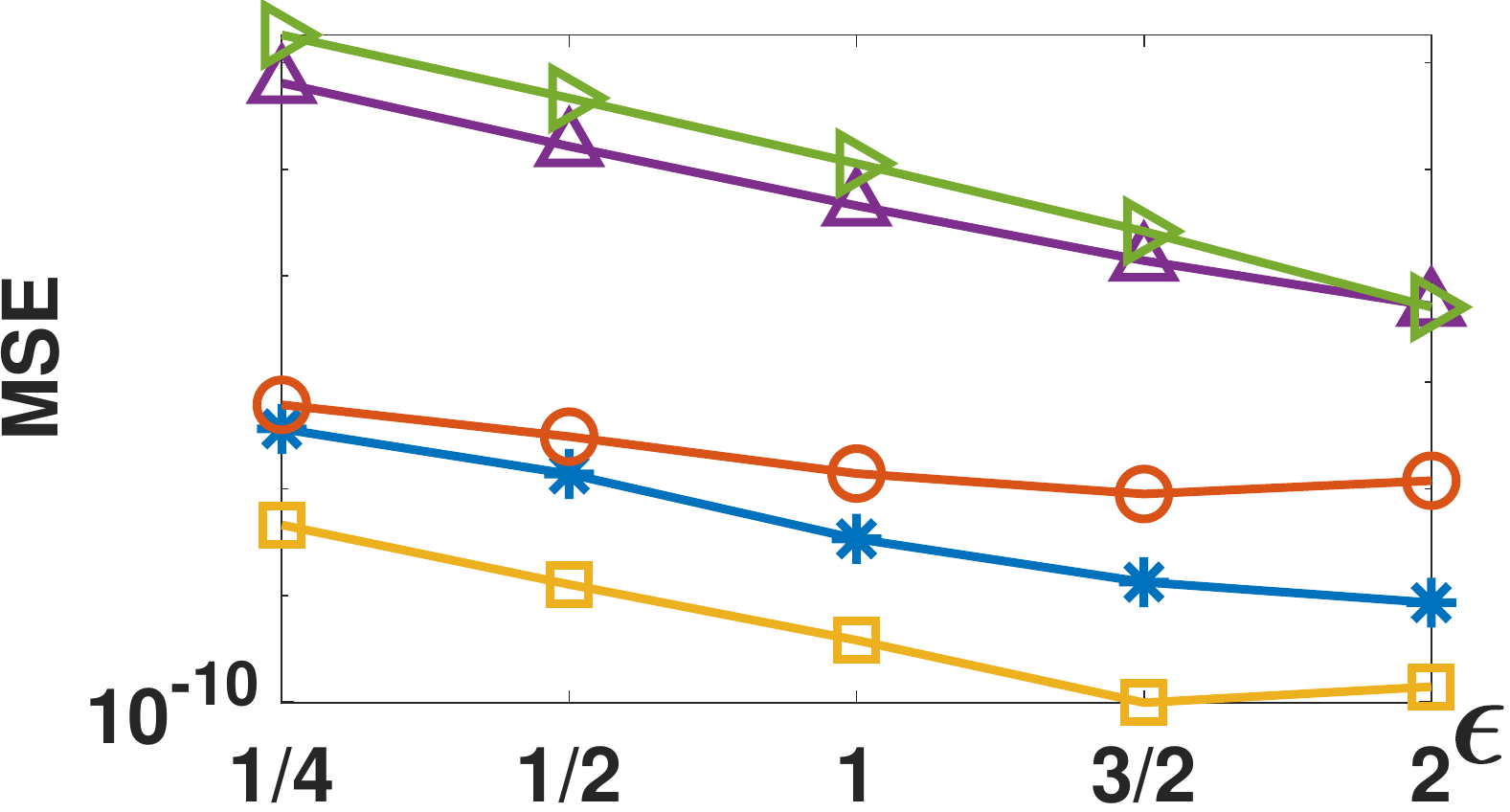}
\end{minipage}
}%
 \vspace{-0.15in}
\centering
\subfigure[\textbf{Beta(2,5)}, $Poi_{[O,1C/2]}$.]{
\begin{minipage}[t]{0.215\linewidth}
\centering
\includegraphics[width=1\textwidth]{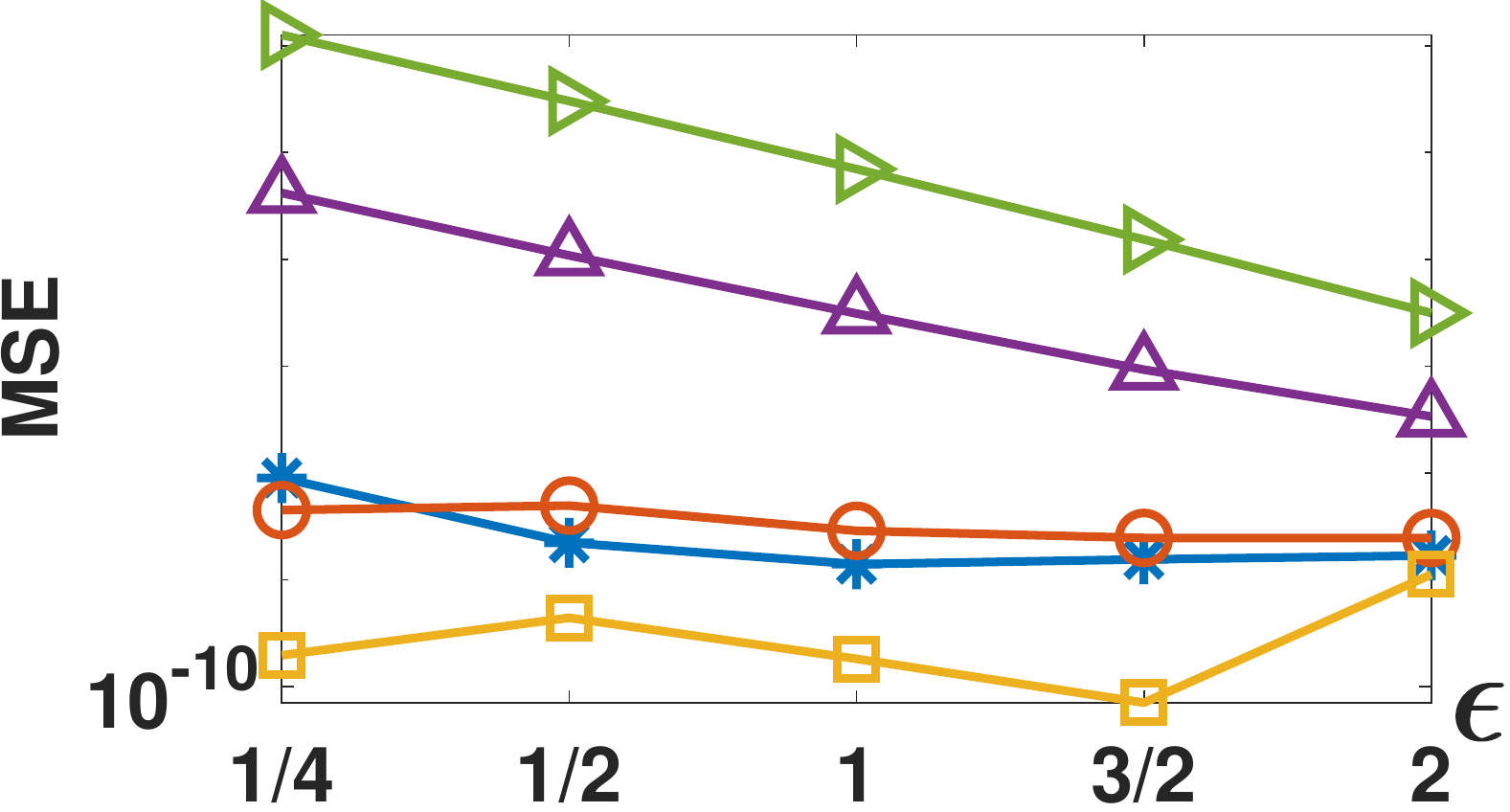}
\end{minipage}%
}%
\subfigure[\textbf{Beta(5,2)}, $Poi_{[O,1C/2]}$.]{
\begin{minipage}[t]{0.215\linewidth}
\centering
\includegraphics[width=1\textwidth]{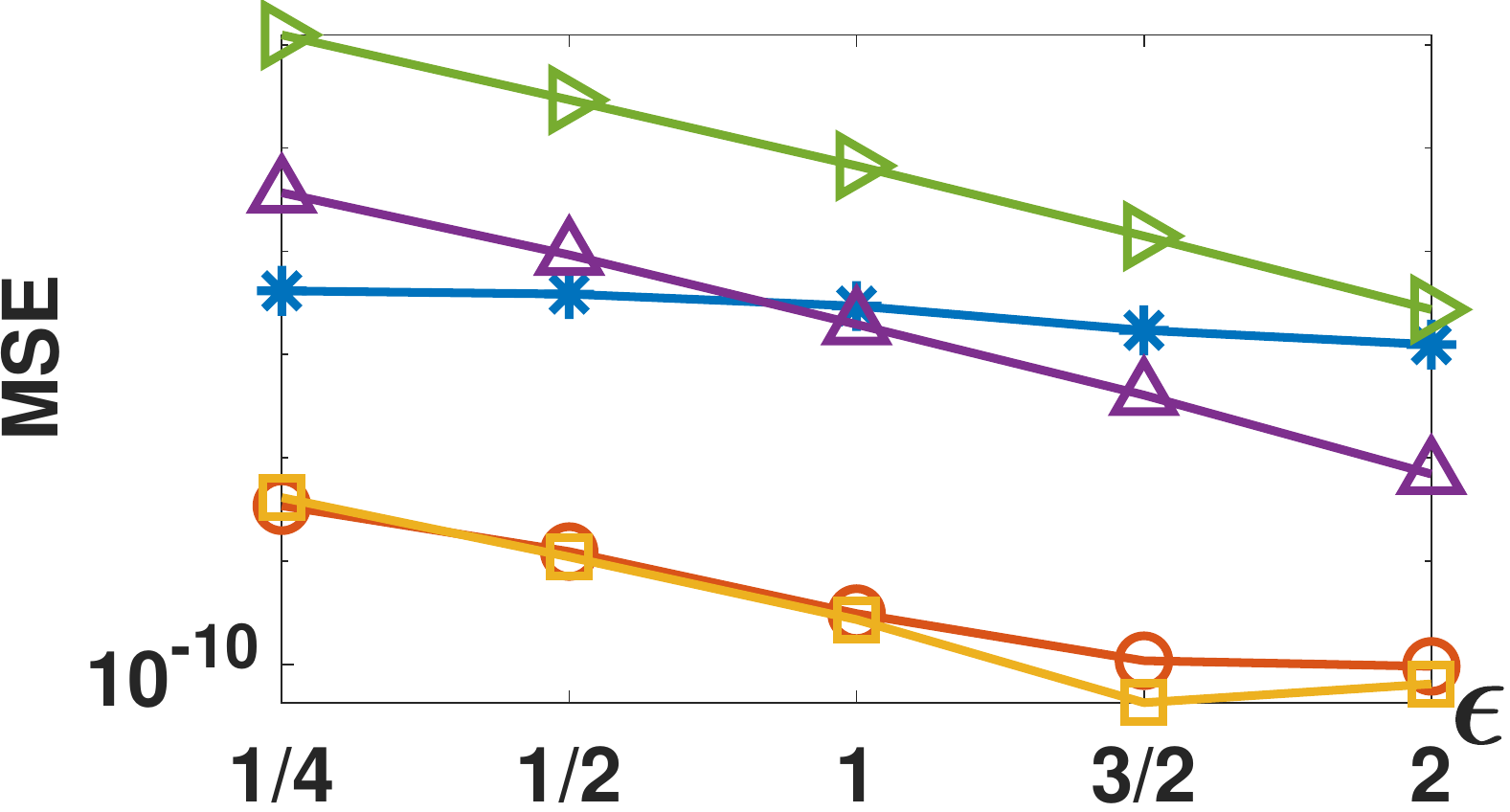}
\end{minipage}%
}%
\subfigure[\textbf{Taxi}, $Poi_{[O,1C/2]}$.]{
\begin{minipage}[t]{0.215\linewidth}
\centering
\includegraphics[width=1\textwidth]{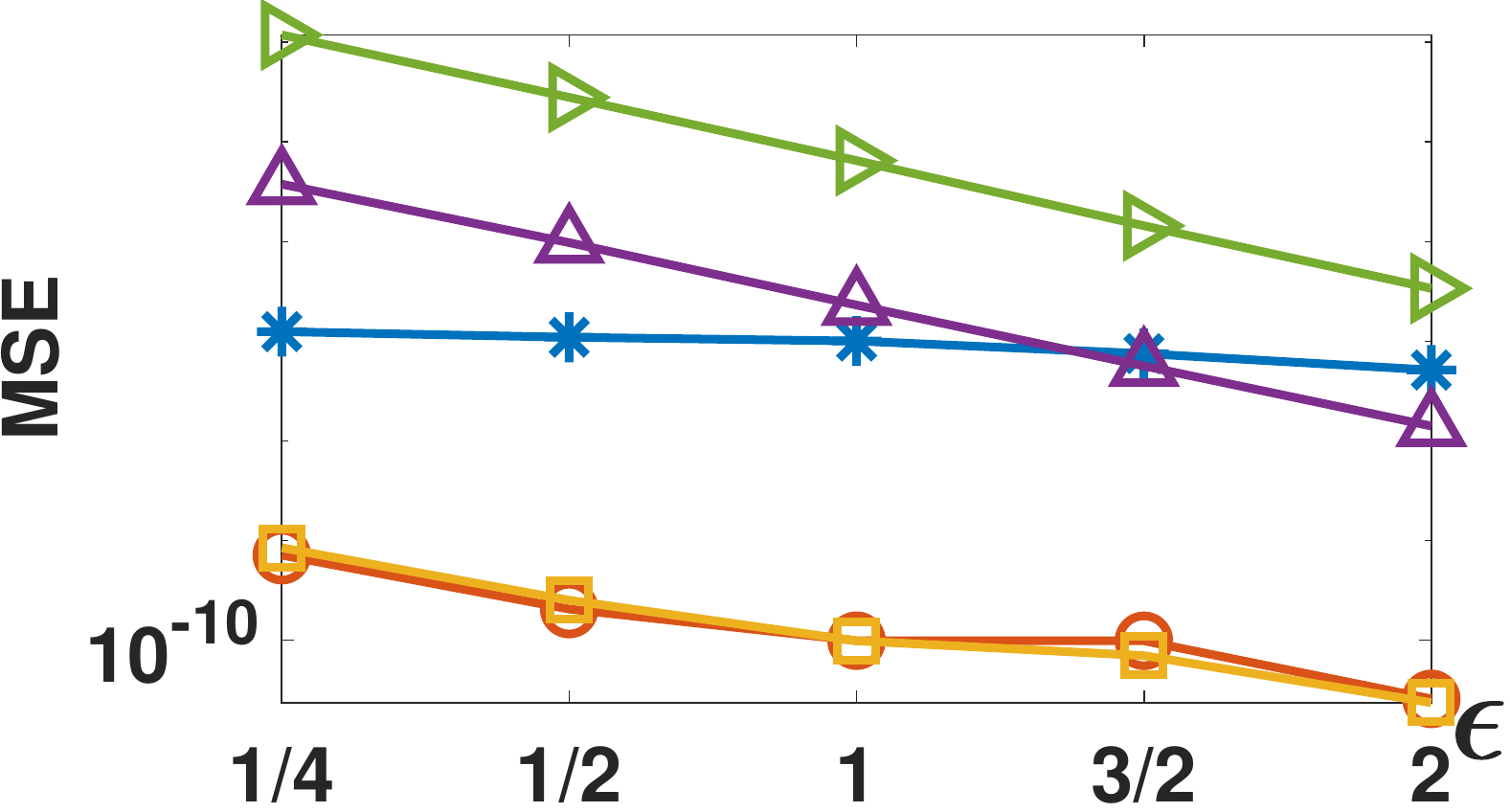}
\end{minipage}
}%
\subfigure[\textbf{Retirement}, $Poi_{[O,1C/2]}$.]{
\begin{minipage}[t]{0.215\linewidth}
\centering
\includegraphics[width=1\textwidth]{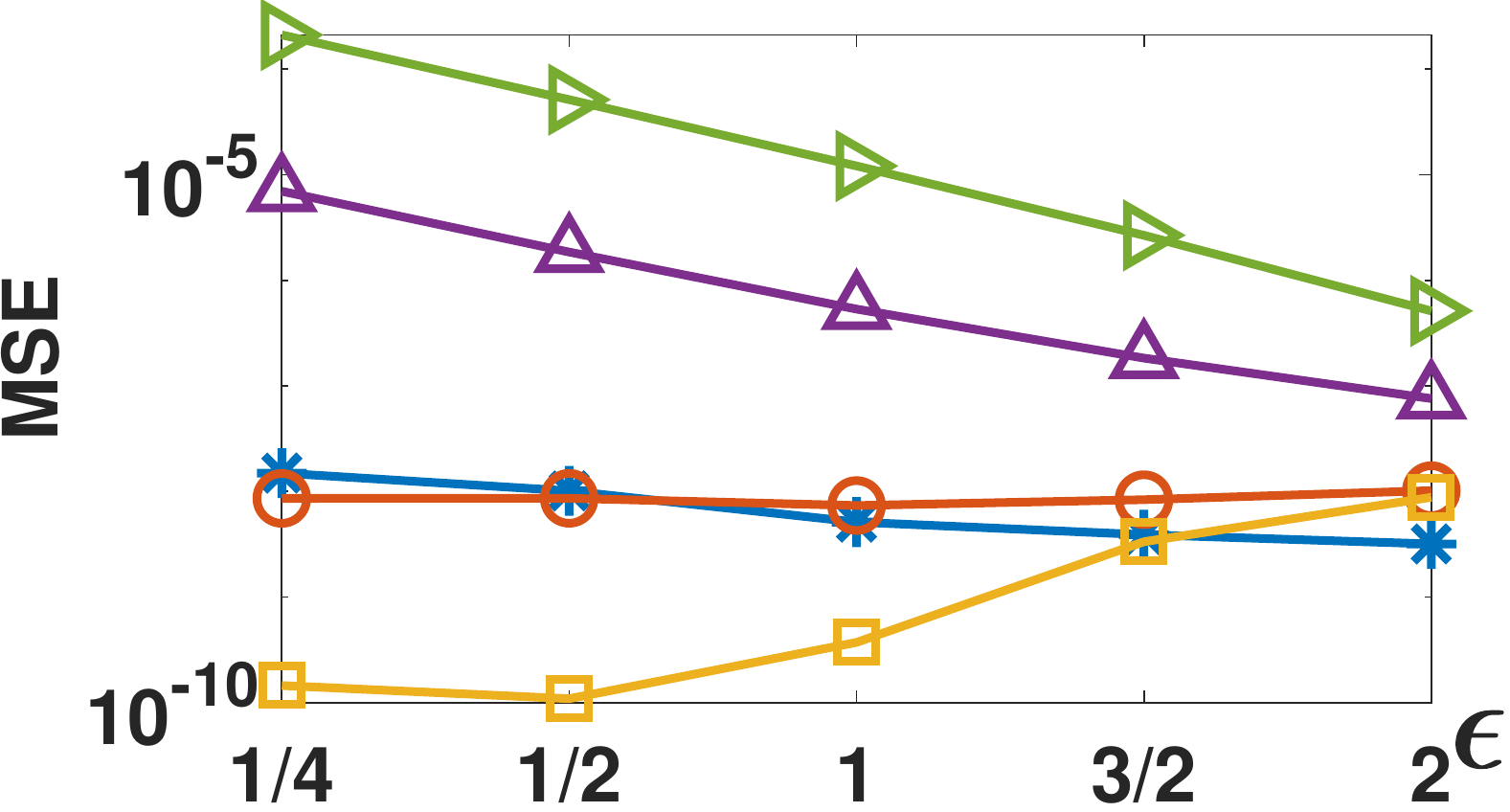}
\end{minipage}
}%
 \vspace{-0.15in}
\centering
\subfigure[\textbf{Beta(2,5)}, $Poi_{[O,C]}$.]{
\begin{minipage}[t]{0.215\linewidth}
\centering
\includegraphics[width=1\textwidth]{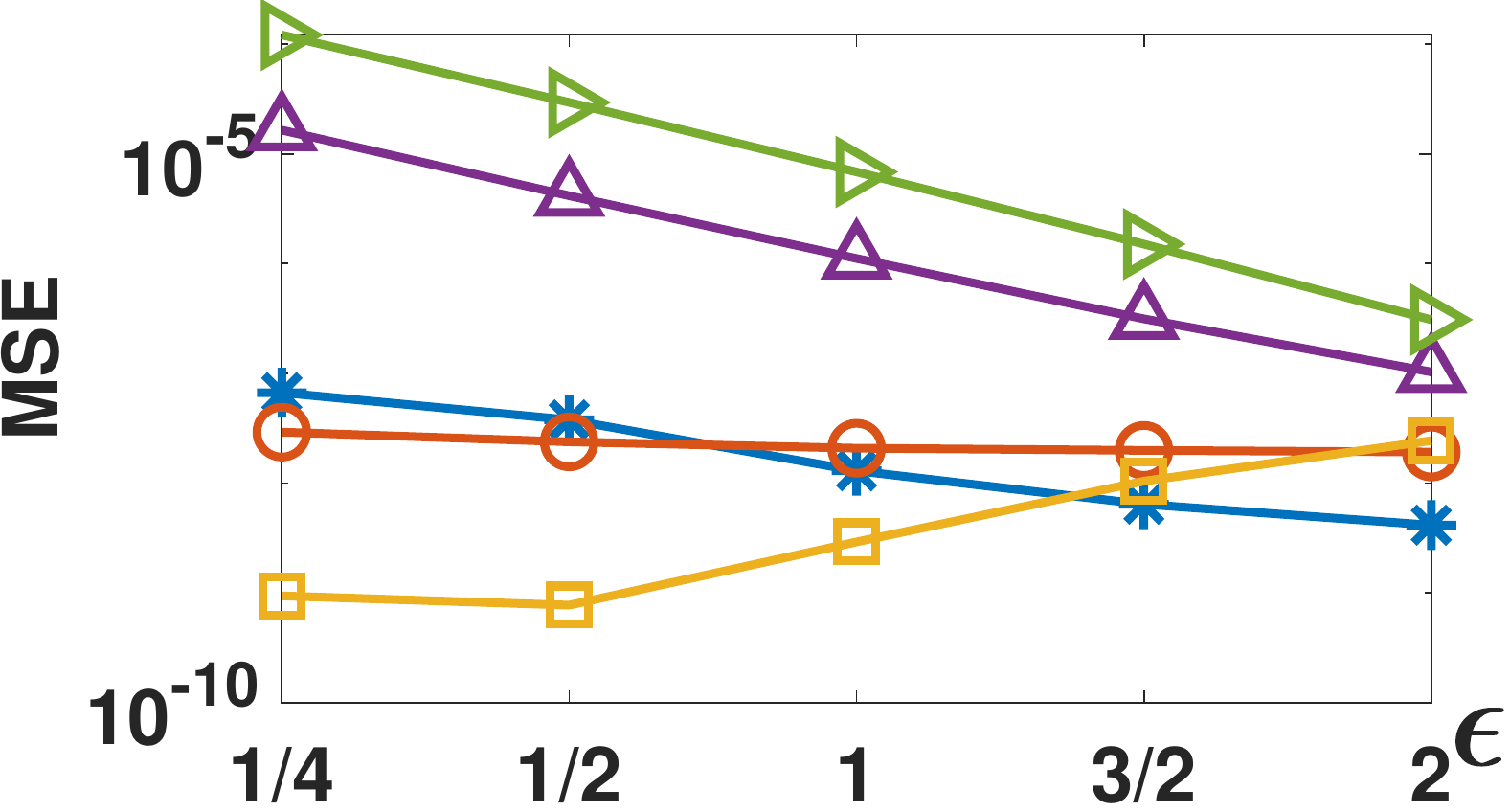}
\end{minipage}%
}%
\subfigure[\textbf{Beta(5,2)}, $Poi_{[O,C]}$.]{
\begin{minipage}[t]{0.215\linewidth}
\centering
\includegraphics[width=1\textwidth]{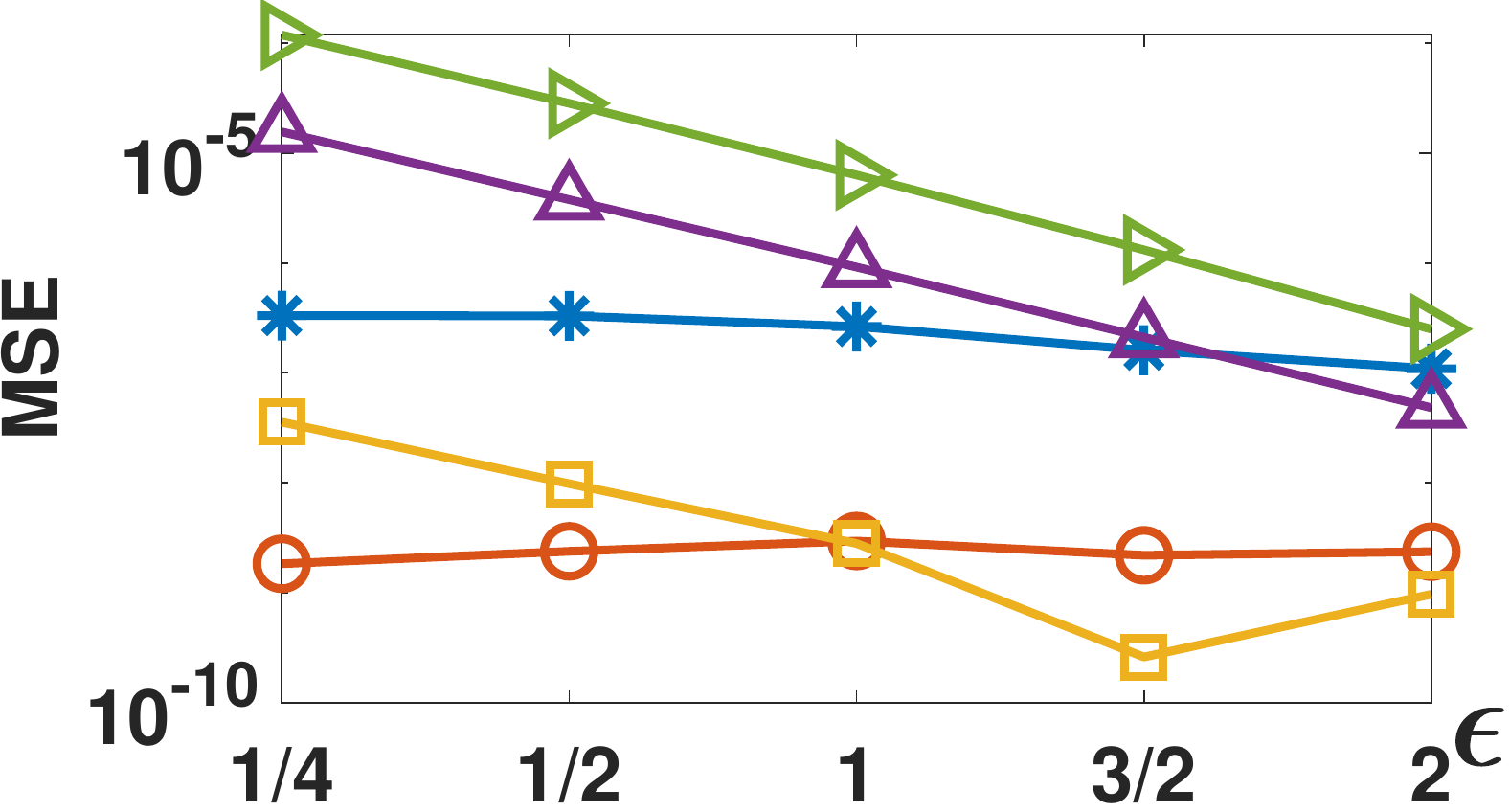}
\end{minipage}%
}%
\subfigure[\textbf{Taxi}, $Poi_{[O,C]}$. ]{
\begin{minipage}[t]{0.215\linewidth}
\centering
\includegraphics[width=1\textwidth]{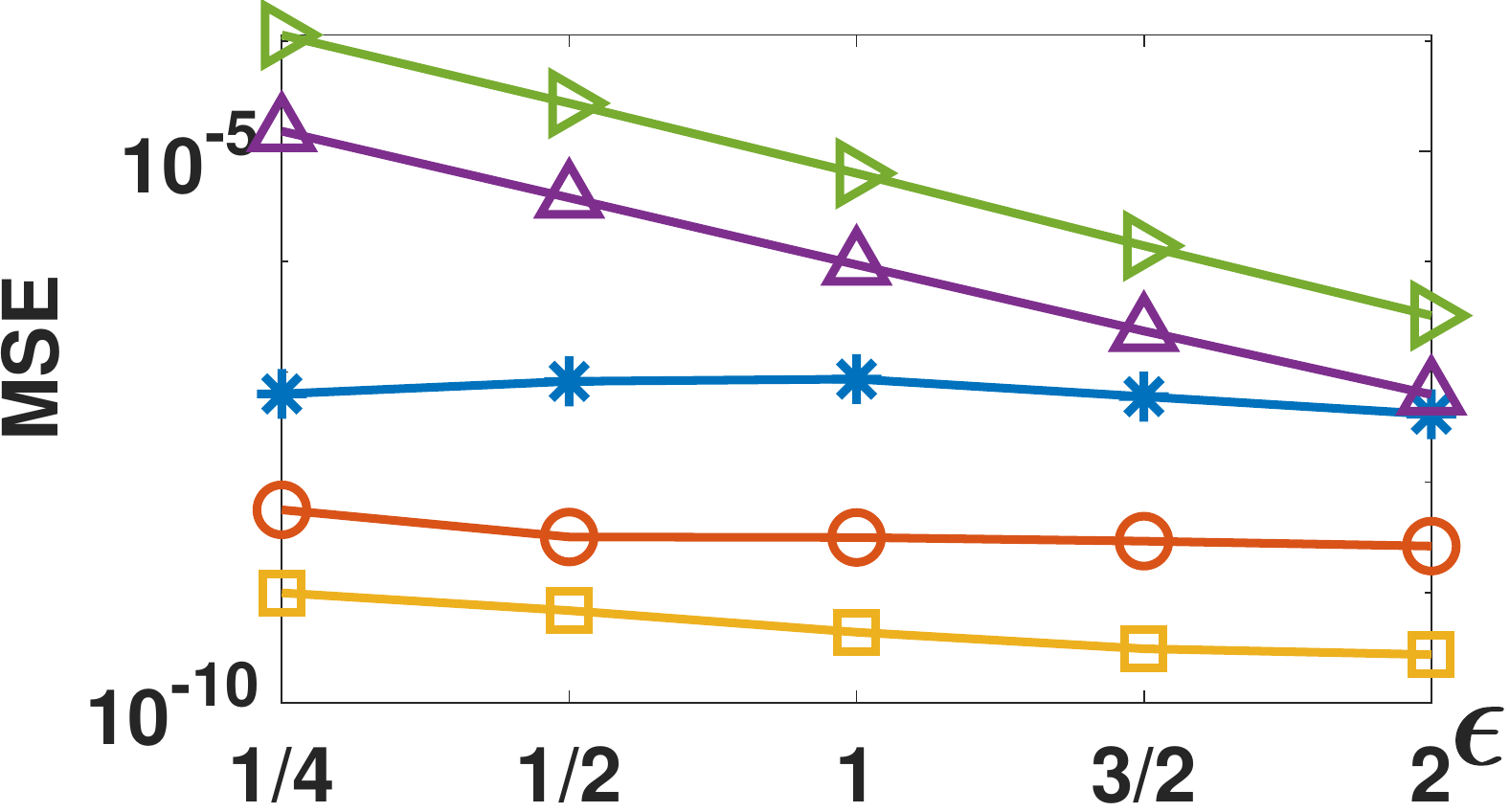}
\end{minipage}
}%
\subfigure[\textbf{Retirement}, $Poi_{[O,C]}$.]{
\begin{minipage}[t]{0.215\linewidth}
\centering
\includegraphics[width=1\textwidth]{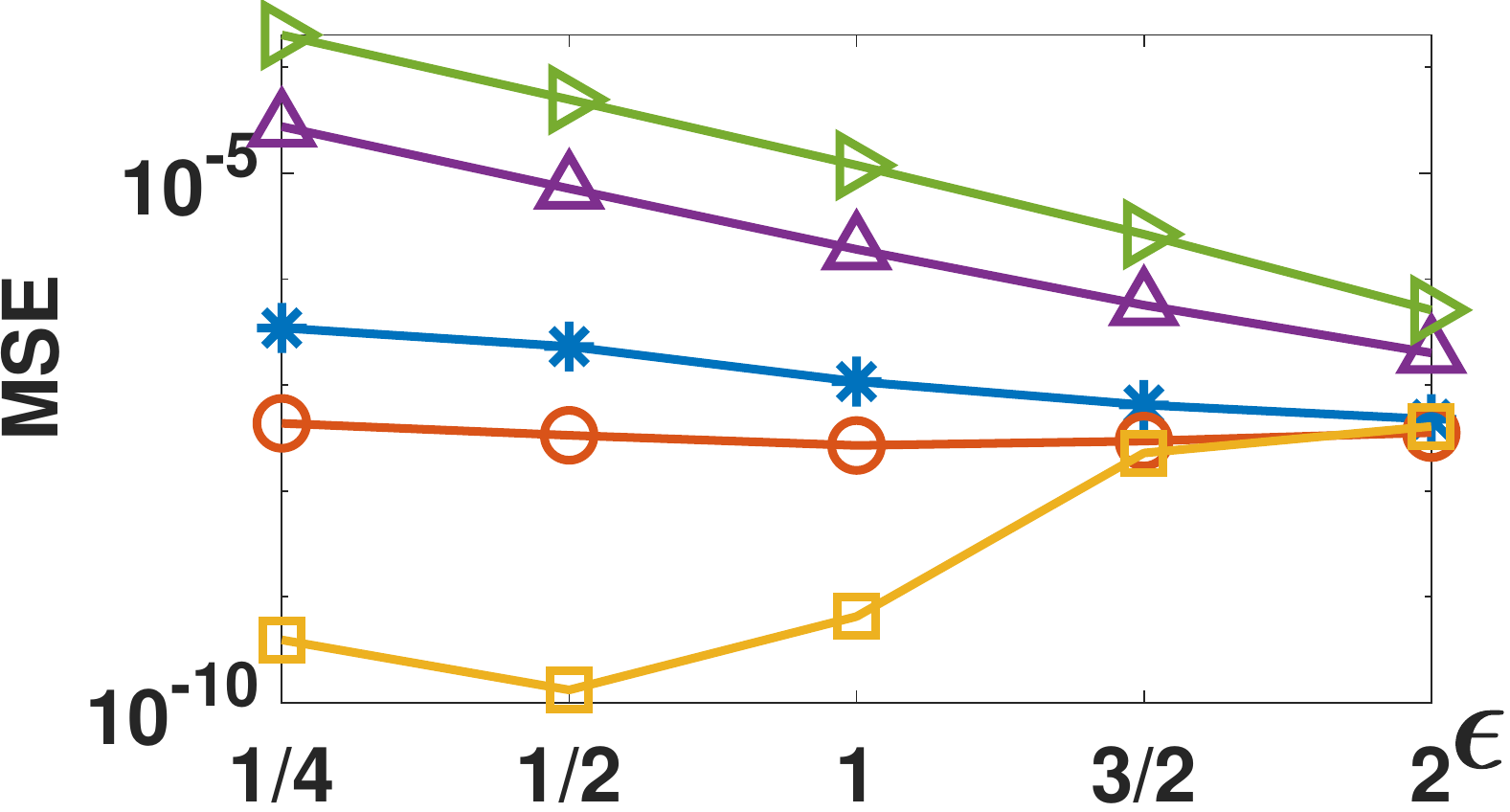}
\end{minipage}
}%
\centering
  \vspace{-0.15in}
\caption{MSE on mean estimation w.r.t. $\epsilon$}
\label{Mean Estimation}
 \vspace{-0.15in}
\end{figure*}
\subsection{Performance of DAP}
\textbf{Mean Estimation.} Fig. \ref{Mean Estimation} compares the MSE~\cite{Mood1963Introduction} of mean estimation among five schemes, including $DAP_{EMF}$, $DAP_{EMF^*}$, $DAP_{CEMF^*}$, Ostrich, and Trimming. Ostrich is the baseline method where all values are directly averaged and the existence of Byzantine users is ignored. For Trimming, the data collector simply removes the largest (or smallest if the poison side is to the left) 50\% values in $V'$ and averages the remaining values.

EMF, EMF*, and CEMF* estimations are used in three DAP schemes, known as $DAP_{EMF}$, $DAP_{EMF^*}$, and $DAP_{CEMF^*}$, respectively. Note that the latter two are post-processing of EMF, so essentially they mean EMF+EMF* and EMF+CEMF*. In the DAP protocol, the minimum privacy budget $\epsilon_0$ in all groups is set to 1/16, and $\epsilon$ is set to $\{1/4, 1/2, 1, 3/2, 2\}$. The lower threshold of $DAP_{CEMF^*}$ to suppress buckets is set to $0.5\hat{\gamma}/(d'/2)$.

In Fig. \ref{Mean Estimation}, in most cases Trimming underperforms others as it overkills many normal values. Furthermore, all proposed schemes outperform Ostrich and Trimming, as they more or less manage to probe some poison values. Among the three schemes we propose, $DAP_{EMF^*}$ performs better than $DAP_{EMF}$ in most cases because $DAP_{EMF}$ may remove an inappropriate number of values, which is resolved by $DAP_{EMF^*}$. In most cases, $DAP_{CEMF^*}$ behaves better than $DAP_{EMF^*}$, as the former further suppresses some buckets unchosen by Byzantine users and therefore has a smaller error on the convergence result. \textcolor{black}{In Figs. \ref{Mean Estimation} (j) (k) (n), we observe that EMF performs worse than the Ostrich scheme when $\epsilon$ is large in \textbf{Beta(5,2)} and \textbf{Taxi}. This is because the poison values are coincidentally close to the true mean value $O$, while EMF probes too many poison values and incorrectly removes some normal values.}

In Figs.~\ref{Mean Estimation} (l) (m) (p), we observe that the MSE of $DAP_{CEMF^*}$ decreases with increasing $\epsilon$, because EMF tends to overestimate poison values when $\epsilon$ and the poison values' range are large. Therefore, CEMF* is less likely to suppress buckets without poison values, which degrades the utility. Nonetheless, $DAP_{CEMF^*}$ is still better than most other schemes. As a final note, since empirically EMF* probes poison values better than EMF, $DAP_{EMF*}$ is expected to outperform $DAP_{EMF}$. However, it contradicts with the results in Fig.~\ref{Mean Estimation} (d) (e) (h). We attribute this phenomenon to false positives, which are misjudge values that should have contributed to mean estimation as normal users. Nonetheless, the performance gap between  $DAP_{EMF*}$ and $DAP_{EMF}$ is not significant.

\begin{figure*}[]

 \hspace{-0.75in}
  {
  \begin{minipage}{3cm}
   \centering
   \includegraphics[scale=0.4]{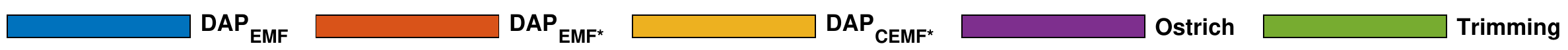}
  \end{minipage}
   \vspace{-0.05in}
 }
 \\
 \vspace{-0.02in}
\centering
\subfigure[\textbf{Taxi}, $Poi_{[O,C/2]}$.]{
\begin{minipage}[t]{0.215\linewidth}
\centering
\includegraphics[width=1\textwidth]{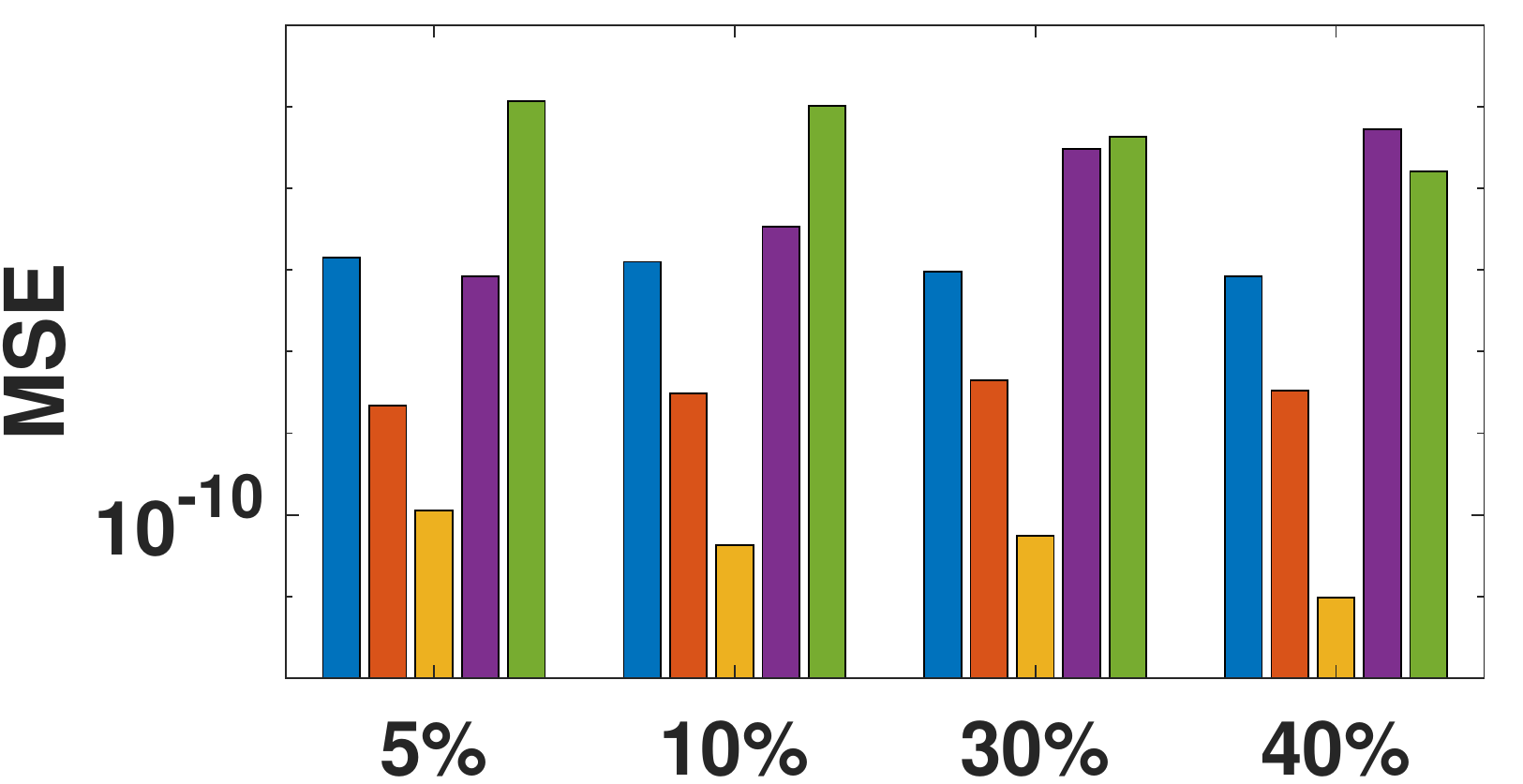}
\end{minipage}
}%
\subfigure[\textbf{Taxi}, $Poi_{[C/2,C]}$.]{
\begin{minipage}[t]{0.215\linewidth}
\centering
\includegraphics[width=1\textwidth]{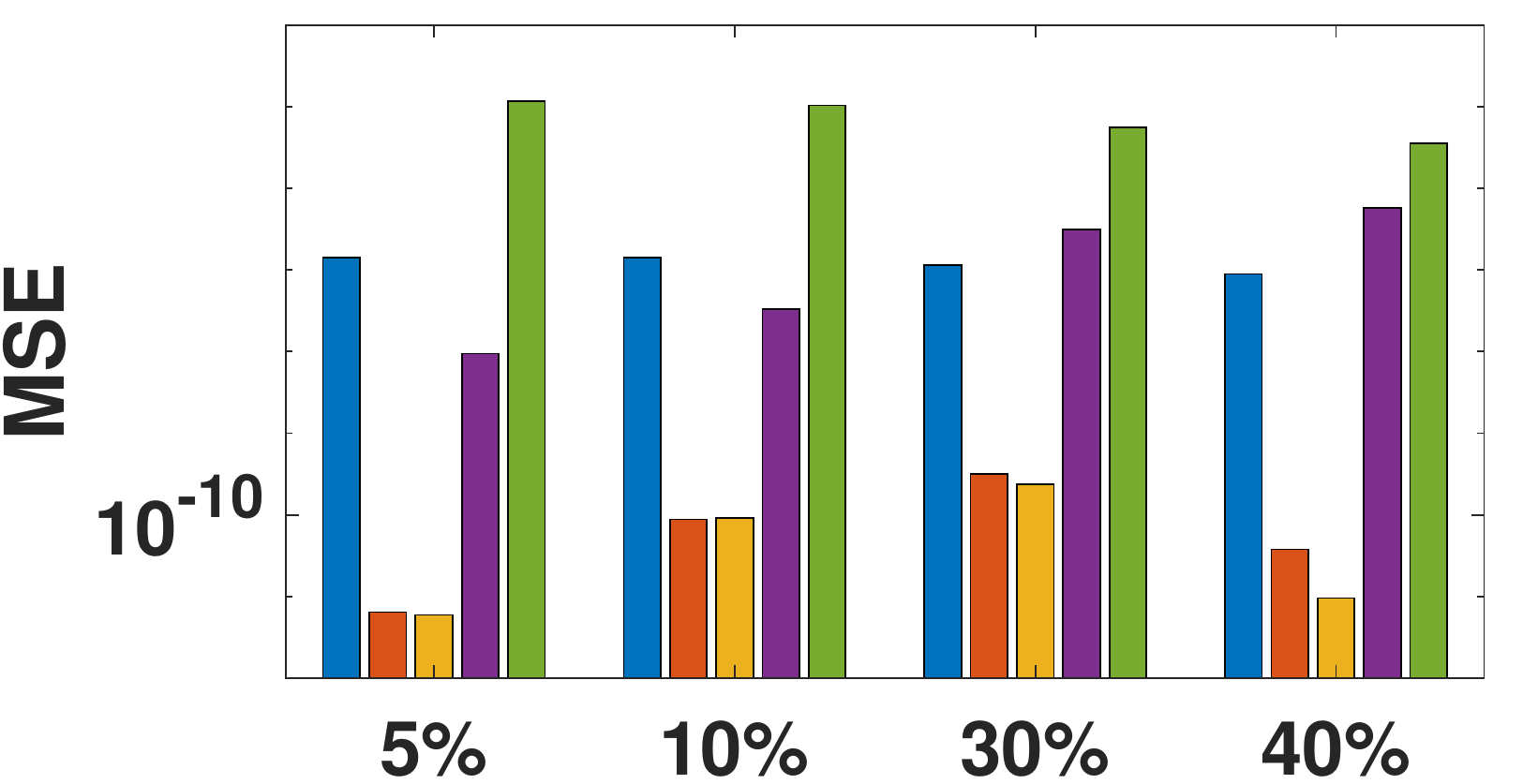}
\end{minipage}
}%
\subfigure[\textbf{Taxi}, $Poi_{[O,C/2]}$.]{
\begin{minipage}[t]{0.215\linewidth}
\centering
\includegraphics[width=1\textwidth]{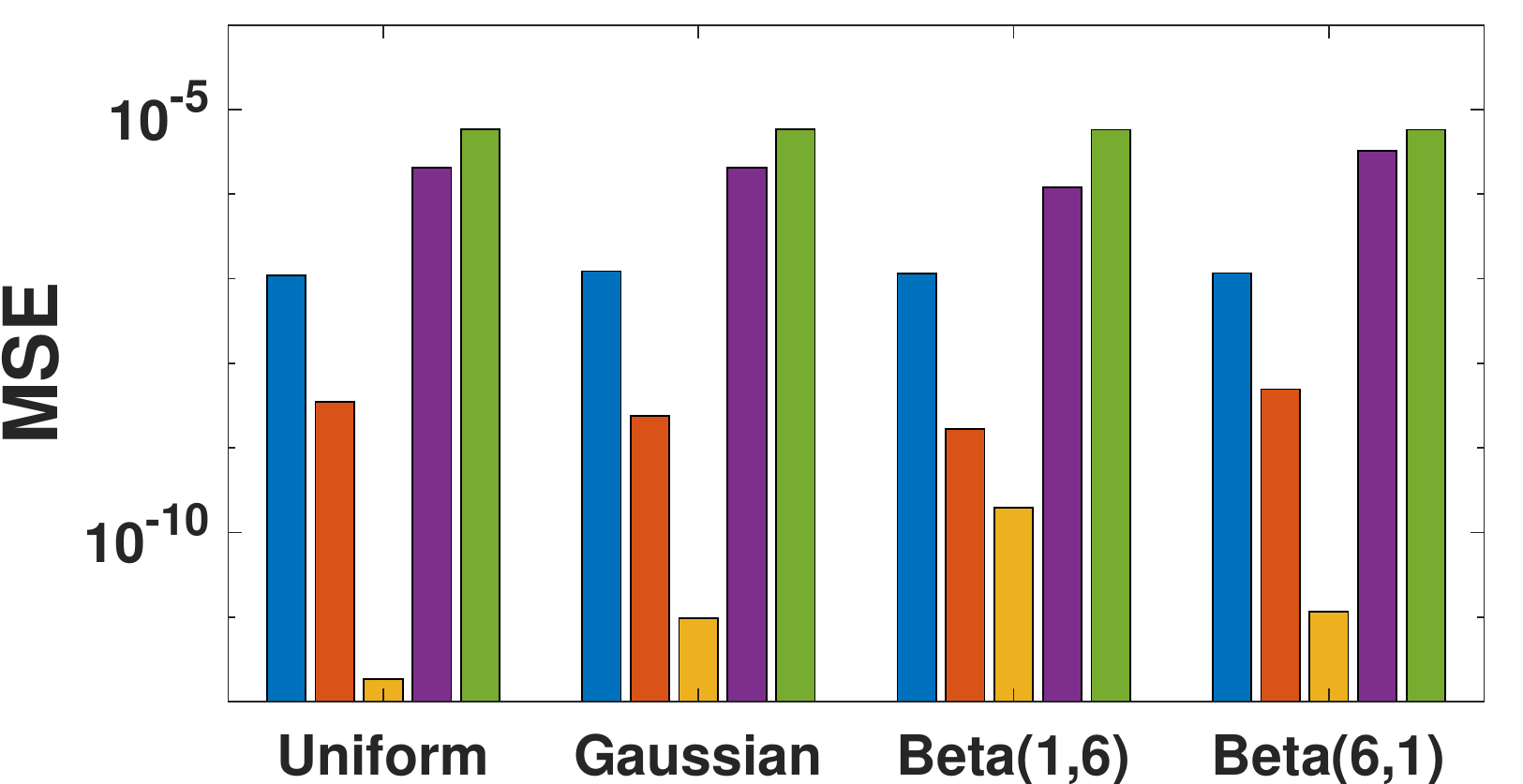}
\end{minipage}
}%
\subfigure[\textbf{Taxi}, $Poi_{[C/2,C]}$.]{
\begin{minipage}[t]{0.215\linewidth}
\centering
\includegraphics[width=1\textwidth]{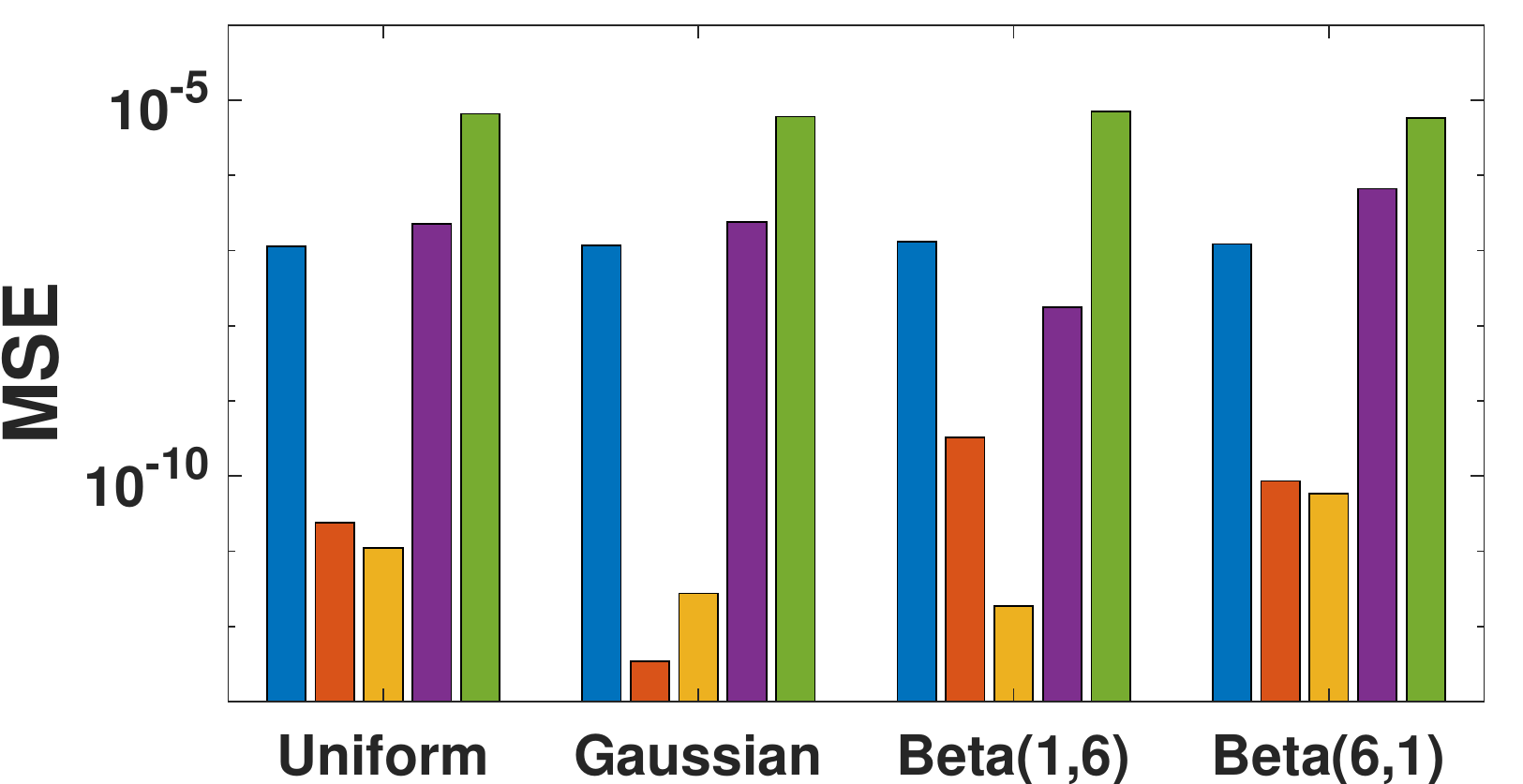}
\end{minipage}
}%
\centering
 \vspace{-0.1in}
\caption{The results of MSE w.r.t. $\gamma$ for (a) (b) and w.r.t. the distribution of Byzantine users for (c) (d)}
\label{Robustness}
\end{figure*}
\textbf{Robustness of Poison Values.} Fig. \ref{Robustness} shows the robustness of MSE according to different proportions of Byzantine users and distributions of poison values in $\textbf{Taxi}$ at $\epsilon=1$. Fig. \ref{Robustness} (a) (b) show that, even with increasing Byzantine users ($\gamma$ reaches 40\%), our proposed schemes can still achieve very low MSE. Fig. \ref{Robustness} (c) (d) show the MSE varies w.r.t. different distributions of poison values. We can find the performance varies according to what the distribution of poison values is, and the proposed schemes always outperform others. In Figs. \ref{Robustness} (a) (b) (d), Ostrcih behaves better than $DAP_{EMF}$ in some cases, when the proportion of Byzantine users is small and the poison values are close to $O$. In Fig. \ref {Robustness} (d), when the poison values follows Gaussian distribution, $DAP_{EMF^*}$ outperforms $DAP_{CEMF^*}$, because the latter may wrongly suppress some buckets and introduce more errors. To sum up, the proposed schemes consistently outperform others under various proportions of Byzantine users or the distributions of poison values.

\subsection{Generalizability Analysis}
\color{black}
\textbf{Extension to Distribution Estimation and SW Mechanism.} Fig. \ref{SW} (a) evaluates the distribution estimation accuracy by using the Wasserstein distance\cite{li2020estimating}. The results demonstrate that our schemes achieve at least 10\% higher utility than Ostrich, which ignores the presence of poison values, whereas our schemes can still refine the distribution by removing them. Figs. \ref{SW}(b)-(d) evaluate the DAP performance on SW. Fig. \ref{SW} (b) shows that the estimated proportion of Byzantine users becomes more accurate as $\epsilon$ decreases. In  Figs. \ref{SW} (c) (d), we compare the MSE of different schemes with poison values' range $[1+b/2,1+b]$, and observe that our proposed schemes outperform the others in most cases. Specifically, Trimming and Ostrich only perform well for small $\epsilon$ on \textbf{Beta(2,5)} and \textbf{Beta(5,2)}, respectively. This is because when $\epsilon$ is small, the perturbed values are close to a uniform distribution, so the estimated mean tends to be 0.5. But Trimming will estimate a smaller mean as it removes large values, while Ostrich may accidentally admit some large values, making the estimated mean closer to the ground truth.

\begin{figure*}[]
 \vspace{-0.2in}
 \hspace{-5.0in}
 \\
 \vspace{-0.09in}
\centering
\subfigure[\textcolor{black}{\textbf{Beta(2,5)}, O=0.3003.}]{
\begin{minipage}[t]{0.215\linewidth}
\centering
\includegraphics[width=1\textwidth]{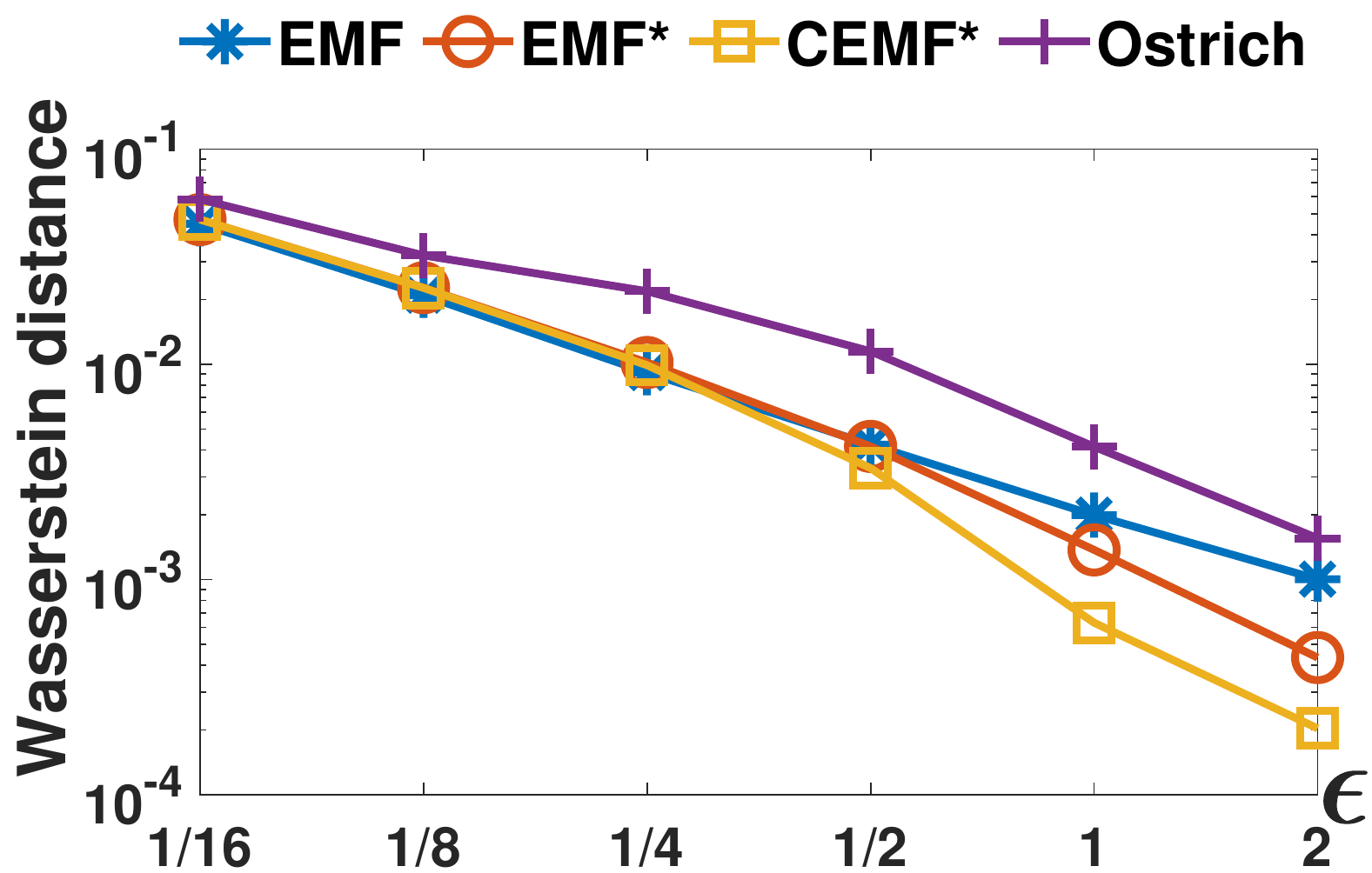}
\end{minipage}
}%
\subfigure[\textcolor{black}{\textbf{Beta(5,2)}, O=0.7068.}]{
\begin{minipage}[t]{0.215\linewidth}
\centering
\includegraphics[width=1\textwidth]{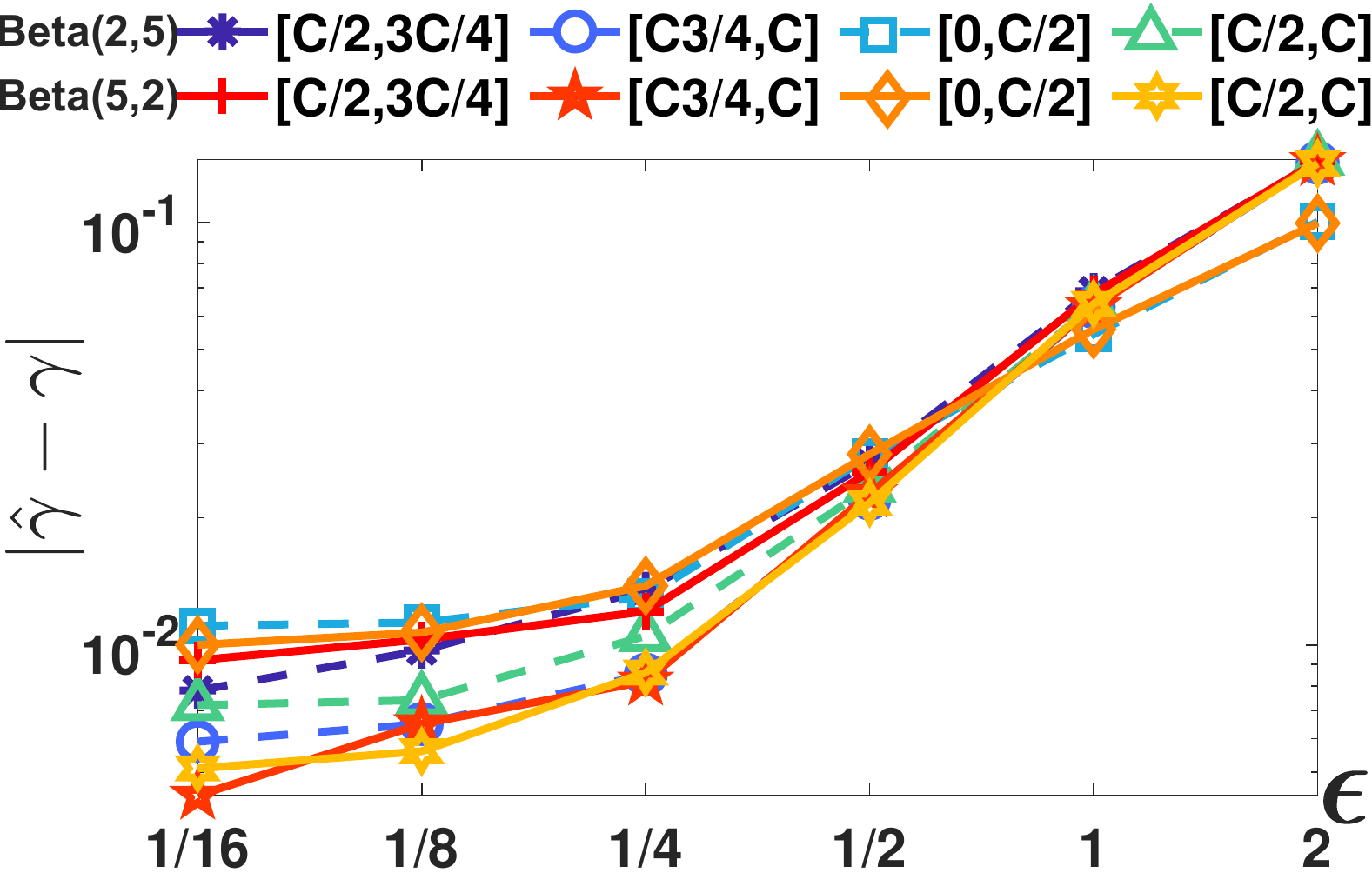}
\end{minipage}
}%
\subfigure[\textbf{Beta(2,5)}, O=0.3003.]{
\begin{minipage}[t]{0.215\linewidth}
\centering
\includegraphics[width=1\textwidth]{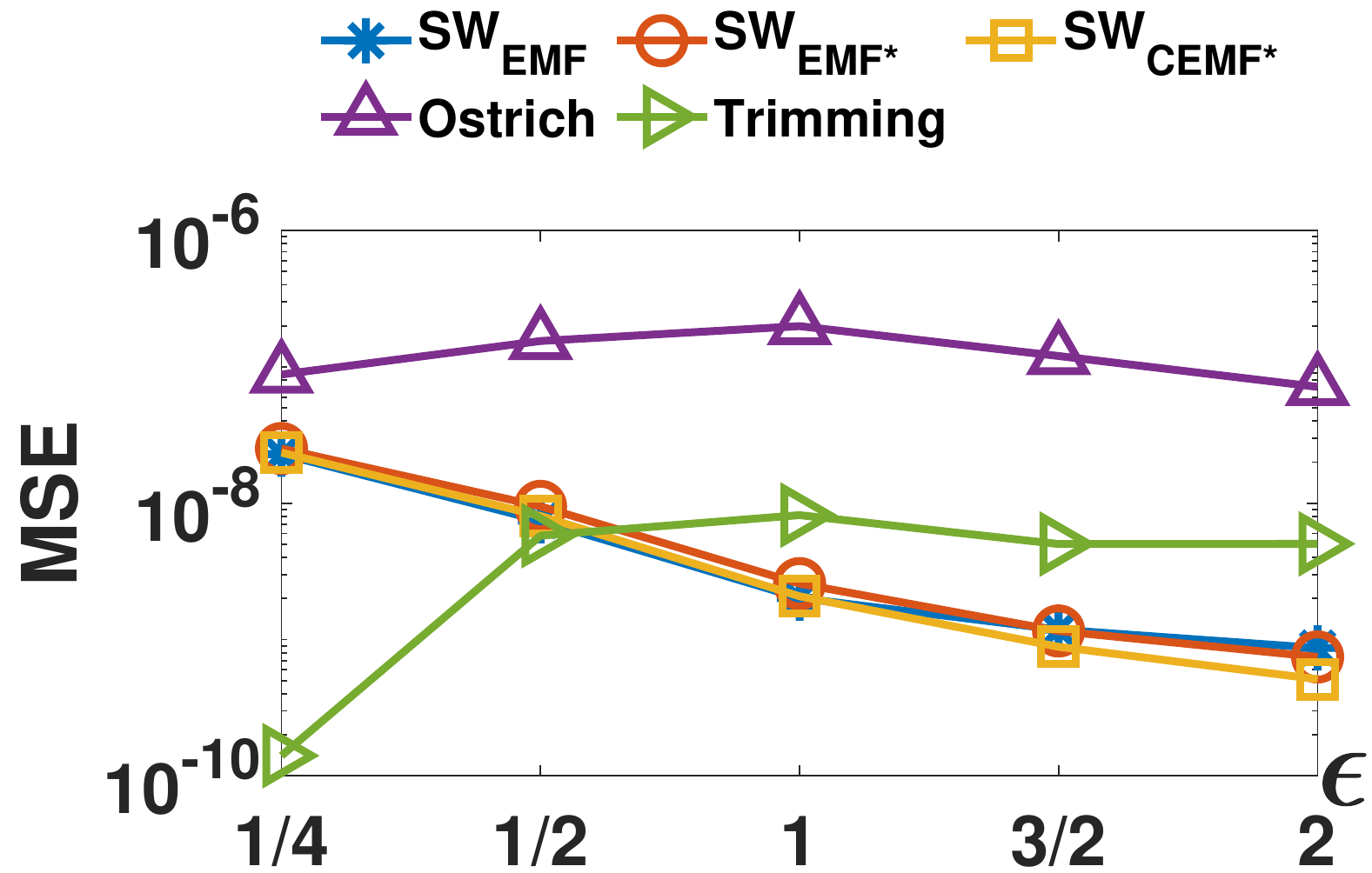}
\end{minipage}
}%
\subfigure[\textbf{Beta(5,2)}, O=0.7068.]{
\begin{minipage}[t]{0.215\linewidth}
\centering
\includegraphics[width=1\textwidth]{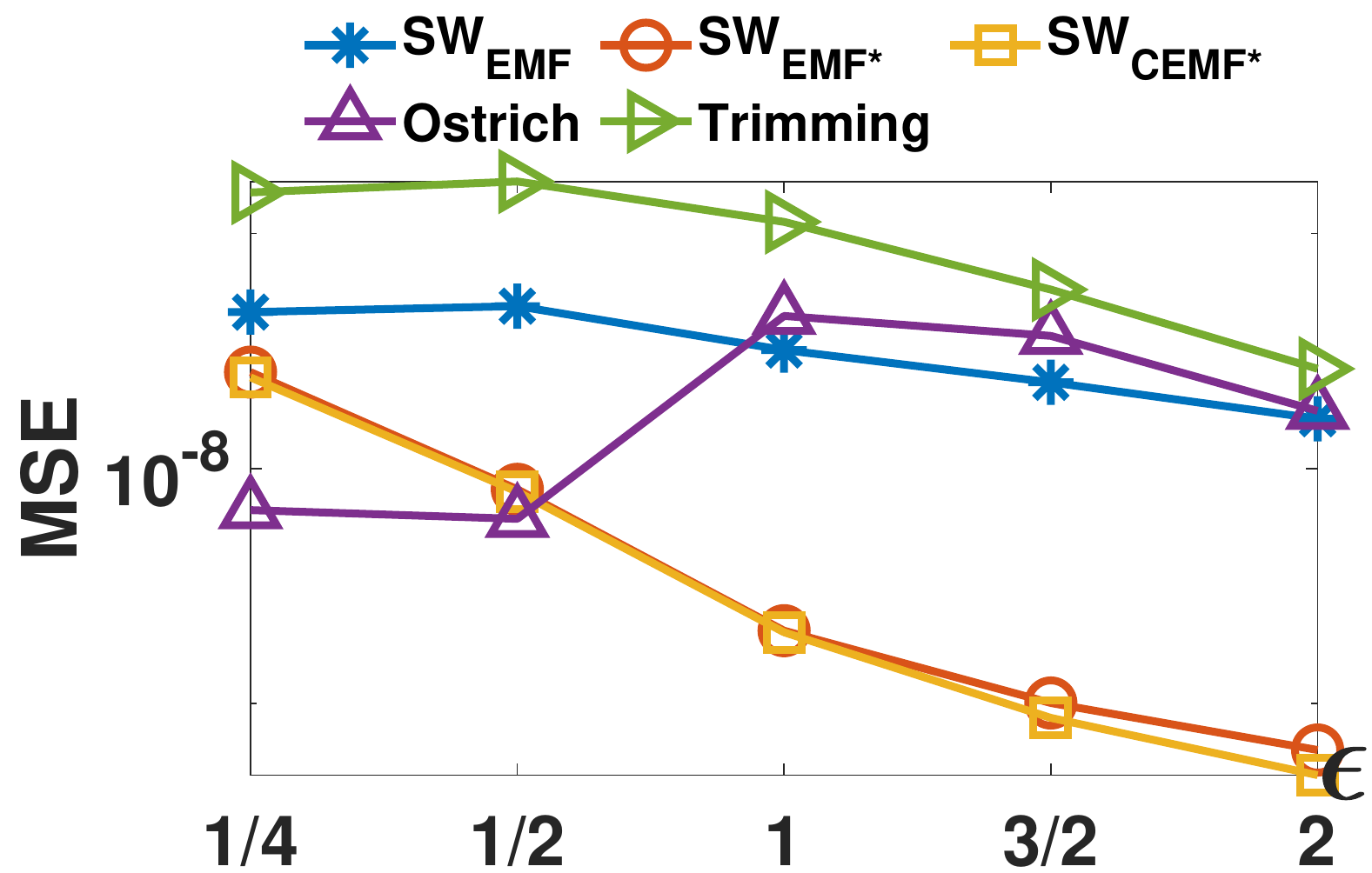}
\end{minipage}
}%
\centering
\caption{\textcolor{black}{The results of distribution estimation for (a), SW w.r.t. $\epsilon$, $|\hat{\gamma}-\gamma|$ for (b) and MSE for (c) (d)}}
\label{SW}
\end{figure*}

\color{black}
\textbf{Comparison with k-means-based Defense.}
We compare our solution with the k-means-based defense~\cite{li2022fine} against both BBA and input manipulation attacks on \textbf{Taxi}. K-means-based defense samples multiple subsets of users and then clusters them into two clusters. The one with more subsets is treated as normal users and used for estimation, while the other is discarded. Let $\beta$ be the sampling rate, and $\beta N$ users are sampled randomly in each subset, with one million subsets in each attack.

In Fig. \ref{discussion} (a), the poison values are distributed on $[C/2, C]$ uniformly, and our scheme outperforms the k-means-based defense under BBA attack. The MSE of the k-means family is between $10^{-7}\sim 10^{-5}$, whereas the MSE of $DAP_{EMF}$ is about $10^{-7}$ and that of $DAP_{EMF^*}$ and $DAP_{CEMF^*}$ is about $10^{-10}$. Fig. \ref{discussion} (b) evaluates the input manipulation attack~\cite{cheu2021manipulation} based on PM, i.e., Byzantine users generating an input poison value $g$ and then strictly following PM to make it less detectable. To integrate EMF and EMF* with existing k-means-based defense, we first use EMF to determine whether $\hat{\gamma}$ is relatively small (i.e., evading) as shown in Fig. \ref{fig of EMF} (d), and then we use EMF* to estimate the distribution of inputs by setting $\hat{\gamma}=0$ in Equ. \ref{Mean Estimation}, and finally we evaluate the mean using k-means. Fig. 9 (b) demonstrates that by such integration (denoted by EMF-based), we can further improve the estimation accuracy under input manipulation attacks. When $g=-1$, the MSE of EMF-based is between $1.40\sim 1.44\times 10^{-7}$ while that of k-means alone is between $1.77\sim 1.88\times 10^{-7}$, resulted in 28\% improvement. Likewise, we have about 30\% improvement for $g=1$ and $g=0$ cases.

\textbf{Frequency Estimation on Categorical Data.} We evaluate our schemes under frequency estimation on categorical data using dataset \textbf{COVID-19}, which records the number of coronavirus disease 2019 deaths for females in California as of December 14, 2022, by age~\cite{covid192022}. All death records are divided into $15$ age groups, and every record is perturbed locally by k-RR~\cite{kairouz2014extremal, wang2016private}. In Fig. \ref{discussion} (c), Byzantine users are injected into the 10th group only, and we observe that the MSE of Ostrich is about 0.1 and keeps steady regardless of $\epsilon$, while that of our schemes is lower than 0.01 and decreases significantly with respect to $\epsilon$. When Byzantine users uniformly inject poison values into the 10th, 11th, and 12th groups, as shown in Fig. \ref{discussion} (d), the MSE of Ostrich still significantly underperforms our schemes. This experiment shows that our DAP schemes can also work well in other statistics and data types than mean estimation on numerical data.
\begin{figure*}[]
 \vspace{-0.25in}
\centering
\subfigure[\textcolor{black}{\textbf{Taxi}, $Poi_{[C/2,C]}$, $\gamma=0.25$.}]{
\begin{minipage}[t]{0.22\linewidth}
\centering
\includegraphics[width=1\textwidth]{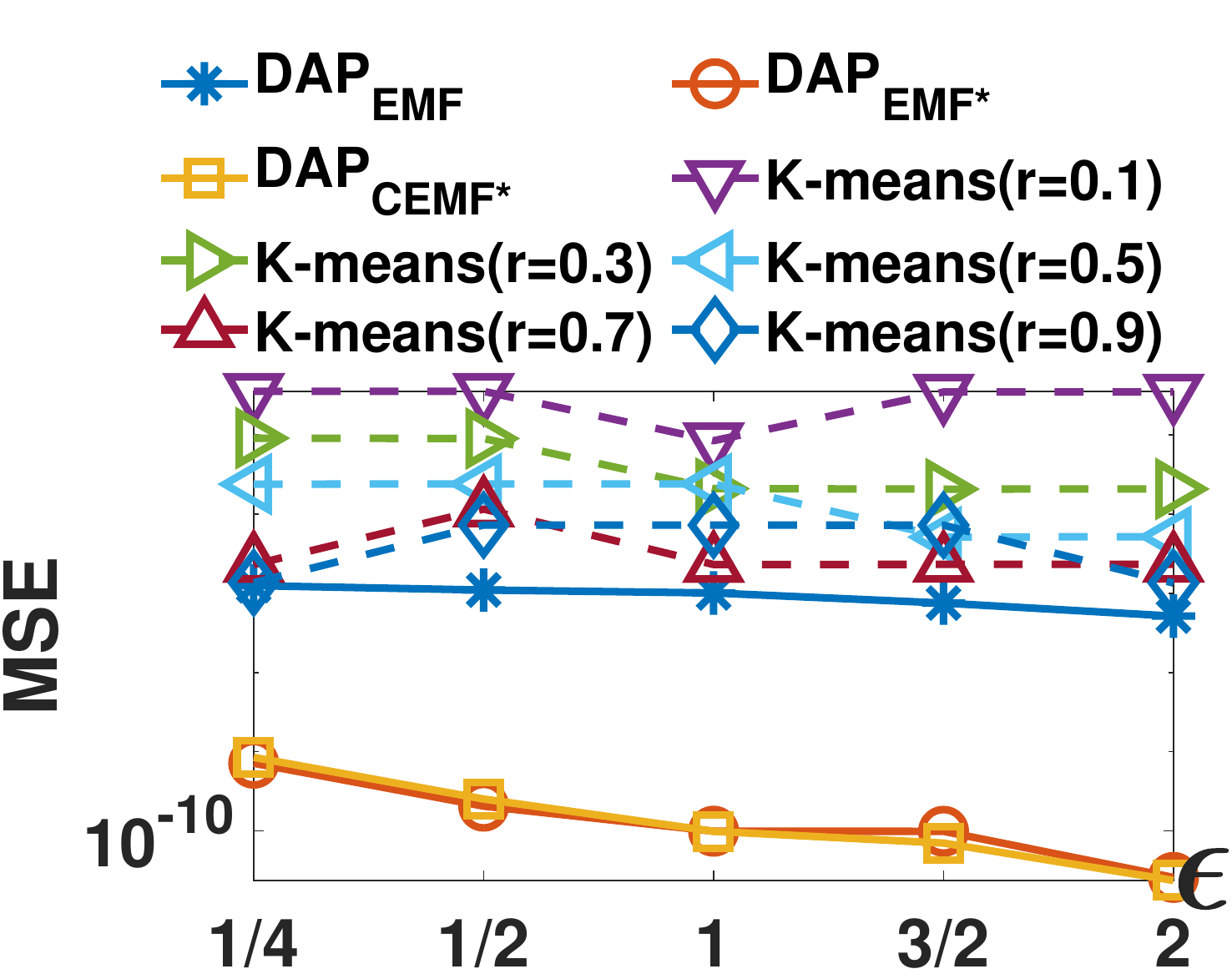}
\end{minipage}
}%
\subfigure[\textcolor{black}{\textbf{Taxi}, IMA, $\gamma=0.25$.}]{
\begin{minipage}[t]{0.218\linewidth}
\centering
\includegraphics[width=1\textwidth]{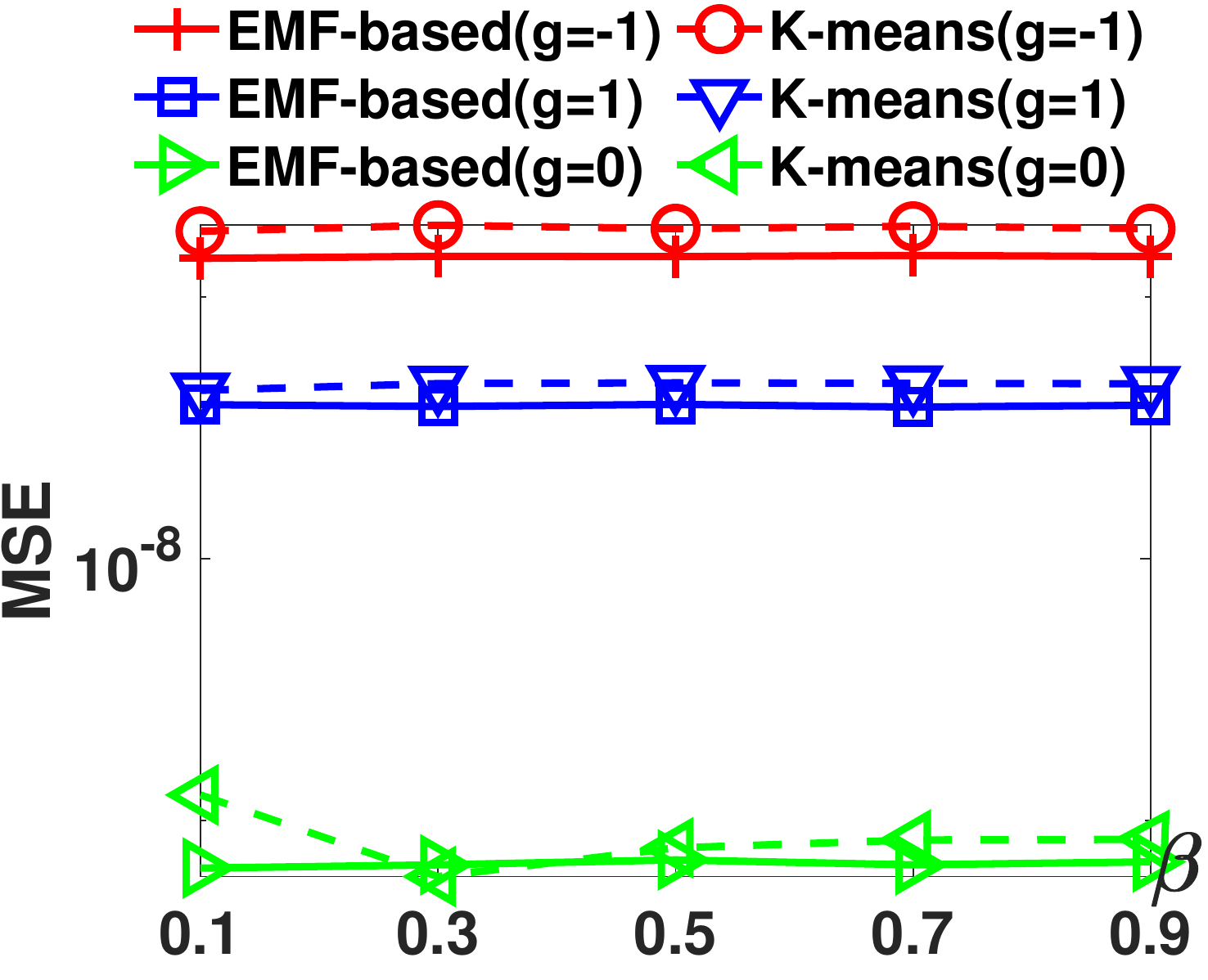}
\end{minipage}
}%
\subfigure[\textcolor{black}{\textbf{COVID-19}, $Poi_{10}$.}]{
\begin{minipage}[t]{0.215\linewidth}
\centering
\includegraphics[width=1\textwidth]{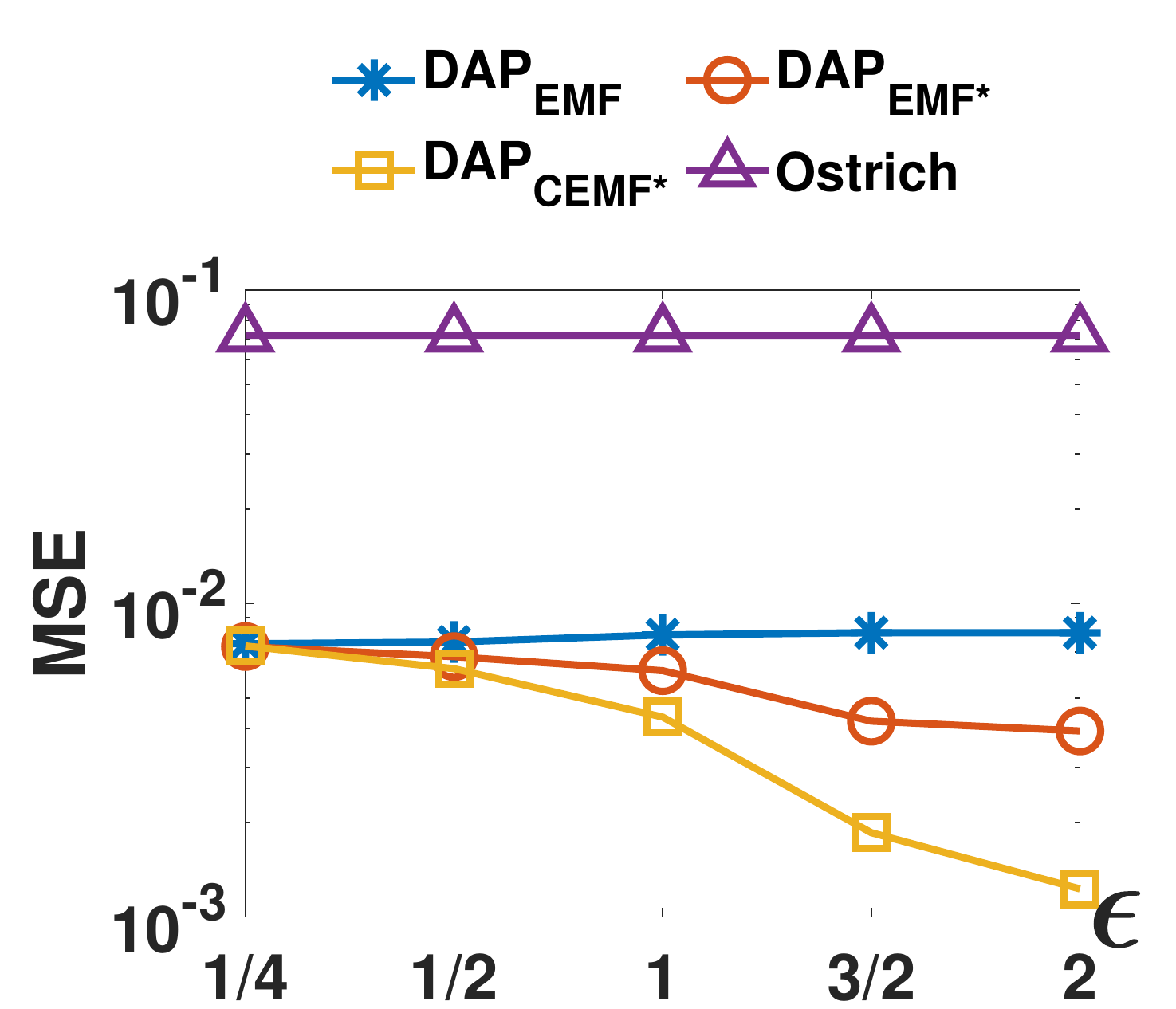}
\end{minipage}
}%
\subfigure[\textcolor{black}{\textbf{COVID-19}, $Poi_{10,11,12}$.}]{
\begin{minipage}[t]{0.215\linewidth}
\centering
\includegraphics[width=1\textwidth]{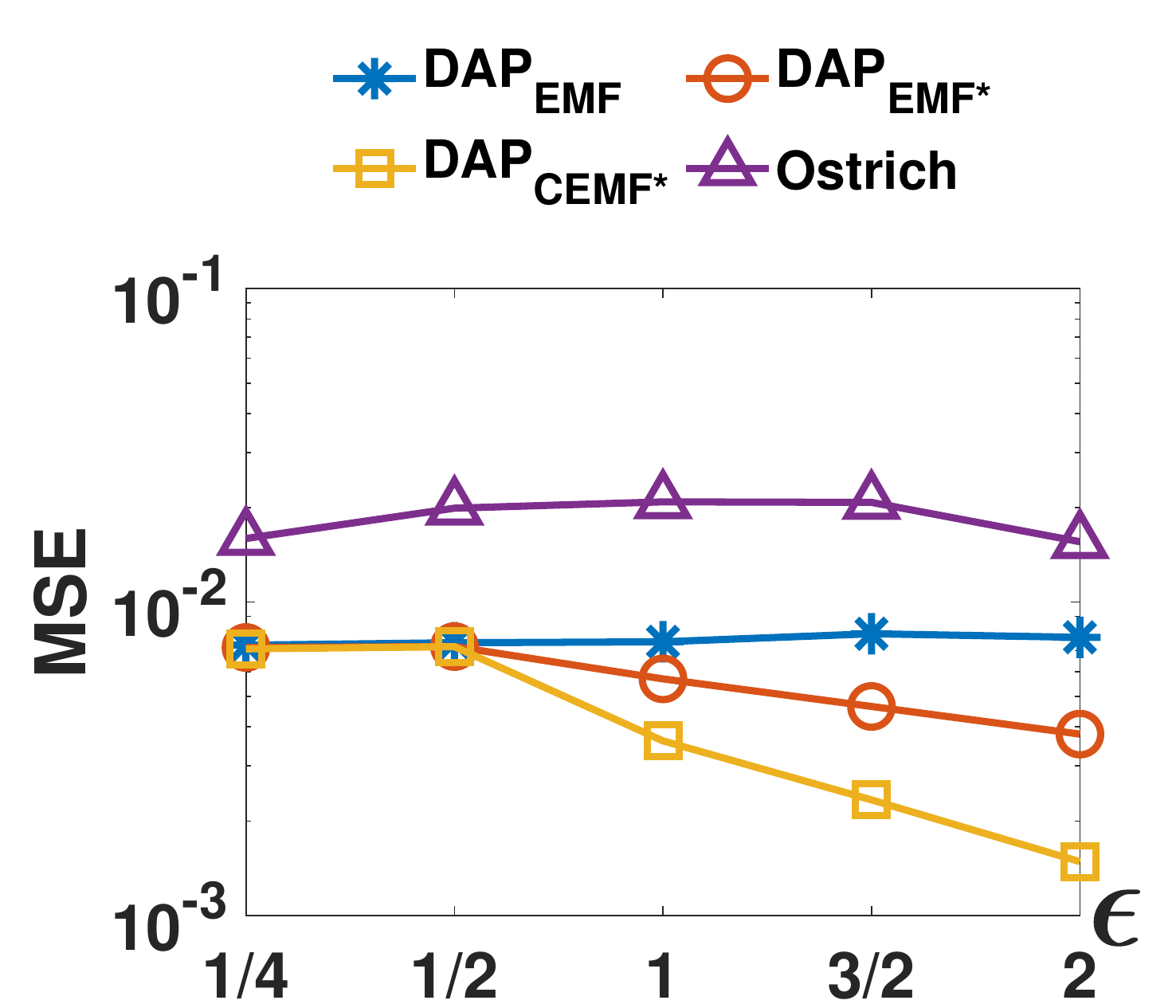}
\end{minipage}
}%
\centering
 \vspace{-0.15in}
\caption{\textcolor{black}{Comparison with k-means-based defense for (a) (b) and extension to frequency estimation for (c) (d)}}
\label{discussion}
 \vspace{-0.15in}
\end{figure*}
\color{black}

\begin{figure*}[]

 \hspace{0.25in}
  {
  \begin{minipage}{3cm}
   \centering
   \includegraphics[scale=0.4]{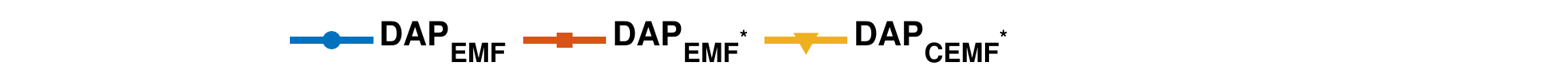}
  \end{minipage}
   \vspace{-0.05in}
 }
 \\
 \vspace{-0.02in}
\centering
\subfigure[\textbf{Beta(2,5)}, $O=0.4136$.]{
\begin{minipage}[t]{0.215\linewidth}
\centering
\includegraphics[width=1\textwidth]{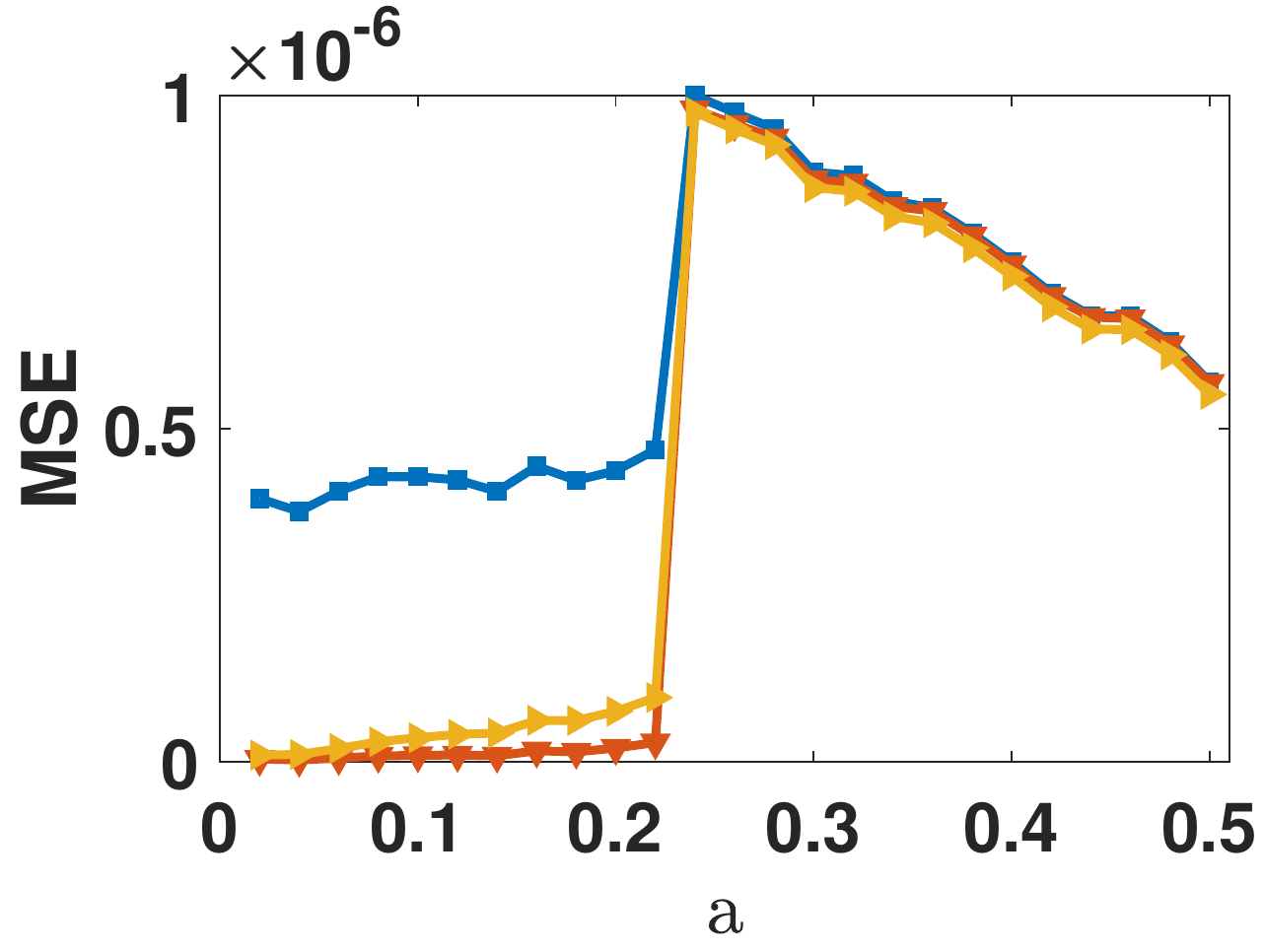}
\end{minipage}
}%
\subfigure[\textbf{Beta(5,2)}, $O=-0.3994$.]{
\begin{minipage}[t]{0.215\linewidth}
\centering
\includegraphics[width=1\textwidth]{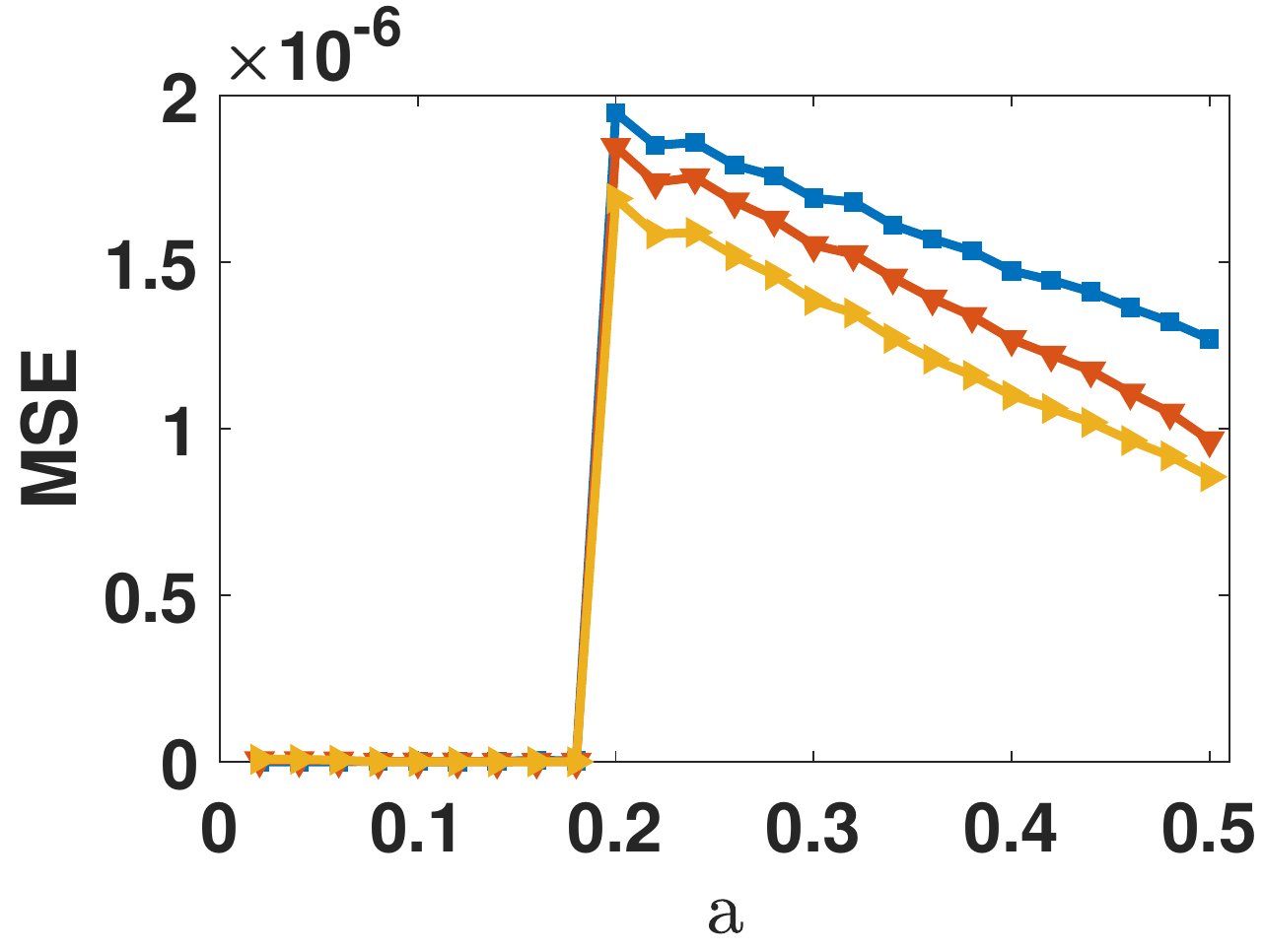}
\end{minipage}
}%
\subfigure[\textbf{Taxi}, $O=0.1190$.]{
\begin{minipage}[t]{0.215\linewidth}
\centering
\includegraphics[width=1\textwidth]{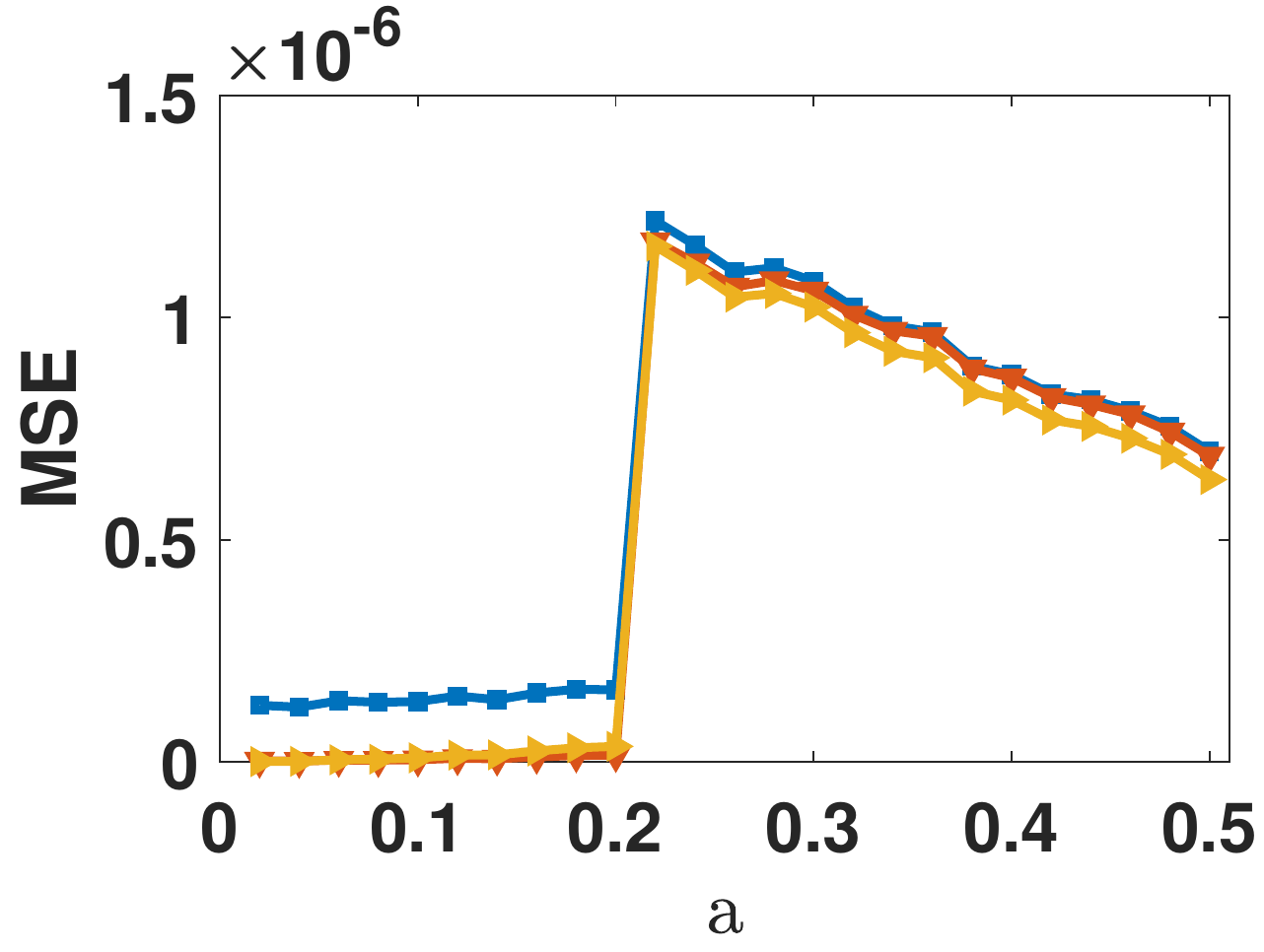}
\end{minipage}
}%
\subfigure[\textbf{Taxi}, $O=-0.6240$.]{
\begin{minipage}[t]{0.215\linewidth}
\centering
\includegraphics[width=1\textwidth]{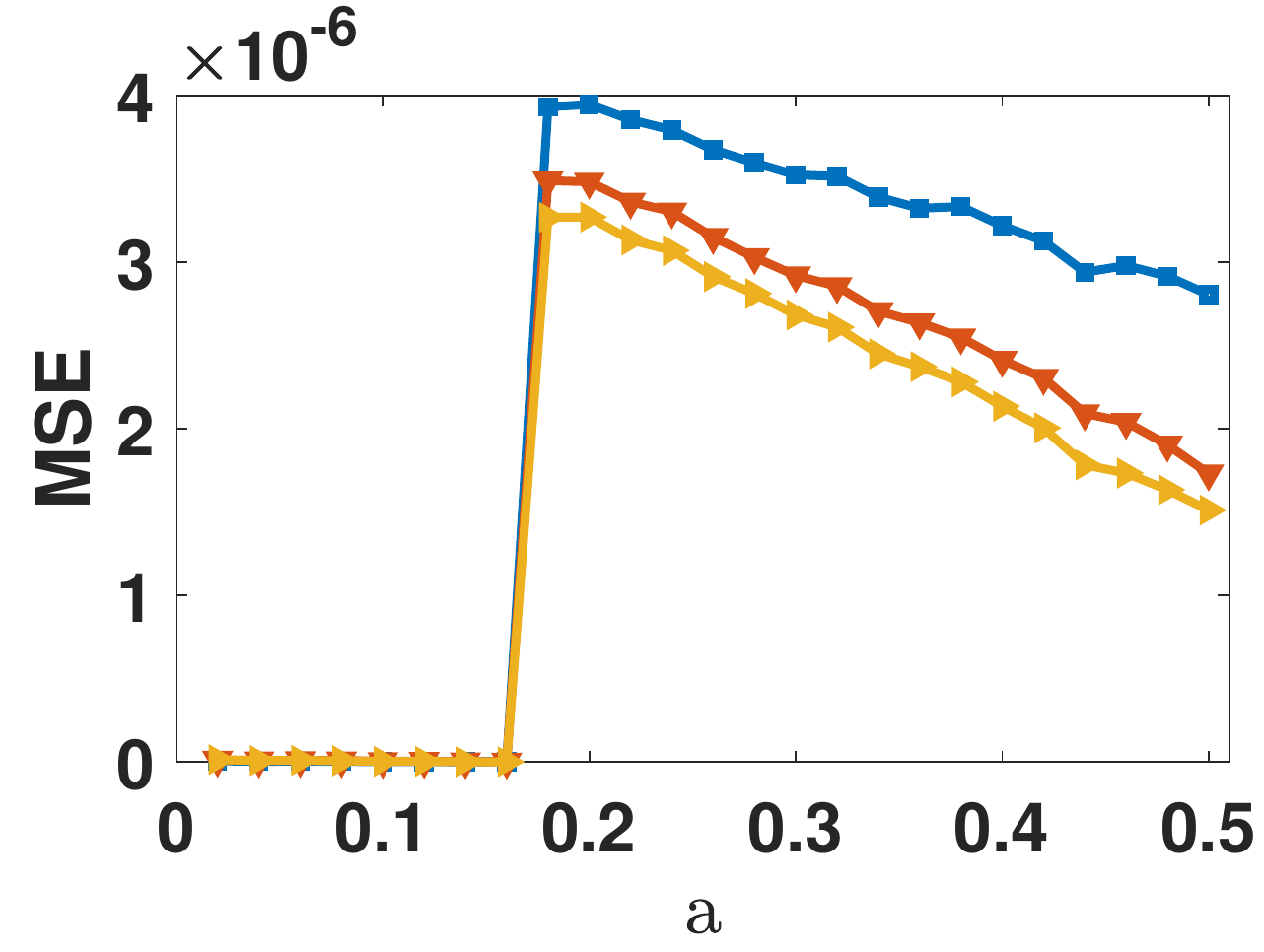}
\end{minipage}
}%
\centering
 \vspace{-0.1in}
\caption{The results of MSE w.r.t. $a$}
\label{changea}
\end{figure*}
\textbf{Robustness of Evasion.} We test the evasion effect on four datasets where $\epsilon=1/2$ and $\gamma=0.25$. A total of $m$ Byzantine users inject $am$ {\bf evasive poison values} at $-C/2$ (where $-C/2<O'$) and $(1-a)m$ {\bf true poison values} uniformly on $[C/2, C]$, where $a$ is the percentage of evasive poison values. In Fig. \ref{changea}, we plot the MSE of mean estimation, which essentially measures the utility of poison values. We observe that when $a$ is small, all DAPs still work well and ignore the evasive poison values. When $a$ is larger than a threshold (around 20\%--30\%), the MSE suddenly goes up high, because these evasive poison values finally become effective, and our schemes begin to misjudge the poisoned side. After that, the MSE starts to drop, which means such evasion will lose effectiveness as there are fewer true poison values.
\section{Related Work}
\label{Relatedwork}
Local differential privacy (LDP)~\cite{chen2016private, duchi2013local, kasiviswanathan2011can}, a variant of DP~\cite{dwork2008differential, dwork2006calibrating, mcsherry2007mechanism}, is proposed to protect the data privacy of users in an untrusted environment. RAPPOR~\cite{erlingsson2014rappor}, an extension deployed within Chrome by Google, was the first client-based practical privacy solution. After that, the LDP technique has been widely applied in the industry to protect their users' privacy, like iOS for Apple~\cite{team2017learning} Win 10 for Microsoft~\cite{ding2017collecting}, and Samsung~\cite{nguyen2016collecting}. Since the concept of LDP was proposed, it has been widely applied in multiple fields, including multi-attribute values estimation~\cite{ren2018textsf, du2021collecting} marginal release~\cite{cormode2018marginal, zhang2018calm}, time series data release~\cite{ye2021beyond}, graph data collection~\cite{sun2019analyzing, ye2020towards, 9306906}, key-value data collection~\cite{ye2019privkv, gu2020pckv,ye2021privkvm} and private learning~\cite{zheng2019bdpl, zheng2020protecting}. 

Mean and frequency estimations are commonly seen in LDP scenarios. Several mechanisms~\cite{acharya2019hadamard, bassily2017practical, bassily2015local, ding2017collecting, erlingsson2014rappor, wang2017locally, li2020estimating} are proposed for frequency estimation under LDP. Among them, the most relevant work to estimate numerical data by using EM algorithm in LDP is the SW mechanism ~\cite{li2020estimating}. However, the SW mechanism is not designed for combating Byzantine attacks, and is therefore unable to eliminate the impact of poison values. For mean estimation, Duchi et. al propose a 1-bit mechanism~\cite{duchi2018minimax} and Wang et. al~\cite{wang2019collecting} propose the Piecewise Mechanism are the state of the art methods for mean estimation.


The problems of the Byzantine attacks, that is, data poisoning attacks have recently been studied in many fields, such as crowdsourcing and crowdsensing scenarios~\cite{chang2016protecting, gadiraju2015understanding}, applications of Internet of Things~\cite{illiano2015detecting, rezvani2014secure}, electric power grids~\cite{liu2011false} and machine learning algorithms~\cite{ carlini2021poisoning, fang2020local, fang2020influence}. However, combating Byzantine attacks in LDP protocols is a relatively new topic that has few state-of-the-art papers. Literature~\cite{cheu2021manipulation} figures out that LDP is vulnerable to manipulation attacks. With a small privacy budget or a large input domain, a few poisoned values can completely ruin the real distribution. To combat this kind of attack, sampling is an easy but effective approach. Literature~\cite{cao2021data} formulates the data poisoning attack as an optimization problem and proposes three attacking patterns to maximize their attacking effectiveness, and design some countermeasures accordingly. Literature~\cite{wu2021poisoning} is the first attempt at poisoning attacks for key-value data in LDP protocols. They formulate an attack with two objectives, which are to simultaneously maximize the frequencies and mean values and to design two countermeasures against this attack. Literature~\cite{kato2021preventing} proposes a novel verifiable LDP protocol based on Multi-Party Computation (MPC) techniques. They propose a verifiable randomization mechanism in which the data collector can verify the completeness of executing an agreed randomization mechanism for every data provider. However, this method is only proposed for the categorical frequency oracles, such as kRR~\cite{kairouz2014extremal}, OUE~\cite{wang2017locally} and OLH~\cite{wang2017locally} instead of mean and distribution estimation on numerical values.
\section{Conclusion}
\label{conclusion}This paper address the general Byzantine attacks in LDP mechanisms, which advances state-of-the-art works by eliminating prior knowledge of either the attacking pattern or the poison value distribution. We present EMF, a novel algorithm to estimate key Byzantine users' features, including the proportion of Byzantine users and the poisoned side, which are then utilized to improve the mean estimation accuracy. To further enhance the utility and security of our scheme, we propose a group-wise protocol DAP and two post-processing schemes (EMF* and CEMF*) of EMF that collectively achieves the optimized mean value in terms of variance. Extensive empirical studies verify the correctness and effectiveness of the DAP protocol.

\bibliographystyle{plain}
\bibliography{references}
\clearpage

\end{document}